\documentclass[11pt,a4paper]{article}
\usepackage{booktabs} 
\usepackage{natbib} 
\usepackage[margin=1in]{geometry}
\usepackage[utf8]{inputenc}
\usepackage{amsmath,amsthm,amssymb}
\usepackage{thmtools, thm-restate}

\usepackage{graphicx}
\usepackage{subcaption}

\usepackage{xcolor}

\usepackage{hyperref}
\hypersetup{colorlinks=true,linkcolor=blue,citecolor=blue,urlcolor=blue}

\usepackage{soul}

\usepackage{color-edits}
\addauthor{eb}{red}
\addauthor{ce}{blue}
\addauthor{yf}{purple}

\usepackage[normalem]{ulem}

\usepackage{algorithm}
\usepackage{algpseudocode}
\algdef{SE}[DOWHILE]{Do}{doWhile}{\algorithmicdo}[1]{\algorithmicwhile\ #1}%

\usepackage{enumerate}


\newcommand{\abs}[1]{|{#1}|}

\newcommand{\mb}{\mathbf}

\theoremstyle{plain}
\newtheorem{theorem}{Theorem}

\newtheorem{lemma}[theorem]{Lemma}
\newtheorem{proposition}[theorem]{Proposition}

\theoremstyle{definition}
\newtheorem{definition}[theorem]{Definition}
\newtheorem{example}[theorem]{Example}

\theoremstyle{remark}

\renewcommand{\[}{\begin{equation}}
\renewcommand{\]}{\end{equation}}

\newcommand{\bea}{\begin{eqnarray}}
\newcommand{\eea}{\end{eqnarray}}

\newcommand{\R}{\mathbb{R}}

\newcommand{\br}{\mathbf{r}}

\newcommand{\eps}{\varepsilon}




\newcommand{\argmin}{\mathop{\mathrm{argmin}}}

\usepackage{amsmath,amsthm}

\newtheorem{observation}[theorem]{Observation}

\usepackage[margin=1in]{geometry}
\usepackage{xcolor}
\usepackage{hyperref} 
\hypersetup{
	colorlinks,
	linkcolor={blue!70!black},
	citecolor={green!40!black},
	urlcolor={blue!70!black}
}
\usepackage[normalem]{ulem}
\usepackage{lscape}
\usepackage{tikz}
\usepackage{subcaption} 
\usepackage{wrapfig}
\usepackage{multirow}
\usepackage{float}
\usepackage{bbm}
\graphicspath{{arxiv-figs/}}
\newcommand{\F}{\bar{F}}

\newcommand{\invb}{\frac{1}{\beta}}

\newcommand{\invba}{\frac{1}{\beta^\alpha}}
\newcommand{\ba}{\beta^\alpha}

\def\tilde{\widetilde}
\def\hat{\widehat}
\def\epsilon{\varepsilon}

\newcommand{\sbr}[1]{\left[#1\right]}

\newcommand{\mcL}{\mathcal{L}}
\newcommand{\dta}{\delta}

\newcommand{\set}[1]{\left\{#1\right\}}
\newcommand{\piecewise}[1]{\left\{\begin{array}{ll}#1\end{array}\right.}
\newcommand{\indc}{\mathbbm{1}}

\newcommand{\maxz}[1]{\left[#1\right]^+}

\DeclareMathOperator{\distributed}{\sim}
\DeclareMathOperator{\Pareto}{Pareto}

\DeclareMathOperator{\disp}{disp}

\newcommand{\theemptyset}{\emptyset}
\newcommand{\mcS}{\mathcal{S}}
\newcommand{\mcT}{\mathcal{T}}

\newcommand{\mbX}{\mathbb{B}}
\newcommand{\mbb}[1]{\mbX\left(#1\right)}

\makeatletter

\newenvironment{mproof}[1]{%

\smallskip 

\noindent \emph{Proof of #1}
}
{\smallskip}

\makeatother

\newcommand{\ecref}[1]{Appendix~\ref*{#1}}

\title{Reducing the Filtering Effect in Public School Admissions:\\ A Bias-aware Analysis for Targeted Interventions} 

\usepackage{authblk}

\author{Yuri Faenza \thanks{yf2414@columbia.edu}}
\author{Swati Gupta \thanks{swatig@mit.edu,}}
\author{Aapeli Vuorinen \thanks{aapeli.vuorinen@columbia.edu}}
\author{Xuan Zhang \thanks{xz2569@columbia.edu}}
\affil{}
\date{}

\begin{document}

\maketitle

\begin{abstract}

\noindent 
{\bf Problem definition:} Traditionally, New York City's top 8 public schools have selected candidates solely based on their scores in the Specialized High School Admissions Test (SHSAT). These scores are known to be impacted by socioeconomic status of students and test preparation received in middle schools, leading to a massive filtering effect in the education pipeline. The classical mechanisms for assigning students to schools do not naturally address problems like school segregation and class diversity, which have worsened over the years. The scientific community, including policymakers, have reacted by incorporating group-specific quotas and proportionality constraints, with mixed results. The problem of finding effective and fair methods for broadening access to top-notch education is still unsolved.

\noindent 
{\bf Methodology/results:} We take an operations approach to the problem different from most established literature, with the goal of increasing opportunities for students with high economic needs. Using data from the Department of Education (DOE) in New York City, we show that there is a shift in the distribution of scores obtained by students that the DOE classifies as ``disadvantaged'' (following criteria mostly based on economic factors). We model this shift as a ``bias" that results from an underestimation of the true potential of disadvantaged students. 
We analyze the impact this bias has on an assortative matching market. We show that centrally planned interventions can significantly reduce the impact of bias through scholarships or training, when they target the segment of disadvantaged students with average performance. 

\noindent 
{\bf  Managerial implications:}  To make these interventions incentive compatible and individually-fair, we propose a randomization-based policy for allocation of training resources to students, which is heavily targeted towards average performers. Our results challenge existing notions of scholarships in the current education system. We believe that these insights can guide policymakers in answering a critical question: how should one allocate limited funding across schools and students to maximally help disadvantaged students.
\end{abstract}

{\bf Keywords:} Bias, admissions, interventions, assortative matching, randomized policies.

\maketitle

\section{Introduction}

The disparity in opportunities plays a major role in access to education at different levels of the educational pipeline~\citep{quinn2017implicit}. It is known that outcomes of middle school admissions dictate high school admissions, which in turn impact pathways to higher education \citep{corcoran2018pathways, boschma2016concentration}. Selection starts much earlier however, with gifted and talented programs screening students as young as 4 years old. In particular, our work is motivated by high school admissions in New York City (NYC). NYC has an extensive public school system with current enrollment of over one million students. Every year roughly 80,000 students wish to join one of its 700 high school programs. The most sought after public schools are the so-called Specialized High Schools (SHSs) which, by law, select candidates solely based on their score on the Specialized High School Admissions Test (SHSAT) \citep{DOE}. Such scores are known to be impacted by socioeconomic status of students 
and test preparation received in middle schools~\citep{corcoran2018pathways,Eliza2}.  The results is a massive filtering effect in high school admissions: 50\% (resp. 80\%) of students admitted to the SHSs come from only 5\% (resp. 15\%) of middle schools \citep{corcoran2018pathways}.

The goal of this work is to investigate data-driven interventions at the middle school level to reduce this filtering effect. An extensive literature has focused on doing so by proposing changes to admissions policies themselves (see Section~\ref{sec:literature-review}). That approach however, has multiple downsides. For one, simply ``fixing'' the admissions process to boost under-represented students does not fundamentally prepare those students to perform well once admitted. Another downside of such admissions policies is that they are seen as unfair by many, and there are significant political and legislative hurdles to implementing criteria that take the disparate backgrounds of students into account during the admissions process. In particular, the simple strategy of rescaling scores to remove distributional differences is not legally feasible. For example, in 2003, an attempt by the University of Michigan to add 12 points for ``diversity'' on a 150 point scale in an effort to promote admissions of underrepresented ethnic minorities was met with a lawsuit, which was ultimately decided {\it not} in favor of the university~\citep{gratz}. There is further resistance to removing the SHSAT scores from the admissions criteria: a 2019 plan supported by the then mayor of New York City to eliminate the SHSAT---criticized by some because of the unequal access to test preparation---failed to gain enough support, and was not approved by the New York State Senate~\citep{shapirowang}.

In this work, we take a completely different operations perspective. We focus on centralized pre-admission interventions, a fundamentally meritocratic approach that does not involve an unfair or legally dubious change in the admissions criteria.
We introduce a matching model of schools and students where some students (that we call \emph{disadvantaged}, following a classification used by the Department of Education of New York City\footnote{We remark that our theoretical analysis only assumes that there are two groups of students: advantaged and disadvantaged, where the true potential of disadvantaged is not fully observed. For the purposes of this work, we consider the definition from the Department of Education of NY City on which students should be considered in the ``disadvantaged" group. However, alternately other definitions could be considered as well.}) are not evaluated at their true potential, but at a strictly lower level. We then investigate both theoretically and empirically the impact of such differences in treatment, and propose and evaluate interventions to counter them. These interventions are in the form of \emph{vouchers} targeted at certain disadvantaged students, affording them access to supplemental instruction: thereby providing them real support in order to perform closer to their innate ability. Our main contribution is two-fold: (a) characterizing the impact of the presence of a distributional shift when evaluating of students, both theoretically and empirically, and (b) construction of randomized policies for voucher allocation that are individually fair, incentive compatible and (by carefully targeting certain disadvantaged students) can substantially reduce the mistreatment they experience, as measured by various metrics. We next present the setup, intermediate results, and experiments leading to our main contributions.

\subsection{Contributions}\label{sec:contribs}

In order to present our mathematical model, we first introduce the characteristics and mechanism for SHSs admissions in NYC. SHSs admit students uniquely based on their scores on the highly competitive SHSAT. The NYC Department of Education (DOE) acknowledges that there is a disparity in students' abilities to prepare for the test, and so classifies some students as disadvantaged. This classification uses criteria such as their household income and the middle school they attended, which together constitute a proxy for socioeconomic status~\citep{doe-disadvantaged}.
Following the DOE's definition, we divide students who take the SHSAT into two groups: non-disadvantaged ($G_1$) and disadvantaged ($G_2$). We find that the distribution of SHSAT scores of the two groups in the NYC DOE\footnote{We thank the NYC DOE for providing us this data under a non-disclosure agreement.} data (Figure \ref{fig:dist-shsat-1}) exhibits a significant distributional shift, but matches closely (as measured by Wasserstein distance) if the scores of disadvantaged students are adjusted by a multiplicative factor of $\frac{1}{\beta} \approx \frac{1}{0.88} \approx 1.13$ (Figure \ref{fig:dist-shsat-2}), or an additive factor of $\gamma=49$ points (Figure~\ref{fig:dist-shsat-3}).

\begin{figure}[h]
    \centering
    \begin{subfigure}{.32\textwidth}
        \centering
        \includegraphics[width=\textwidth]{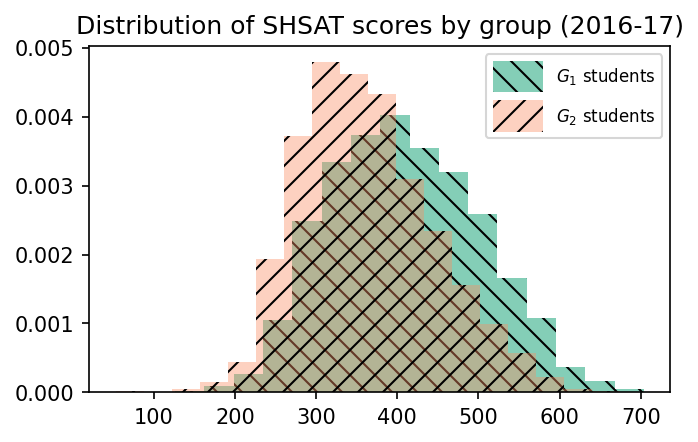}
        \caption{raw SHSAT scores} \label{fig:dist-shsat-1}
    \end{subfigure}
    \vspace{.02\textwidth}
    \begin{subfigure}{.32\textwidth}
        \centering
        \includegraphics[width=\textwidth]{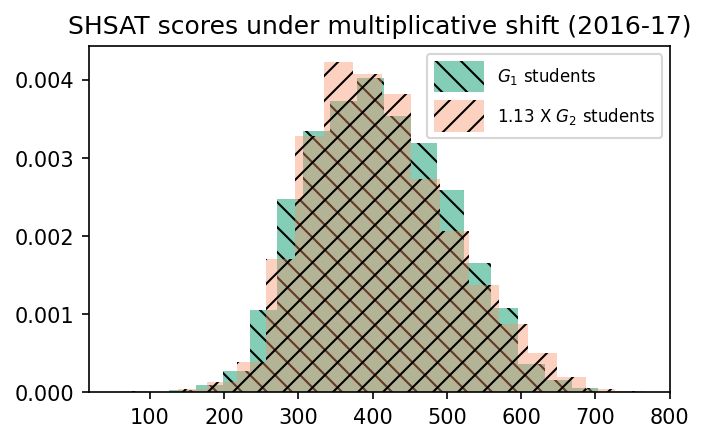}
        \caption{multiplicative shift} \label{fig:dist-shsat-2}
    \end{subfigure}
    \vspace{.02\textwidth}
    \begin{subfigure}{.32\textwidth}
        \centering
        \includegraphics[width=\textwidth]{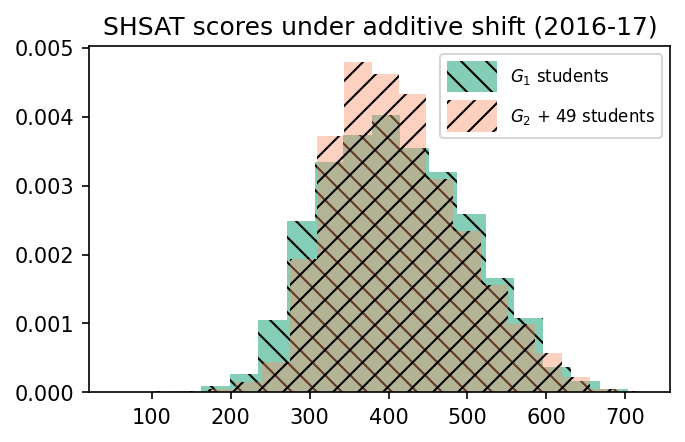}
        \caption{additive shift} \label{fig:dist-shsat-3}
    \end{subfigure}
    \caption[Distribution of SHSAT scores for students in groups $G_1$ and $G_2$ for the 2016--17 academic year.]{Distribution of SHSAT scores for students in groups $G_1$ and $G_2$ for the 2016--17 academic year. Scores between the groups align closely after multiplicative shift of $\beta\approx0.88$, or under an additive shift of $49$ points. 
    } \label{fig:dist-shsat}
\end{figure}

While the exact mechanism that causes distributional shifts in performance of student groups and its causal factors are unknown and debated in the literature\footnote{Various mechanisms for the existence of this disparity have been debated and studied, and it is difficult to establish causality. We appreciate that this is a contentious issue. In our work, we assume that there can be a distributional shift due a large number of factors, and we study the impact of these shifts on the stable matching in the school-student markets.}, the consensus stands that performance gaps between socioeconomic groups stem not from differences in innate ability, but from disadvantages that hinder students' potential~\citep{considine2002influence}. It is therefore natural to postulate that the distribution of innate ability ought to coincide when students are partitioned into groups based on their socioeconomic status.

Motivated by these observations and the literature on performance gaps between socioeconomics groups, we consider the following model. 
The \emph{true potential} (unobserved innate ability) $Z$ of a student is sampled from the Pareto distribution\footnote{The choice of the Pareto distribution to model potentials is inspired by a body of empirical work, see for example~\cite{clauset2009power} on the achievements of individuals in many professions. See Figure \ref{fig:combined-figures}(a) for the fit of the Pareto distribution to the distribution of the postulated true potentials.}, while the \emph{perceived potential} (observed performance) $\hat{Z}$ is equal to $Z$ for $G_1$ students and to $\beta Z$ for $G_2$ students, where $\beta\in (0,1)$ is some \emph{bias factor}. We focus our analysis on this multiplicative bias model due to its computational tractability and fit to the SHSAT score distributions of admitted students (see Figure \ref{fig:dist-shsat}). In~\ecref{sec:additive-bias}, we study the robustness of our qualitative findings under other models of bias such as an additive model where the perceived potential of disadvantaged students is $\beta+Z$.

\begin{figure}[h]
    \centering
    \begin{subfigure}[t]{0.48\textwidth}
        \centering
        \includegraphics[width=\textwidth]{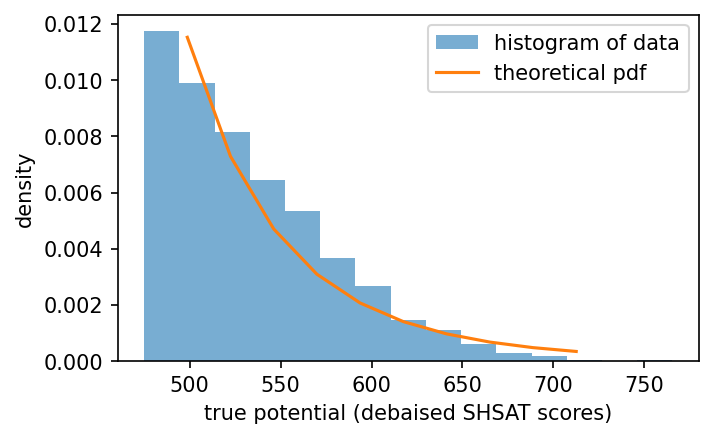}
        \caption[Distribution of estimated true potentials of students.]{\footnotesize Distribution of estimated true potentials of students who score high enough to receive an offer from a SHS under the multiplicative model. The best fitting Pareto distribution has parameter $\alpha=8.9$.}
        \label{fig:pareto-fitting-shsat}
    \end{subfigure}%
    \hfill
    \begin{subfigure}[t]{0.48\textwidth}
        \centering
        \includegraphics[width=\textwidth]{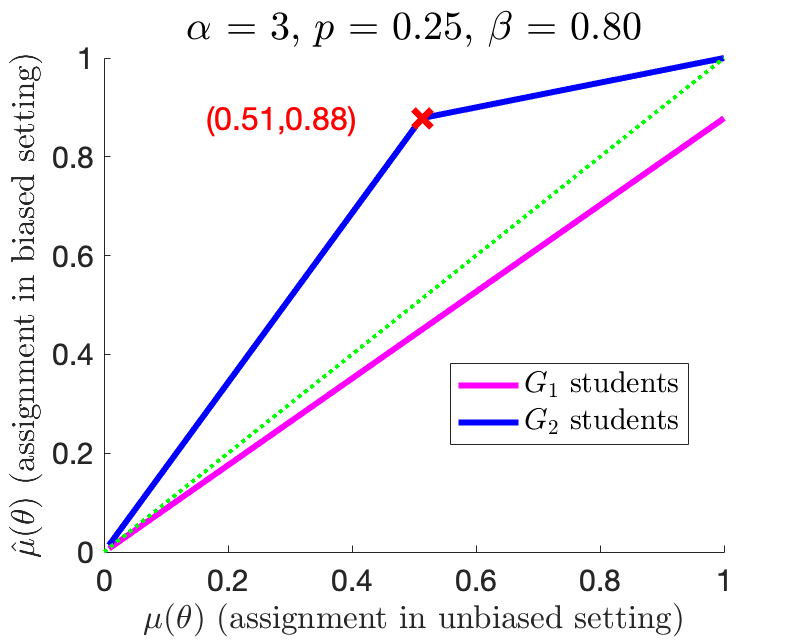}
        \caption[Schools students are matched to under $\hat\mu$ and $\mu$.]{\footnotesize Schools students are matched to under $\hat\mu$ and $\mu$. The green dotted line has slope one, representing the school a student would be placed in if \emph{there were no bias in the system}.}
        \label{fig:should-vs-actual-students}
    \end{subfigure}
    \caption{Comparison of distribution fitting (left) and matching outcomes under bias (right).}
    \label{fig:combined-figures}
\end{figure}

In our model, schools rank students based on their perceived potentials $\hat{Z}$. To be able to find tractable policies, we assume in our theoretical analysis that all students share the same ranking of schools (for example based on US News Rankings). This assumption abstracts out considerations that may be important for students such as proximity of a school~\citep{burgess2015parents}, limits on the length of preference lists~\citep{calsamiglia2010constrained}, or strong preferences of students for certain high schools\footnote{For instance in the 2016--17 SHSAT cohort, 56\% of students indicated Stuyvesant or Brooklyn Tech as their first preference, with 76\% naming at least one of the two in their top two preferences.}. We later show that our qualitative results are robust to relaxations of our stylized model (up to an extent) even when we use stable matchings with heterogeneous preferences in the NYC DOE data.

We then study the impact of interventions that we call \emph{vouchers}, in the form of additional resources (e.g., tutoring, test prep, or a scholarship) allocated to certain disadvantaged students. In our model, such a voucher enables a disadvantaged student to unlock their innate ability and perform at their true potential. We therefore refer to this process as \emph{debiasing}. This model of intervention by additional resources is motivated by real world examples where supplemental instruction such as tutoring has been shown to be effective in closing the performance gap, see for example~\cite{yue2018supplemental}.

However, such interventions are costly, and so they cannot be offered to everyone. This leads to a natural resource allocation question. We investigate the impact that bias has on both disadvantaged and non-disadvantaged students on two opposing measures of aggregate mistreatment, then quantify the effect that these interventions have on the mistreatment, and identify the best populations they should be targeted to. We discuss qualitative results in the real-world context of New York City Specialized High Schools by first estimating the parameters of our model from the data, and then evaluating the effect of various interventions on the actual stable matching outcomes. Our key findings are as follows:

\begin{enumerate}
\item {\bf Asymmetric impact:} We observe that under reasonable assumptions on the parameters (such as 
the number of disadvantaged students being smaller than the advantaged students), the impact of bias on $G_2$ (disadvantaged) students is much bigger than the slight advantage that $G_1$ students obtain (see Figure \ref{fig:combined-figures}(b) and Figure~\ref{fig:doe-disp-voucher}).

\item {\bf Deterministic Centralized Interventions:}
We next study the impact of deterministic voucher allocation. We define the \emph{mistreatment} of a student as the positive increase in rank of the school they get matched to under bias compared to the unbiased setting\footnote{If the rank of the school decreases, we set the mistreatment of the student to 0.}. We study mechanisms to allocate vouchers to reduce aggregate measures of mistreatment, which we interpret as measures of (group) fairness. We first show that under two such very different measures, maximum benefit is achieved by providing vouchers to average-performing (rather than top-performing) disadvantaged students, assuming that the abilities of students are Pareto-distributed. These findings challenge existing scholarship/aid allocation mechanisms, addressing one of the key questions facing policy makers on how to distribute resources, if they care about minimizing maximum or total mistreatment of the student population.

\item {\bf Incentive Compatible Voucher Distribution:} We next observe that the deterministic allocation of vouchers to average performers creates an incentive for some top students to underperform. We show that the only deterministic policy that is incentive compatible distributes vouchers to top students. However, this policy has a small impact on reducing aggregate mistreatment. We therefore discuss two classes of mechanisms for randomly allocating vouchers that are incentive compatible. In particular, one of them (that we term \emph{Proportional to Mistreatment}) guarantees that the maximum expected mistreatment is lower than under any deterministic policy. The other class of random policies we study is incentive compatible under general potential distributions. These policies have the additional benefit of being \emph{individually fair} (in the sense of a Lipschitz condition) so that the probability of receiving a voucher for students with similar potentials is similar. Both the randomized policies we propose heavily target average performers to help minimize mistreatment in expectation.

\item {\bf Alternate Models of Bias:} We next study alternate models to capture the differences in distribution of performance between non-disadvantaged and disadvantaged students, given the broader literature on discrimination \citep{boschma2016concentration, lang2012racial}. 
We quantify the impact of applying our model to contexts where bias arises due to a more sophisticated process, and discuss experimental results quantifying efficiency loss from such model misspecification. We then extend some of our results from the case of uniform multiplicative bias to the case of uniform additive bias. We show that our main takeaways continue to hold under an additive model or when the simple multiplicative model is applied under moderate levels of model misspecification.

\item {\bf Experimental validation:} We validate our theoretical results on admissions data to the SHSs for the academic year 2016--17. Although our model assumes that students have homogeneous preferences, we compute the stable matching using real SHSAT scores and reported heterogeneous preferences of students. We find that our key theoretical takeaways are still valid under relaxation of our model assumptions: for instance, the shape of students' mistreatment resembles the prediction of our model (including the fact that average performers are the most mistreated) and that our voucher distribution program improves the mistreatment across the board. We further show that the best ranges of students to give vouchers to obtained via our theory, are qualitatively similar to the best empirically found ranges under the real data with heterogeneous preferences. This leads to our policy insights, which we discuss next.

\item {\bf Policy Insights:} Motivated by the goal of maximizing the impact of limited resources, in this work we propose that additional training and resources should be offered to average performers rather than top performers. At a high level, the two assumptions that lead to this result are (1) the concentration of students who perform around the average, compared to a much smaller cohort who make up the top performers, and (2) that given enough opportunities and support, the performance of the two cohorts of students ought to be indistinguishable. 
The key phenomenon that arises due to these assumptions is that a small deviation in an average disadvantaged student's perceived performance leads to a significant change in the rank of the school they are matched to.

Note that (1) is a key characteristic of many common distributions, including the Pareto distribution investigated in this paper. (2) is supported by education policy research and discrimination literature which shows that additional resources can positively impact low achieving student groups. 
Our work complements this line of work through mathematical analysis in order to optimally target limited resources. 
We propose two ways of doing this: either through a deterministic allocation of resources, or through a randomized allocation that assigns a higher probability of receiving a voucher for students that experience a ``higher mistreatment'' in the current system. The latter allocation policies (called {\it PropM} policies) are incentive compatible, and recommended when the distribution is known to be close to  Pareto. However, in the case that truly nothing is known about the distribution of student potentials (so, in particular, condition (1) cannot be assumed), we show that the only policies guaranteed to be incentive compatible are those where the probability of receiving vouchers increases with performance. We remark that randomization  is not a unusual feature in public high school admissions, even within the context of NYC DOE. For instance, in certain high schools offers are made in order of random lottery numbers within any priority group (zoned students, siblings, etc.). 
\end{enumerate}

\smallskip

The rest of the paper is organized as follows. In Section \ref{sec:market}, we formally introduce our mathematical model for a continuous matching market with bias. In Section~\ref{sec:analysis} we analyze the effects of bias on both disadvantaged and non-disadvantaged students, introducing the key concepts of \emph{displacement} and \emph{mistreatment}. We then consider deterministic policies for reducing the mistreatment via a centralized approach in Section~\ref{sec:voucher}, quantifying its impact on students and discussing various notions of (group) fairness. In particular, we present two theorems that quantify the optimal deterministic debiasing sets under different measures of fairness. In Section~\ref{sec:incentive compatible}, we show that such deterministic policies fail to be \emph{incentive compatible} and \emph{individually fair}, introduce the randomized assignment of vouchers to satisfy these fairness conditions, and show that such randomized policies achieve a lower maximum mistreatment than their deterministic counterparts. In Section~\ref{sec:experiments} we apply our policies to the real-world dataset of SHSs admissions for the 2016--17 cohort. A discussion of our results appears in Section~\ref{sec:discussion}. Several auxiliary and additional results are presented in the Electronic Companion. In particular, in~\ecref{sec:additive-bias}  we discuss variations of models of bias, quantifying the impact of applying our stylized model to contexts with more intricate forms of bias, and extending our results on deterministic debiasing to an additive model. 

\subsection{Related Work}\label{sec:literature-review}

The most common way to model admissions to schools is with a two-sided market, consisting of schools and students respectively, where each agent has an ordered preference over agents acceptable to them on the other side of the market. This model has been used to match doctors to hospitals by the National Residency Matching Program since the 1960s, and it has since gained widespread notoriety when~\citet{abdulkadirouglu2005new} used it to reform the admissions process for New York City public high schools in 2003. Since then, admission decisions in NYC have been centralized and are (essentially) governed by the classical Gale-Shapley Deferred Acceptance algorithm~\citep{gale1962college}. The simplicity of the algorithm, as well as the drastic improvement in the quality of the matching it provides when compared to the pre-2003 method have led to academic and public acclaim, and spurred applications in many other systems. However, this mechanism does not naturally address problems like school segregation and class diversity, which have worsened and become more and more of a concern in recent years~\citep{kamada2024fair,kucsera2014new, Eliza}. The scientific community and policy-makers have reacted in various ways such as by modifying the mathematical model to incorporate group-specific quotas or proportionality constraints~\citep{biro2010college, nguyen2019stable, tomoeda2018finding}. However, there is evidence that adding such constraints may even hurt the very students they were meant to help~\citep{backes2012affirmative,fershtman2021soft,hafalir2013effective}, and the question of legal challenges abounds.

There is a long line of work on affirmative action policies in theory and in practice~\citep{abdulkadirouglu2005college,chade2014student,chan2003does,hafalir2013effective,quinn2017implicit}; and alternatives such as the ``top 10\%'' admissions criteria implemented in Texas \citep{TexasTop10_2024}. Substitute mechanisms such as the top 10\% criteria deviate significantly from current practice bar esoteric implementations, and it is unclear whether such criteria improve the status quo or worsen it. For instance, ~\cite{long2004race} found a significant negative impact on the admissions rates of disadvantaged groups if affirmative action policies for college admissions were replaced by top $x$\% rules. We take a completely different approach to improving the outcomes for disadvantaged students by voucher distribution, which will naturally help with the downstream impacts in the education pipeline towards economic opportunities~\citep{kannan2019downstream, coate1993will}.

Our work complements an increasing body of work on test-taking itself, in particular policies that make tests optional or allow applicants to take tests multiple times. Test-optional policies are typically studied in the context of college admissions (see e.g.~\cite{liu2021test,dessein2023test}). It is unclear if these can be adapted to the public high school admissions due to significant differences in admissions dynamics and resistance in past cases due to public scrutiny\footnote{Unlike college admissions, almost all public school admissions processes on the other hand are deeply scrutinized by the public, and choose simple mechanisms with high explainability, up to the extreme case of SHS admissions being dictated by law.}. 
A recent study in~\cite{niu2022best} shows the impact of being able to retake SAT exams and that reporting all the scores leads to more equitable outcomes as well as a more accurate signal for colleges\footnote{The DOE allows students to take the SHSAT both in their 8th grade and in their 9th grade with slightly different tests, but only about 6\% of test takers are in 9th grade~\citep{shsatpipeline}.}. 
Further, \cite{garg2021standardized} focus on the design of a fair admissions process by identifying conditions where standardized tests should be dropped, while our paper mostly focuses on pre-admissions policies. We finally remark that our proposed interventions are politically and practically palpable as they do not require changing the admissions criteria (changes to the admissions criteria can pose a significant hurdle to implementation~\citep{CityJournal2022}). In New York for instance, this would itself require changing state law (Hecht-Calandra Act).

Any admissions policy is susceptible to manipulation by applicants. Recent work by \cite{hu2019disparate} has considered strategic behavior of students in a classification setting, where each student can expend some bounded amount of resources to improve their test-score performance and convert a ``reject" decision to an ``accept" decision. The school can provide subsidies to students to reveal their true potential. They further identify cases where providing a subsidy can make the group receiving the subsidy worse-off. Though our work considers a completely different model, we also find theoretical conditions under which voucher distribution can in fact worsen some fairness metrics over the disadvantaged groups, and investigate strategic behavior of students, see Section~\ref{sec:voucher} and Section~\ref{sec:randomized-vouchers}.

Various selection problems have been investigated in models with a multiplicative bias introduced by \cite{kleinberg2018selection} (including~\cite{celis2020interventions, celis2021effect, salem2023secretary, emelianov2020fair}), but to the best of our knowledge, this paper is the first to investigate it in the role of school admissions. There exists some literature such as~\cite{hastings2009heterogeneous,laverde2020unequal} on understanding the impact of family backgrounds on student preferences, but this is orthogonal to the questions we study here. Our work complements existing work in the education and policy literature which shows additional resources can positively impact 
low achieving student groups~\citep{dee2011impact,greenwald1996effect}, in particular,~\cite{yue2018supplemental} studies the impact of supplemental instruction on disadvantaged students.

Lastly, our work is related to the modeling of bias and discrimination itself, which is an active area of research in economics~\citep{boschma2016concentration,lang2012racial,fryer2011importance}. The DOE is required by law to use only the SHSAT score to decide admissions to the SHSs, and so does not directly discriminate against any student based on group membership, since it does not take this information into account in the decision. Instead, the discrepancies in performance that arise in this setting may be caused by unequal opportunities available to students at earlier stages in the educational pipeline. However, our analysis only relies on modeling the distributional shift in observed student performance, and not the reasons due to which these differences exist. Our models bear resemblance to what is known as \emph{taste-based discrimination} in the literature: models where agents hold uniform distaste towards members of some group. In the modern world, many guardrails exist to disbar and discourage direct taste-based discrimination in consequential decisions such as in school admissions (e.g.~based on ``protected attributes''). Therefore, while taste-based discrimination models make some attempt to describe why bias or prejudice come to be, more modern work commonly takes bias to be primarily of the \emph{statistical discrimination} kind~\citep{phelps1972statistical}. While these models do not explain why initial bias has arisen, they posit that the processes that beget discrimination can be explained via economically rational behavior of individual agents~\citep{arrow1973theory}. In particular, such phenomena may arise even when agents do not directly imbue distaste towards some group, but due to distributional differences between groups. A common feature in such models is that the decision maker relies on some signal that is noisier and hence a less accurate predictor of the true characteristic of interest (such as innate ability) for discriminated populations~\citep{aigner1977statistical}.  We refer the reader to the survey by \cite{fang2011theories} for more details on this literature. As part of this work, we fit various models to the distributional differences in performance drawing from both taste-based discrimination and statistical discrimination models (with evaluations of the disadvantaged group considered to be the noisier signal), and gauge the impact of these differences computationally. Assuming that the innate ability of the two groups of students should be similar, but the perceived potential measured via the standardized tests is biased due to some bias process, we show that additional resources should be targeted towards average performers. The qualitative takeaways from our work are consistent under both models -- taste-based and statistical discrimination, up to a reasonable amount of noise (see~\ecref{sec:additive-bias} and~\ecref{appx:misspecified}).

\section{A continuous matching market}\label{sec:market}

We now introduce our stylized matching model for school choice. For tractability, we follow a recent trend in the literature and assume both schools and students to be continuous sets (see Appendix~\ref*{appx-sec:discrete-vs-cont} for a discussion on this choice). We denote the set of students by $\Theta$ and for each student $\theta\in \Theta$, endow them with a \emph{true potential} $Z(\theta)$ sampled from some probability distribution. We interpret this true potential as the innate ability of student $\theta$. For the rest of this section, we assume $Z(\theta)\distributed\Pareto(1,\alpha)$ (note that all students then have true potentials exceeding $1$). This assumption is relaxed in Section~\ref{sec:randomized-vouchers} where we consider randomized voucher programs under different assumptions on the potential distribution. We occasionally identify a student's true potential $Z(\theta)$ with $\theta$, the student itself.

We denote the mass of schools by the unit interval, $[0,1]$, and assume that all students rank schools in the same order with school $0$ the best, and school $1$ the worst. Schools on the other hand rank students uniquely based on their potential. This conveniently lets us identify the matching with the complementary cumulative distribution function (ccdf) of the students' potentials. That is, let $\mu$ denote the matching when schools rank students according to their true potentials. Then, student $\theta$ gets matched to school $\mu(\theta)=1-F(Z(\theta))$ where $F(t)$ is the cumulative distribution function (cdf) of the distribution of student potentials. For convenience, we denote the ccdf by $\F=1-F$, so $\mu(\theta)=\F(Z(\theta))$. In the Pareto case, we get $\mu(\theta)=Z(\theta)^{-\alpha}$.

We consider the student body to be composed of two groups: a proportion $1-p$ of \emph{non-disadvantaged} students $G_1$, and a proportion $p$ of \emph{disadvantaged} students $G_2$. We let $\hat Z(\theta)$ denote the \emph{perceived potential} of student $\theta\in\Theta$. While students in $G_1$ are perceived at their true potential, we assume that the perceived potential of disadvantaged students are biased by a constant multiplicative factor $\beta\in(0,1]$. That is, if $\theta\in G_2$ then $\hat Z(\theta)=\beta Z(\theta)$; otherwise, if $\theta\in G_1$ then $\hat Z(\theta)=Z(\theta)$. We explore alternate models of bias, including additive models where $\hat{Z}(\theta) =  Z + \beta + \epsilon$ (with $\epsilon \sim N(0, \alpha)$ or $\epsilon \sim U[-\alpha, \alpha]$) in~\ecref{sec:additive-bias} and  \ecref{appx:misspecified}.

Let $F_1$ and $F_2$ be the cdfs for the perceived potentials of students in $G_1$ and $G_2$, respectively. Then,
\begin{align*}
F_1(t) = 1- t^{-\alpha}; \qquad F_2(t) = 1- \ba t^{-\alpha}.
\end{align*}
Note that the domain of $F_1$ is $[1,\infty)$, whereas the domain of $F_2$ is $[\beta, \infty)$. We now consider the matching of students to schools under this biased regime (when schools use perceived potentials to rank students), which we denote by $\hat\mu$. For a student $\theta\in\Theta$, $\hat\mu(\theta)$ is equal to the mass of students whose perceived potentials exceed $\hat Z(\theta)$, and one can compute
\begin{equation} \label{eq:match-act}
    \hat\mu(\theta) = \begin{cases}
        (1-p)\F_1(\hat Z(\theta)) + p\F_2(\hat Z(\theta)) & \text{ if } \theta\in G_1, \\[1mm]
        (1-p)\F_1(\hat Z(\theta) \vee 1) + p\F_2(\hat Z(\theta)) & \text{ if } \theta\in { G_2},
    \end{cases}
\end{equation}
where $\vee$ is the maximum operator. 
When $\beta=1$ (no bias),~\eqref{eq:match-act} equals $\mu(\theta)$ for all $\theta\in\Theta$.

Formally, we define a \emph{matching} in this market to be a surjective measurable function $\gamma:\Theta\to[0,1]$, such that the mass of students mapped to a measurable set of schools $S\subseteq [0,1]$ coincides with the standard Lebesgue measure $\nu$ of $S$. That is, any surjective function $\gamma$ from $\Theta$ to $[0,1]$ is a matching if
\begin{align*}
\nu(\gamma^{-1}(S)):= (1-p) \int_{\theta \in \gamma^{-1}(S) \cap G_1} dF_1(\hat Z(\theta)) + p\int_{\theta \in \gamma^{-1}(S)\cap G_2} dF_2(\hat Z(\theta))
\end{align*}
is equal to the standard Lebesgue measure of $S$ for all measurable $S \subseteq [0,1]$. One can easily check that $\mu$ and $\hat \mu$ defined above are matchings.

\begin{example}\label{ex:model-example-maya-lisa}
Suppose $\alpha=3$ so student scores are sampled from $\text{Pareto}(1,3)$. Suppose Maya $\in G_2$ scores $Z($``\text{Maya}''$) = 1.4$, and Lisa $\in G_1$ scores $Z($``\text{Lisa}''$) = 1.3$. In the unbiased setting, Maya gets matched to schoool $\F(Z($``\text{Maya}''$)) = 1-(1-1/(1.4^3)) \approx 0.3644$ while Lisa gets matched to $\F(Z($``\text{Lisa}''$)) = 1-(1-1/(1.3^3)) \approx 0.4552$. On the other hand, in the biased case with $\beta=0.9$, we have $\hat Z($``\text{Maya}''$) = 1.26$, while $\hat Z($``\text{Lisa}''$) = 1.3$.  If $p=0.2$, we have that Maya and Lisa are matched to schools $\hat \mu (\text{``Maya''})= 0.4729$ and $\hat \mu (\text{``Lisa''})= 0.4305$, respectively to a significantly worse (slightly better) school than they used to in the setting without bias. Note that Lisa has a smaller true potential than Maya but is assigned to a better school in the biased setting.
\end{example}

\section{Impact of Bias on Students} \label{sec:analysis}

Our first goal is to understand how much bias affects agents in the market. In particular, we would like to quantify the loss of efficiency for students\footnote{In \ecref{ec-sec:schools}, we take the schools' perspective and show that there is effectively no loss of efficiency for schools under this model, creating little incentive for them to intervene at the individual school level.} when students are matched to schools under $\hat \mu$ instead of $\mu$. Formally, we define the displacement and mistreatment of a student as follows.

\begin{definition}[Displacement and Mistreatment]\label{def:dismis}
Let $\theta\in\Theta$, let $\mu$ be the matching in the absence of bias (using $Z(\theta)$), and let $\gamma$ be some other matching. We define
\begin{enumerate}
\item the \emph{displacement} of $\theta$ under $\gamma$ as $\disp_\gamma(\theta)=\gamma(\theta)-\mu(\theta)$; and
\item the \emph{mistreatment} of $\theta$ under $\gamma$ as $m_\gamma(\theta)=\max(0,\gamma(\theta)-\mu(\theta))$.
\end{enumerate}
\end{definition}
We often drop the subscript when the matching at hand is clear from context.

Note that if $\theta\in G_1$, the displacement under $\hat\mu$ is non-positive, and if $\theta\in G_2$, it is non-negative. The displacement for $\hat\mu$ can be calculated using the formulae for $\mu$ and $\hat\mu$ in~\eqref{eq:match-act}.
\begin{proposition} \label{prop:mistreat-students}
    For any student $\theta \in G_2$,
    the displacement under $\hat\mu$ is given by:
    $$\disp_{\hat\mu}(\theta)=
    \begin{cases}
        \displaystyle (1-p) \left({Z(\theta)} \right)^{-\alpha} \left( \beta^{-\alpha} -1 \right) & \text{ if } Z(\theta)\ge \invb, \\[1mm]
        \displaystyle (1-p) \left(1- \left({Z(\theta)} \right)^{-\alpha} \right) & \text{ if } Z(\theta) \le \invb.
    \end{cases}$$
    For any student $\theta\in G_1$, we have $\disp_{\hat\mu}(\theta)=-p\left(1-\ba\right)\left({Z(\theta)} \right)^{-\alpha}.$ The maximum displacement of $(1-p)(1-\ba)$ is experienced by a $G_2$ student with true potential $1/\beta$. 
\end{proposition}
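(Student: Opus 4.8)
The plan is to compute both matchings explicitly from the definitions and subtract. The key preliminary fact is that the unbiased matching collapses to the clean form $\mu(\theta)=\F_1(Z(\theta))=(Z(\theta))^{-\alpha}$ for \emph{every} student: this is exactly the $\beta=1$ specialization of~\eqref{eq:match-act} noted right after that equation, and it holds for both groups since at $\beta=1$ the two ccdfs coincide. With this in hand the whole proposition reduces to evaluating the branch of~\eqref{eq:match-act} appropriate to each student at $\hat Z(\theta)$, substituting $\F_1(t)=t^{-\alpha}$ and $\F_2(t)=\ba\,t^{-\alpha}$, and simplifying.

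First I would dispatch the $G_1$ case, which involves no case split because $\hat Z(\theta)=Z(\theta)\ge 1$ always. Plugging into the first branch of~\eqref{eq:match-act} gives $\hat\mu(\theta)=\big((1-p)+p\ba\big)(Z(\theta))^{-\alpha}$; subtracting $\mu(\theta)=(Z(\theta))^{-\alpha}$ produces the stated $(-p+p\ba)(Z(\theta))^{-\alpha}$ immediately.

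The $G_2$ case is the substantive one, because $\hat Z(\theta)=\beta Z(\theta)$ and the $\vee\,1$ in the second branch of~\eqref{eq:match-act} forces a split at the threshold $\beta Z(\theta)=1$, i.e.~$Z(\theta)=1/\beta$. When $Z(\theta)\ge 1/\beta$ the maximum is inactive, and the two clean simplifications $\F_1(\beta Z(\theta))=\beta^{-\alpha}(Z(\theta))^{-\alpha}$ and $\F_2(\beta Z(\theta))=(Z(\theta))^{-\alpha}$ make the $p$-terms cancel after subtracting $\mu(\theta)$, leaving $(1-p)(\beta^{-\alpha}-1)(Z(\theta))^{-\alpha}$. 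When $Z(\theta)\le 1/\beta$ the maximum clamps the $G_1$ term to $\F_1(1)=1$, so $\hat\mu(\theta)=(1-p)+p(Z(\theta))^{-\alpha}$ and subtracting $\mu(\theta)$ gives $(1-p)\big(1-(Z(\theta))^{-\alpha}\big)$. A worthwhile sanity check is that the two branches agree at the boundary $Z(\theta)=1/\beta$, where each evaluates to $(1-p)(1-\ba)$, confirming the displacement is continuous there.

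Finally, the extremal claims follow from elementary monotonicity in $Z(\theta)$, keeping in mind that the Pareto support is $[1,\infty)$. For $G_2$, the upper-branch expression $(1-p)(\beta^{-\alpha}-1)(Z(\theta))^{-\alpha}$ is nonnegative (since $\beta\le 1$) and decreasing in $Z(\theta)$, whereas the lower-branch expression $(1-p)(1-(Z(\theta))^{-\alpha})$ is increasing; hence both are maximized at the shared boundary $Z(\theta)=1/\beta$, yielding the maximum displacement $(1-p)(1-\ba)$. For $G_1$, the displacement $-p(1-\ba)(Z(\theta))^{-\alpha}$ is nonpositive and is most negative where $(Z(\theta))^{-\alpha}$ is largest, i.e.~at the left endpoint $Z(\theta)=1$, giving $-p(1-\ba)$. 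I do not expect a genuine obstacle here; the only thing to watch is the bookkeeping between $\ba=\beta^\alpha$ and $\beta^{-\alpha}$ and the consistency of the two $G_2$ branches across the threshold, after which everything is routine substitution and a one-line monotonicity argument.
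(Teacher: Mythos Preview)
Your proposal is correct and follows exactly the approach the paper indicates: the paper simply notes that the displacement ``can be easily calculated using the formulae for $\mu$ and $\hat\mu$ given in~\eqref{eq:match-act},'' and you have carried out precisely that substitution-and-simplification, including the case split at $Z(\theta)=1/\beta$ induced by the $\vee\,1$ and the monotonicity argument for the extremal values. There is nothing to add.
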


One can interpret this result as follows. Starting from the top school, $G_1$ students gradually take up more seats than they would without bias, thus gradually pushing $G_2$ students to worse schools. This process stops once all $G_1$ students are matched, and the only students that remain to be matched are $G_2$ students. As a result, in the lowest ranked schools, all students are $G_2$ students. Hence, the displacement of $G_2$ students decreases towards the end. Figure~\ref{fig:combined-figures}(b) gives a pictorial illustration of Proposition~\ref{prop:mistreat-students}. From there, one can clearly see how the most mistreated students are average performers. This intuition will be fundamental in devising policies to counter the effect of bias.

\section{Deterministic Centralized Interventions} \label{sec:voucher}

In this section, we discuss deterministic interventions, where the central planner is able to assign vouchers to some targeted subset of disadvantaged students to \emph{debias} them in order to reduce mistreatment. When a disadvantaged student is chosen for such a voucher (in the form of supplemental instruction, test prep, a scholarship, or similar), we assume that they receive the support they need to realize their innate ability and perform at their true potential by the time their performance is measured (e.g.~when they take the SHSAT). Such interventions are costly, so deciding which set of students to assign these vouchers to becomes a key question. In particular, vouchers ought to be assigned in order to mitigate bias as much as possible subject to some budget constraint. These policies are deterministic, and the decision of whether to assign a voucher to a student (hence, debias them) depend only on the potential of the students. In the next section, we will discuss randomized policies.

\paragraph{Deterministic Debiasing Sets:} Formally, the planner chooses some measurable subset $T\subseteq[1,\infty)$ of disadvantaged students to debias, which we call a \emph{deterministic debiasing set} (DDS). We let $\hat c\in[0,1]$ be the budget of the central planner, then define $\mcT(\hat c)$ to be the set of all DDSs respecting this budget, that is the set of all measurable $T\subseteq[1,\infty)$ with $\int_T\,dF_1\leq\hat c$. Additionally, let $\mcT^c(\hat c)\subseteq\mcT(\hat c)$ be the set of such sets that are also closed and connected (i.e.~they are intervals $[a,b]\subseteq[1,\infty)$ that satisfy $F_1(b)-F_1(a)\leq \hat c$). If $\theta\in T$, then we set $\hat Z(\theta)=Z(\theta)$, so that student performs at their true potential. Let $\mu_T:\Theta\to[0,1]$ be the matching after $G_2$ students whose true potentials lie in $T$ have been debiased\footnote{In this section we assume the debiasing decision is based on \emph{true potentials}. In the case of deterministic one-to-one bias, the distinction is not important, but we later discuss debiasing on perceived potentials when it becomes relevant.}. Write $m_T$ for the mistreatment under $\mu_T$ as in Definition~\ref{def:dismis}. Recall that the mistreatment is the drop in the rank of the school the student is matched to (if this drop is positive): a student $\theta$ has mistreatment equal to $0$ if they are assigned to a school at least as good as $\mu(\theta)$. In the following, we evaluate a voucher distribution by its effect on the mistreatment of $G_2$ students, since only $G_2$ students may experience strictly positive mistreatment. It is easy to see that after the interventions, no student $\theta \in G_1$ will be matched to a school worse than $\mu(\theta)$. This is because our interventions focus on helping (certain) $G_2$ students reveal their true potentials, hence for any $G_1$ student $\theta$, no student with potential lower than $Z(\theta)$ can have a perceived potential higher than $\hat Z(\theta)=Z(\theta)$.

\paragraph{Fairness Considerations:}  Finding a set of students to allocate vouchers to is a resource allocation problem with natural fairness considerations that guide the choice of the measures to be optimized\footnote{To read a more detailed discussion on relevant philosophies of equality and decision-making, we refer the interested reader to the 1979 Tanner Lectures on Human Values (\cite{Sen1979}).}. For the cohort of disadvantaged students, we take the view of finding a distribution of vouchers so that the mistreatment across $G_2$ students is as balanced or equitable as possible. We analyze two representative fairness measures in this regard: (1) the \emph{positive area under the mistreatment curve} over all disadvantaged students, and (2) the \emph{maximum mistreatment} experienced in this cohort. The former is the continuous $L_1$ norm (under the $F_1$ measure) of the mistreatment after voucher allocation, or the \emph{positive area under the curve} (PAUC), which we denote by $\sigma$. The latter is the continuous $L_{\infty}$ norm of mistreatment and we denote it by $mm$. Formally, for a matching $\gamma$, we define
\begin{align}
 \sigma(\gamma)&:=\int_{\Theta} m_\gamma\, dF_1 = \|m_\gamma(\theta)\|_1,\\
 mm(\gamma)&:=\sup_{\theta\in\Theta}m_\gamma(\theta)=\lim_{p \rightarrow \infty}\left(\int_{\Theta} \abs{m_\gamma}^p\, dF_1\right)^{1/p}=\|m_\gamma(\theta)\|_{\infty}.
\end{align}
These notions of fairness have been axiomatically established and are well-studied in the literature. For example, the {\it min-max} notion of fairness has been considered in~\cite{Kumar2000}, whereas the positive area under the curve corresponds to average mistreatment of group $G_2$: it is a group notion of fairness consider in many fairness related studies~\citep{Conitzer2019,Dwork2018}. Since we will show that the optimal interventions at the two extremes $L_1$ and $L_\infty$ target qualitatively similar sets of students, we expect the solution for any other $L_p$ norm to also behave similarly\footnote{An $L_p$ norm on a probability space with $p$ small generally measures the average of a function, whereas a large $p$ measures its ``peakiness'', with $p=\infty$ equaling the essential supremum, and values in between trading off between these properties (for a further discussion on the relationship between $L_p$ spaces, see~\cite{folland1999real}).}, and so restrict our analysis to $L_1$ and $L_{\infty}$.

\paragraph{Optimal Deterministic Strategies:} We now proceed to proving our main results in the deterministic setting, which fully describe the optimal debias intervals in our model. 
We define $\mcT_{mm}(\hat c)=\argmin_{T\in\mcT(\hat c)}mm(\mu_T)$, in other words, $\mcT_{mm}(\hat c)$ is the collection of sets $T$ that minimize $mm(\mu_T)$ among sets with $\int_T\,dF_1\leq\hat c$. The next result gives an exact characterization of $\mcT_{mm}(\hat c)$, assuming\footnote{We refer to Section~\ref{sec:randomized-vouchers} for a discussion on the various technical assumptions on data from Section~\ref{sec:voucher} and Section~\ref{sec:randomized-vouchers}.} $p<1-\ba$.

\begin{figure}[h]
    \centering
    \begin{subfigure}{.48\textwidth}
        \centering
        \includegraphics[width=\textwidth]{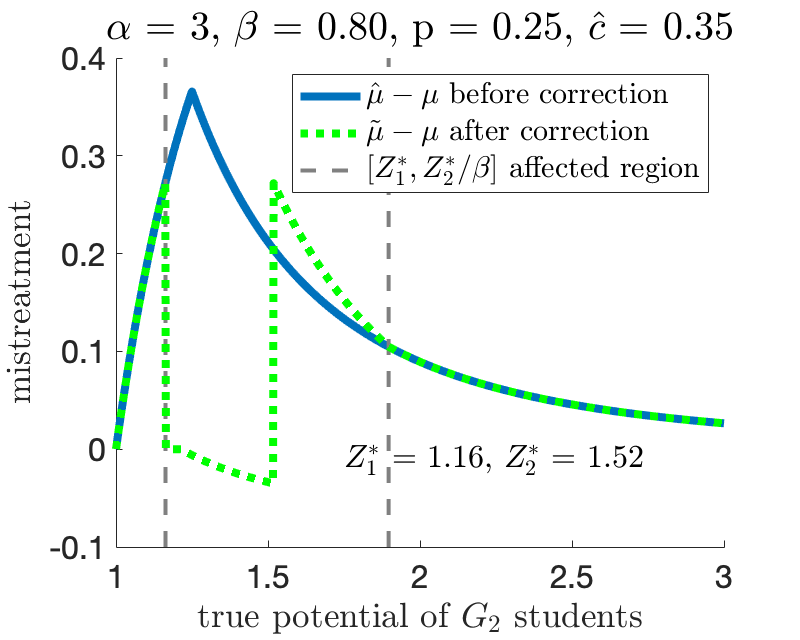}
        \caption{optimal debiasing when $\hat c$ is large}
    \end{subfigure}
    \begin{subfigure}{.48\textwidth}
        \centering
        \includegraphics[width=\textwidth]{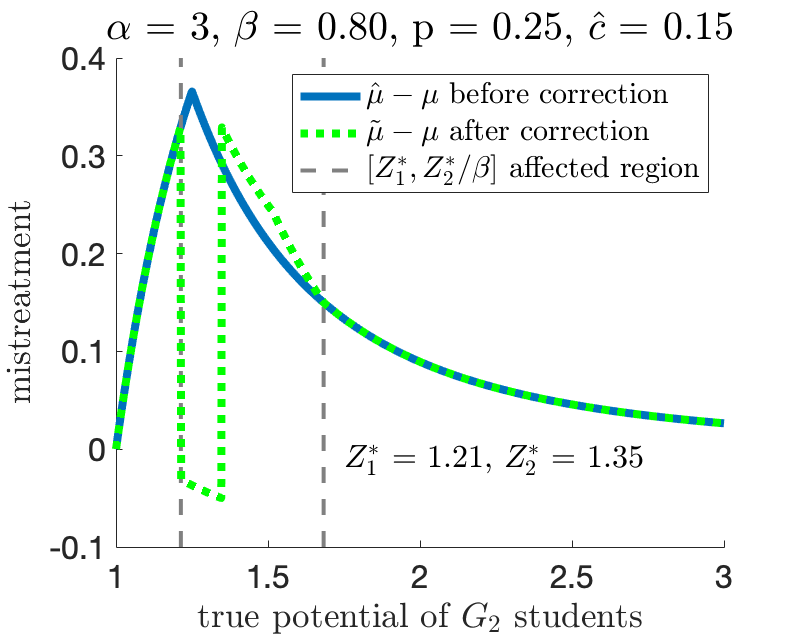}
        \caption{optimal debiasing when $\hat c$ is small}
    \end{subfigure}
    \vspace{.8em}
    \caption{Maximum mistreatment before and after optimal voucher allocation.} \label{fig:best-range}
\end{figure}

\begin{theorem} \label{thm:optimal-range-mm}
    Assume $p<1-\ba$. Then  there exists a set $T=[Z_1^*, Z_2^*] \in \mathcal{T}_{mm}(\hat c)$ such that all other sets in $\mathcal{T}_{mm}(\hat c)$ differ from $T$ on a set of measure zero. If $\hat c\ge \frac{(1-p) (1 -\ba)}{1-p+1- \ba}$, then
    $$\displaystyle Z_1^* = \bigg( \frac{(1-p)+(\invba -1)\hat c}{\invba-p} \bigg)^{ -\frac{1}{\alpha}} \quad \text{and} \quad Z_2^* = \bigg( \frac{(1-p)(1-\hat c)}{\invba-p} \bigg)^{ -\frac{1}{\alpha}},$$ and $mm(\mu_{[Z_1^*, Z_2^*]}) = (1-p)(1-\ba) \frac{1-\hat c}{1-p\ba}$, reduced from  $mm(\hat\mu) = (1-p)(1-\ba)$. Conversely, if $\hat c\le\frac{(1-p) (1-\ba)}{1-p+1- \ba}$, then  $mm(\mu_{[Z_1^*, Z_2^*]}) = (1-p-\hat c)(1-\ba)+p\hat c$, where
    $$\displaystyle Z_1^* = \bigg( \frac{(1-p-\hat c) \ba}{1-p} + \hat c \bigg)^{ -\frac{1}{\alpha}}\quad \text{ and } \quad Z_2^* = \bigg( \frac{(1-p-\hat c) \ba}{1-p} \bigg)^{ -\frac{1}{\alpha}}.$$
\end{theorem}

We include the proof of Theorem~\ref{thm:optimal-range-mm} in \ecref{ec-sec:proofs:or-mm}. Interestingly, our proof also shows that if vouchers are not distributed carefully, one may actually \emph{increase} the maximum mistreatment and, more generally, shows which debiasing sets lead to an improvement over the status quo. A pictorial representation of Theorem~\ref{thm:optimal-range-mm} is given in Figure~\ref{fig:best-range}. The two sub-figures correspond to two choices of $\hat c$.

We next consider the positive area under the curve (PAUC): this is the aggregate amount of mistreatment experienced by all $G_2$ students. We restrict our attention to debiasing $G_2$ students whose potentials are in a connected set---this is a justifiable implementation in practice. This assumption also makes our analysis tractable. In particular, let $\mcT_{auc}^c(\hat c)=\argmin_{T\in\mcT^c(\hat c)}\sigma(\mu_T)$ be the set of $T\in\mcT^c(\hat c)$ that minimize $\sigma(\mu_T)$. The next result gives an explicit description of these sets when assuming, again that $p<1-\ba$ and additionally that $p<0.5$.
\begin{theorem} \label{thm:optimal-range-pauc}
    Assume $p<1-\ba$ and $p<0.5$. Then $\mathcal{T}^c_{auc}(\hat c)$ is made up of a unique set $T=[Z_1^*, Z_2^*]$. If $\hat c\ge\frac{(1-p) (1-\ba)}{2-p-\ba-p\ba +p\beta^{2\alpha}}$, then:
    $$\displaystyle Z_2^* =\bigg(\frac{(1-p)(1-\hat c)}{p\ba+\invba-2p} \bigg)^{-\frac{1}{\alpha}} \quad\text{ and }\quad Z_1^*= \bigg( \frac{(1-p)(1-\hat c)}{p\ba+\invba-2p} + \hat c\bigg)^{-\frac{1}{\alpha}},$$
    and $\sigma(\mu_{[Z_1^*, Z_2^*]}) = \frac{1}{2}(1-p)(1-\ba) \left( \frac{(\invba-p) (1-\hat c)^2}{p\ba+\invba-2p} \right)$, down from $\sigma(\hat\mu) =\frac{1}{2} (1-p)(1-\ba)$. Else $\hat c\le\frac{(1-p) (1-\ba)}{2-p-\ba-p\ba +p\beta^{2\alpha}}$ and $\sigma(\mu_{[Z_1^*, Z_2^*]}) = \frac{1}{2}(1-p)(1-\hat c)^2 - \frac{1}{2} \ba \left( \frac{[(p\ba-1)\hat c+(1-p)]^2}{1-p} + p\hat c^2 \right)$, where
    $$\displaystyle Z_2^* =\bigg(\frac{(p\ba-1)\hat c+(1-p)}{(1-p) \invba} \bigg)^{-\frac{1}{\alpha}} \quad\text{ and }\quad Z_1^*= \bigg( \frac{(p\ba-1)\hat c+(1-p)}{(1-p) \invba}+ \hat c\bigg)^{-\frac{1}{\alpha}}.$$
\end{theorem}
The proof of Theorem~\ref{thm:optimal-range-pauc} is given in~\ecref{ec-sec:proofs:or-pauc}. 

In Table~\ref*{ec-tab:compare-two-measures} of the e-Companion, we compare the optimal ranges of students to debias under the two measures of fairness, with parameters $\alpha=3$, $\beta=0.8$, and $p=1/4$. There is on average a 95\% overlap of the optimal intervals under the two measures. In particular, both measures suggest that vouchers should be given to the average (middle performing) students.

Although these results highlight an important deviation from the current practice of prioritizing top-performing students for scholarships, we highlight in the next section two fundamental problems with such policies.

\section{Incentive Compatible and Individually Fair Voucher Distribution}\label{sec:incentive compatible}

In the previous section, we characterized the \emph{deterministic} intervals to distribute vouchers to in order to minimize either the maximum mistreatment or the positive area under the mistreatment curve for the disadvantaged student group. In this section, we introduce two natural and desirable properties,  individual fairness and incentive compatibility, that the policies developed in Section~\ref{sec:voucher} fail to have. We then show in Section~\ref{sec:randomized-vouchers} how we can shift from a deterministic voucher distribution policy to a randomized one in order to satisfy them. Our results rely on the original assumptions that the potential $F$ of students follows the Pareto distribution, and  the true potential of a student is always revealed when a voucher is assigned to them. In Section~\ref{sec:stoch-effect-vouchers}, we discuss a model where the latter assumption is  relaxed. In Section~\ref{sec:IwP}, we instead relax the former assumption and show that if no hypothesis can be made on $F$, incentive compatibility can only be achieved by distributing vouchers with a probability increasing with the (perceived)  potential.

Let us now discuss \emph{individual fairness}, which requires that similar individuals be treated similarly~\citep{Dwork2012}. While a formal definition of this concept is postponed to Section~\ref{sec:randomized-vouchers}, we observe here that the policies developed in Section~\ref{sec:voucher} fail to be individually fair as individuals close to the boundary of the debiasing interval are treated very differently depending on whether they are inside or outside of it.

Our second property is \emph{incentive compatibility} (see, e.g.,~\cite{roughgarden2010algorithmic}). In general, it requires that no individual can benefit from misrepresenting their features. In our setting, we assume that a student is able to misrepresent themselves as appearing to have lower perceived potential (e.g., intentionally achieving a lower score on a test) in order to be part of the set of students that get allocated vouchers. Recall that a DDS is a measurable set $T\subseteq [1,\infty)$. A DDS  is \emph{incentive compatible} if no student is assigned to a better school if they misreport their performance. Formally, assume that a voucher given to a disadvantaged student with reported perceived potential  $\beta Z(\theta)$, will improve their performance up to $Z(\theta)$\footnote{This assumption is justified by the fact that additional training is usually commensurate with the (perceived) level of a student.}; then, a DDS $T$ is incentive compatible if for each $x \in [1,\infty) \setminus T$ and $x' \in T$ with $x>x'$, we have $\beta x \geq x'$.

\begin{lemma}\label{lem:deterministic-incentive-compatible}
Assume $\beta\in [0,1)$ and let $T\neq \emptyset$ be an incentive compatible DDS. Then $T$ is of the form $\{\theta \in \Theta : Z(\theta)\geq \delta\}$ or $\{\theta \in \Theta : Z(\theta)> \delta\}$ for some $\delta \in [1,\infty)$.
\end{lemma}

We defer the proof to Appendix~\ref*{appx-sec:proof-deterministic-incentive-compatible}. This lemma shows that if we care about incentive compatibility and require that vouchers are distributed deterministically, then the only feasible mechanism is to debias all students that have potential above some cutoff $\delta$ (i.e.~the top students). However, we showed in the last section that such policies are not optimal. To overcome these flaws in deterministic policies, we next turn to randomization.

\subsection{Randomized assignment of vouchers}\label{sec:randomized-vouchers}

In this section (and in its proofs in \ecref{ec-sec:rvp}) we abuse notation for simplicity and identify a student $\theta$ with their true potential $Z(\theta)$.

A \emph{Randomized Voucher Program} (RVP) is a measurable function $\rho:\Theta\to[0,1]$ that gives, for each $\theta \in \Theta$, the probability that a $G_2$ student with true potential $\theta$ is assigned a voucher. Observe that if $\rho(\theta)\in\set{0,1}$ for all $\theta\in\Theta$, then $\rho^{-1}(1)$ is a measurable set and therefore also a deterministic debiasing set (DDS) as in the definition in Section~\ref{sec:voucher}; likewise, given a DDS ${T}$ we can construct the RVP $\rho_{{T}}(\theta)=\indc_{\{\theta\in{T}\}}$ that coincides with a given DDS.

The main class of RVPs investigated in this section are those we call \emph{Proportional-to-Mistreatment} (PropM), denoted by $\rho_m$ and defined as
\begin{equation}\label{eq:PTM}
\rho_m(\theta):=\frac{2\hat c}{(1-\beta^{\alpha})(1-p)}m_{\hat\mu}(\theta),
\end{equation}
for some $\hat c\in (0, 1/2]$ (recall that $m_{\hat\mu}(\theta)$ is the mistreatment of a student with real potential $\theta$ when no vouchers are distributed). It is easy to see that $\hat c$ is the expected proportion of disadvantaged students that will get a voucher, that is $\hat c=\int_\Theta\rho_m\,dF$.  Intuitively, $\rho_m$ assigns a larger probability of being debiased to students with a higher mistreatment.

As we show next, under broadly applicable technical hypotheses on the parameters, PropMs satisfy many of the desirable properties that deterministic voucher allocations fail to have. Moreover, we will show that they can lower the maximum expected mistreatment. To state these results formally, we first extend concepts from deterministic DDSs to RVPs. We let $\mu_\rho(\theta)$ be the expected school that a student with true potential $\theta\in\Theta$ is assigned to under $\rho$. An explicit expression for $\mu_\rho$ for arbitrary $\rho$ can be found in Lemma~\ref*{claim:mu_rho} of~\ecref{ec-sec:rvp}. Now all prior definitions and notation carries over, including the expected mistreatment $m_\rho$, and maximum expected mistreatment $mm_\rho$.

An RVP $\rho$ is \emph{incentive compatible} if $\mu_\rho(\theta')\geq\mu_\rho(\theta)$ for all $\theta'<\theta$. That is, an RVP is incentive compatible if a student with true potential $\theta$ is not better off by manipulating themselves to appear as having a true potential $\theta'<\theta$.

We define individual fairness as a Lipschitz condition on $\rho$. We say an RVP $\rho$ is \emph{$k$-individually fair} if, for each $\theta, \theta' \in [1,\infty)$, $|\rho(\theta)-\rho(\theta')|\leq k |\theta-\theta'|$ (note that under this definition, no non-trivial DDS is $k$-individually fair for any $k$). We can now state the main result from this section, whose proof is deferred to Appendix~\ref{ec-subsec:rvp-ptm-proof}. Observe that $mm^*(\hat c)$ is the maximum mistreatment achieved by the best deterministic policy minimizing this metric, as computed in Theorem~\ref{thm:optimal-range-mm}.

\begin{theorem}\label{thm:properties-of-ptm-rvps}
Let $\rho_m$ be a PropM defined as in~\eqref{eq:PTM} for some $\hat c\in(0,1/2]$ and assume $p\leq 0.5$. Let $mm^*(\hat c)=\min_{T\in\mcT(\hat c)}mm(\mu_T)$. Then:
\begin{enumerate}
\item $\rho_m$ is $\frac{2\hat c\alpha}{1-\beta^\alpha}$-individually fair.
\item $\rho_m$ is incentive compatible for
\begin{equation}\label{eq:condition-on-hat-c-for-incentive compatibility}\hat c \leq\frac{1-p}{2\sbr{p(1-\beta^\alpha)+(1-p)(\beta^{-\alpha}-1)}}.
\end{equation}
\item Suppose $p<1-\beta^\alpha$ and $\hat c\leq\frac{(1-p)(1-\beta^\alpha)}{1-p+1-\beta^\alpha}$. Then $mm_{\rho_m}\leq mm^*(\hat c)$ if
\begin{equation}\label{eq:lb-hatc}\hat c \geq 1-\frac{p+1-\beta^\alpha}{4p(1-\beta^\alpha)}.\end{equation}
\end{enumerate}
\end{theorem}

\eqref{eq:condition-on-hat-c-for-incentive compatibility} and~\eqref{eq:lb-hatc} give complementary conditions on the amount of vouchers that can be given out. On one hand,~\eqref{eq:condition-on-hat-c-for-incentive compatibility} suggests that distributing too many vouchers prevents incentive compatibility of the PropM. In fact, a $\hat c$ too large causes students performing just above the most mistreated student to be incentivized to artificially lower their score, as the absolute value of the derivative of the PropM becomes large around its maximum. 
On the other hand,~\eqref{eq:lb-hatc} suggests that we need to distribute enough vouchers to see the maximum expected mistreatment $mm_{\rho_m}$ drop below the optimal deterministic one $mm^*(\hat{c})$. This is because the optimal deterministic policy debiases the most mistreated student straight away whereas the PropM distributes vouchers more widely, and so the maximum expected mistreatment does not immediately drop as significantly. As we discuss at the end of the section, both conditions are satisfied for a large range of parameters.

PropMs represent therefore a more robust and theoretically satisfying alternative to reducing the maximum mistreatment that is at least as effective as the deterministic voucher assignments developed in Section~\ref{sec:voucher}.

\subsection{Stochastic-effect vouchers}\label{sec:stoch-effect-vouchers}

Our treatment so far assumes that a student who receives a voucher is fully debiased, i.e., their full potential is unambiguously revealed. We now discuss a model that relaxes this assumption. Proofs from this section are deferred to~\ecref{ec-subsec:proof-new}. In particular, we assume that a voucher allocated to a student may successfully debias that student with a constant (success) probability $\kappa \in [0,1]$, while with probability $(1-\kappa)$ it fails to debias the student and leaves their potential unaffected at the perceived potential. We call this model \emph{All-or-Nothing stochastic-effect Vouchers} (ANV). 
For an RVP $\rho$ and success probability $\kappa$, the concept of maximum expected mistreatment $mm_{\rho,\kappa}$ is well-defined. Observe that $mm_{\rho,1}$ coincides with $mm_{\rho}$ in the classical model. We next bound $mm_{\rho,\kappa}$ as a function of $mm_{\rho}$ and other features of the model.

\begin{lemma}\label{lem:all-or-nothing-vouchers} Consider an instance of the ANV model with success probability $\kappa$. Let $\rho$ be an RVP. Then $mm_{\rho,\kappa}\leq mm_{\rho} + (1 - \kappa)(1 - \beta^\alpha)\|\rho\|_\infty( 1 + p(1 + \kappa)\|\rho\|_\infty)$. 
\end{lemma}

Lemma~\ref{lem:all-or-nothing-vouchers} describes analytical conditions under which we can safely rely on $mm_\rho$ to upper bound $mm_{\rho,\kappa}$ for a given RVP $\rho$. Note that for $\kappa=1$ we obtain the bound $mm_\rho$, as expected. At the end of Section~\ref{sec:incentive compatible} we give explicit realizations of the bound for various choices of the parameters. As we discuss next, when $\rho$ is constant across $\Theta$, there is an exact formula for $mm_{\rho,\kappa}$ in terms of $mm_{\rho}$. 

\begin{lemma}\label{lem:rho-constant-same}
Consider an instance of the ANV model with success probability $\kappa$. Let $\rho(\theta)=r \in (0,1)$ for all $\theta \in \Theta$. Then $mm_{\rho,\kappa}=mm_{\kappa \rho}$.
\end{lemma}

The previous lemmas relate $mm_{\rho,\kappa}$ to $mm_{\rho}$ and tells us exactly how to compute or approximate the maximum mistreatment in the ANV model. It may happen however, that two RVPs, where the first has smaller maximum expected mistreatment when $\kappa=0$ may in fact have a larger one when $\kappa>0$, as shown in the next lemma. Thus, knowing $\kappa$ is important for choosing which RVP to implement.

\begin{lemma}\label{ANV:rho-flip} There exist choices of the parameters $\alpha,\beta,p,\kappa$ and RVPs $\rho, \hat \rho$ that are constant over $\Theta$ so that, for the ANV model defined by these parameters, we have $mm_{\rho}<mm_{\hat \rho}$ and $mm_{\rho,\kappa}>mm_{\hat \rho,\kappa}$. 
\end{lemma}

\subsection{Incentive compatible  allocation of vouchers when the distribution of the potential of students is unknown}\label{sec:IwP}

Let us consider again the original model where a voucher assigned to a student debias them with probability $1$. We next observe that, to design a non-trivial incentive compatible RVP, it is essential to have knowledge of the distribution of student potentials. We say an RVP $\rho$ is \emph{Increasing-with-Potential} (IwP) if $\rho(\theta)\geq\rho(\theta')$ for all $\theta>\theta'$. An IwP assigns a higher probability of being debiased to students with higher potential. It can therefore be interpreted as a randomized counterpart of the DDS from Lemma~\ref{lem:deterministic-incentive-compatible} that allocated vouchers to the top performing students (in particular, the DDS from Lemma~\ref{lem:deterministic-incentive-compatible} is IwP).

\paragraph{General distributions of potentials:} For the rest of this section, we relax the Pareto assumption and consider the more general version of the model defined in Section~\ref{sec:market} where the true potentials of students are allowed to be drawn from any continuously integrable distribution $F$. All definitions naturally extend to this setting. We first show that under mild technical conditions, IwPs are incentive compatible with respect to any $F$. We prove this fact in~\ecref{ec-subsec:iwp-rvps}.
\begin{lemma}\label{lem:well-behaved-IwP-is-ic}
Suppose $\rho$ is IwP and such that it is everywhere continuously differentiable except a countable set of isolated points where it has right-continuous jump discontinuities. Then, for any distribution of true potentials $F$, $\rho$ is incentive compatible.
\end{lemma}
The following theorem gives a converse to the previous statement and is also proved in~\ecref{ec-subsec:iwp-rvps}.
\begin{theorem}\label{thm:only-iwp-is-always-incentive compatible}
Suppose $\rho$ is an RVP. Let $\theta \in [1,\infty)$ such that $\rho$ is continuously differentiable in some neighborhood of $\theta$ but $\rho'(\theta)<0$. Then, for any $\beta\in(0,1)$ and $p\in(0,1)$, there exists a continuous distribution $F$ such that if true potentials are distributed according to $F$, $\rho$ is not incentive compatible.
\end{theorem}
Theorem~\ref{thm:only-iwp-is-always-incentive compatible} implies that without any information on the distribution of student potentials, the only voucher distribution policies that are guaranteed to be incentive compatible are those that allocate vouchers with higher probability to higher performing students. Examples of such policies are lotteries for students whose potential is above some certain threshold. Hence, if no information on the distribution of students potential can be assumed, it may be reasonable for policy-makers to stick to a more conservative distribution of vouchers which rewards top-performing students.

\paragraph{Discussion on technical assumptions:} We now discuss the technical assumptions on the parameters of the model in Section~\ref{sec:voucher} and Section~\ref{sec:incentive compatible}. In Theorem~\ref{thm:optimal-range-mm}, Theorem~\ref{thm:optimal-range-pauc}, and Theorem~\ref{thm:properties-of-ptm-rvps} we assume $p<1-\beta^\alpha$. Note that the right hand side is equal to $\mb(F_2\in[\beta,1])$, that is, the proportion of disadvantaged students whose perceived potential is less than $1$ (the smallest perceived potential of any non-disadvantaged student). The condition $p<1-\beta^\alpha$ therefore requires that the proportion of disadvantaged students out of the whole student population is no more than the proportion of disadvantaged students that are perceived as being worse than any non-disadvantaged student. In Theorem~\ref{thm:optimal-range-pauc}, we further assume $p<0.5$, and in Theorem~\ref{thm:properties-of-ptm-rvps} we assume both an upper and a lower bound on $\hat c$. The conditions need to be checked and do not always hold, but we note that all conditions hold for many reasonable choices of $(\alpha,\beta,p,\hat c)$. For instance, they hold if $\beta=0.8$, $\alpha=3$, $p<0.4$ (as in 
Figure~\ref{fig:best-range}(b)), and $\hat c \leq 1/4$; or if $\beta=0.9$, $\alpha=8.9$, $p\leq1/3$, and $\hat c\leq 1/4$ (as in our numerical experiments in Section~\ref{sec:experiments}). In Lemma~\ref{lem:all-or-nothing-vouchers}, for reasonable choices of the parameters, the increase of the maximum expected mistreatment is mild. For instance, for $\beta = 0.8$, $\alpha=4$, $p=.2$, $\kappa=0.8$, $\|\rho\|=0.1$, we obtain $m_{\rho,\kappa}\leq m_{\rho} + 0.012233088$, while for $\beta = 0.9$, $\alpha=4$, $p=.4$, $\kappa=0.5$, $\|\rho\|_\infty=0.2$ we obtain $m_{\rho,\kappa}\leq m_{\rho} + 0.0385168$.

\section{Experimental Case Study} \label{sec:experiments}

Our theoretical analysis has shown that mistreatment under various metrics can be substantially reduced via a targeted intervention tailored to the distribution of student potential. We now use data from NYC with real test scores and a student population with heterogeneous preferences over schools to compute optimal policies for reducing student mistreatment. We show that our theoretical model provides a reasonable approximation despite some deviations from the data (as discussed in the \emph{Pre-processing \& Model Fitting} paragraph below), and importantly, that our qualitative results continue to hold. Our theoretical analysis is thus instrumental in identifying effective debiasing policies for real-world applications, and can be optimized empirically for real data.  

\paragraph{Dataset:} There are eight Specialized High Schools (SHSs) in NYC which are consistently ranked among the best schools in the city. Admission to these schools is highly competitive, and is determined solely by the score a student achieves on the SHS Admissions Test (SHSAT). An intake of only about $5,000$ students gets selected every year from a pool of $29,000$ applicants who take the test. We apply our model to the dataset of 2016--17 academic year SHS admissions which include for each student their SHSAT score, their preference list over the SHSs, and whether the DOE deems them disadvantaged or not.

\paragraph{Pre-processing \& Model Fitting:} As in Section~\ref{sec:contribs}, we estimate the distributional shift in scores between the two groups (see Figure~\ref{fig:dist-shsat}), then assume that reversing this shift gives the innate ability for each student, and use this scaled score as the (in reality unobservable) true potential. In our dataset, we fit $\beta=0.882$ for a multiplicative model and $\gamma=49$ points for an additive model\footnote{In the additive model of bias, a disadvantaged student $\theta$ with real potential $Z(\theta)$ is assumed to have perceived potential $\hat Z(\theta)=Z(\theta)- \gamma$ for some universal constant $\gamma$. See~\ecref{sec:additive-bias} for details. In this section, we choose bias parameters that minimize the Wasserstein distance between the two distributions. The additive shift becomes $49/475=0.103$ after normalization.}. We take the original scores to be the perceived potentials, and the scaled scores to be the true potentials: that is, for $G_1$ students, we always use the raw SHSAT score, and for $G_2$ students with raw score $s$ in the dataset, we use $s/\beta$ (or $s+\gamma$) as their true potential if they receive a voucher (i.e.~are debiased) and $s$ otherwise. These choices are coherent with the assumptions made through most of the paper. In~\ecref{appx:misspecified} we discuss results obtained by relaxing some of these assumptions, which show that the key takeaways of our results still hold.

In this section, all matchings on real data are computed as stable matchings with the student-proposing deferred-acceptance algorithm, using the true preferences of students, and with schools choosing students based solely on their perceived score. This mimics closely the real SHS admissions process. We extend the definition of displacement in the natural way: as the difference in the ranking of the school a student is assigned to (in their own preference list) between the matching at hand and the matching that uses the estimated true potentials. Due to the heterogeneity of student preferences, the displacement for a given student may be positive, negative, or zero whether they receive a voucher or not. Due to this variability, and to the fact that our case study only has 8 schools, the maximum mistreatment measure is not a meaningful way to discriminate between different debiasing intervals. Thus, we use the positive area under the mistreatment\footnote{Recall mistreatment is the non-negative part of displacement.} curve (PAUC) measure exclusively to compare interventions as it averages out this effect.

Our theoretical model assumes a balanced market, whereas only a small number of those who apply to SHSs are admitted. We therefore discard those students whose score is lower than a cutoff of 475 points, whom we compute to not receive admissions in any case\footnote{We compute this cutoff by performing a stable matching using the true potentials and rounding down to the nearest 5 points the score of the last student that gets admitted in this matching.}. This right tail of the student scores now closely matches the Pareto distribution (after dividing by the minimum score) with $\alpha=8.856$ in the multiplicative case (see Figure~\ref{fig:pareto-fitting-shsat}) and $\alpha=9.315$ in the additive case. Of these students, we compute the proportion that are disadvantaged as $p=0.319$ for the multiplicative case and $p=0.300$ for the additive case\footnote{Because of the different distributional assumption, a different number of $G_2$ students end up above the cutoff in the two different models.}. These normalizations yield a subset of students that form a balanced market and whose distributional properties approximate our model well.

\begin{figure}[h]
    \centering
    \includegraphics[width=.75\linewidth]{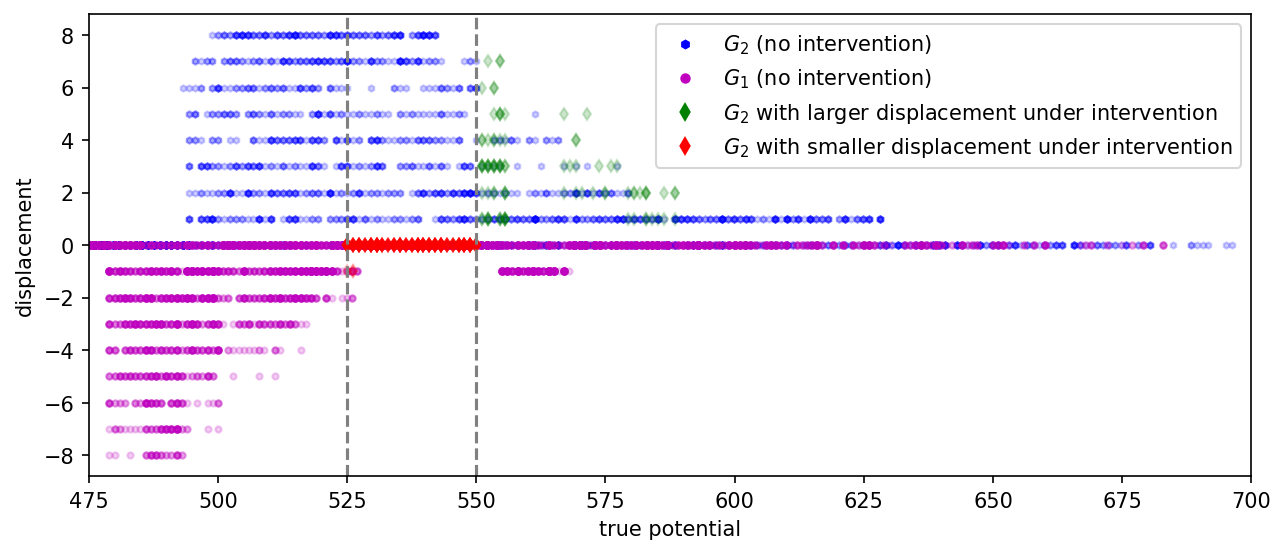}
    \includegraphics[width=.75\linewidth]{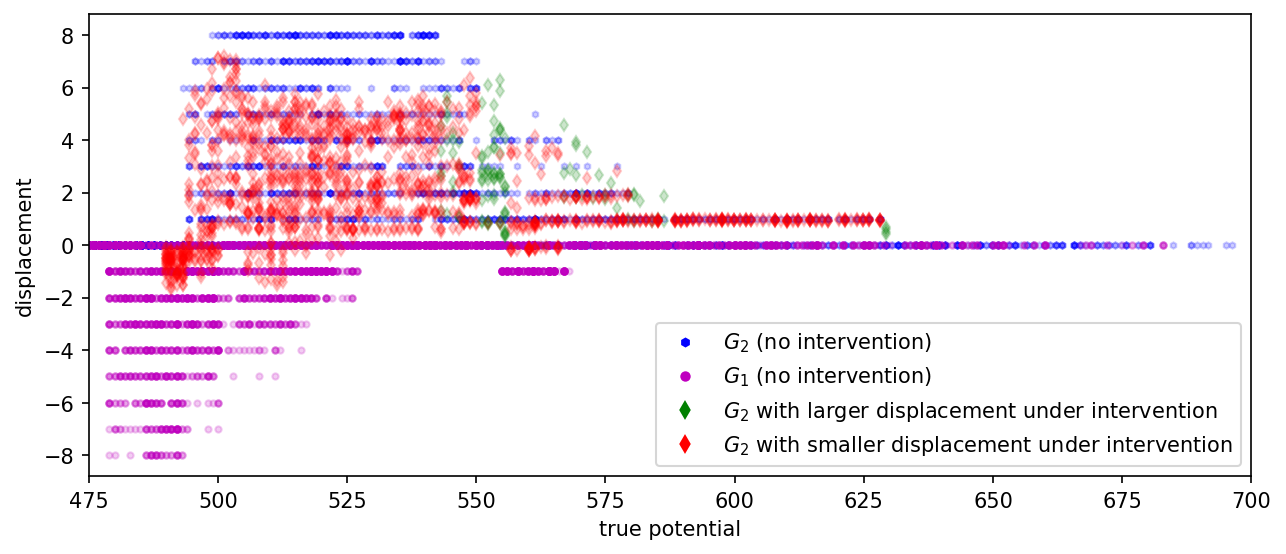}
    \caption[Displacement of $G_1$ and $G_2$ students after deterministic and randomized debiasing.]{The displacement of $G_1$ and $G_2$ students in the SHSAT dataset from NYC DOE. Blue and magenta dots respectively show the displacement of disadvantaged and non-disadvantaged students when there is no intervention. In the top figure, the students in the debiased range (dashed lines) are offered vouchers, in the bottom figure vouchers are offered with probability given in Figure~\ref{fig:empirical-ptm-rvp}. In the bottom figure with the randomized voucher program, we plot the average displacement over 100 repetitions. In both figures, we plot the displacement of disadvantaged students whose assigned schools change, with red dots representing those going to more preferred schools and green dots for those going to less preferred schools. }
    \label{fig:doe-disp-voucher}
\end{figure}

\paragraph{Model Characteristics:} We first observe empirically that without intervention, all $G_1$ students (magenta dots in Figure~\ref{fig:doe-disp-voucher}) have non-positive displacement and all $G_2$ students (blue dots) have non-negative displacement, as predicted by our analysis in Section~\ref{sec:analysis}. Furthermore, we consider deterministic and randomized interventions with $\hat c=0.17$. In Figure~\ref{fig:doe-disp-voucher} (top), deterministic vouchers are offered to students between the two dashed lines. All $G_2$ students who receive vouchers (red dots) have a displacement of at most zero, but some $G_2$ students might (green dots) fare worse, particularly the ones who are scoring slightly higher than the range to which vouchers are offered, as they are overtaken by some other $G_2$ students just below them. This highlights the non-incentive compatible nature of such deterministic policies (such students have incentive to underperform).

\begin{figure}[h]
    \centering
    \includegraphics[width=.8\linewidth]{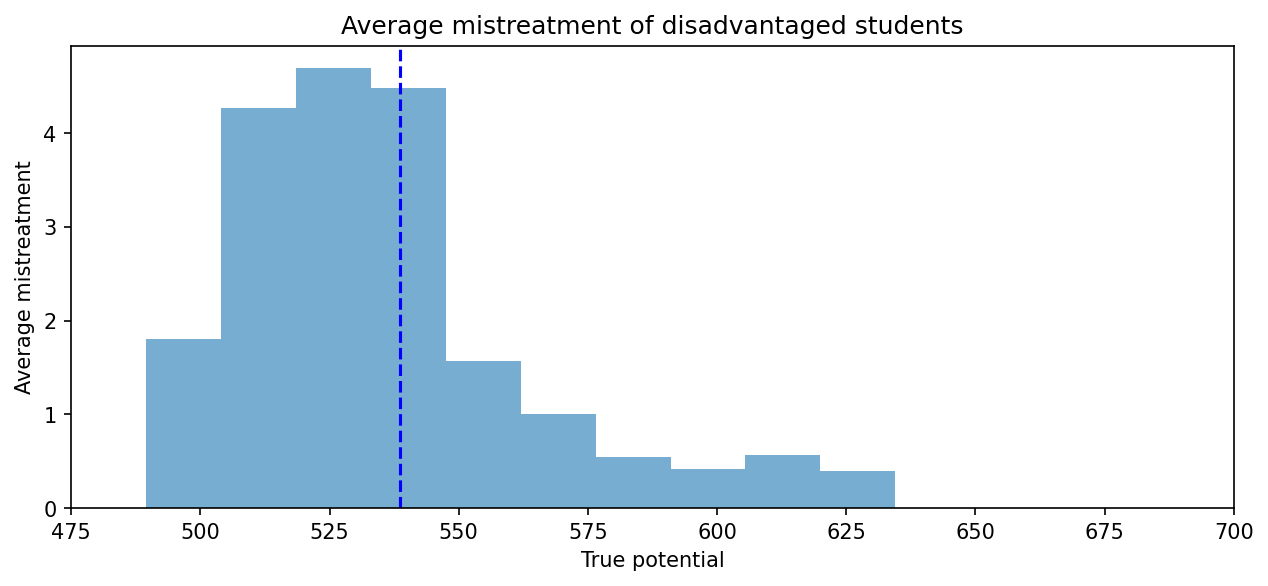}
    \caption[Average bucketed mistreatment computed from empirical data.]{The average bucketed mistreatment of students as computed from empirical data. Note the peak at 525, which represents an average student (cf.~Figure~\ref{fig:best-range}, the curve representing mistreatment before intervention). The vertical line indicates the value of $1/\beta$ where the theoretical PropM debiasing policy would have its maximum probability of assigning a voucher.} 
    \label{fig:empirical-ptm-rvp}
\end{figure}

Applying the randomized voucher program to this dataset requires further modifications. As observed earlier, since students have heterogeneous preferences, their mistreatment is also heterogeneous. In particular, case two students with the same score may have different mistreatment. To produce an empirical PropM (see Figure~\ref{fig:empirical-ptm-rvp}), we divide the potentials of admitted students into 20 equally sized buckets and compute an average mistreatment within each such bucket. The PropM is then normalized to be proportional to this average mistreatment, with magnitude determined by the budget of vouchers available. Since PropM is a randomized voucher program (such that a given student might get a voucher in one realization but not another), we run this experiment 100 times with different seeds and take the average displacement (see Figure~\ref{fig:doe-disp-voucher}, bottom). Our experiments show that the maximum mistreatment is reduced compared to the deterministic allocation
, and more generally, mistreatment improves across the board indicating a more equitable outcome. Note that due to the heterogeneity of preferences and binned averaging, PropM is not in fact incentive compatible, as indicated by the $G_2$ students with larger displacement under intervention (green diamonds). However, unlike the deterministic debiasing procedure, students with an incentive to underperform are interspersed with students with no incentive to underperform, making it harder for students to game the system ex-ante. Moreover, it is not uncommon that some theoretically incentive compatible mechanisms exhibits in practice some lack of incentive compatibility. For instance, the NYC School Match mechanism has until 2024 curtailed the preference lists of students to at most 12 schools~\citep{abdulkadirouglu2005new}, thus incentivizing students to be  strategic.

\paragraph{Theoretical vs. empirical intervals: } We next compare theoretically optimal intervals with those found to be empirically optimal. The basic primitive in our analysis is the routine that is given an interval of students to debias and computes the PAUC. By computing actual stable matchings and their PAUC, we produce a benchmark to compare both theoretical and empirical intervals. We use this routine in a grid search to find the empirically optimal debiasing intervals. We consider a low budget regime ($\hat c=0.1$) and an abundant budget regime ($\hat c= 0.4$). For both, we compute the theoretically optimal intervals by applying our theorems to the parameter values fitted earlier. 
Table~\ref{tab:optimal-range-doe} shows the optimal ranges found both via our theory and empirical grid search.
The differences between the theoretical and empirical models are minor, with the biggest difference being under the additive assumption. 
Thus, we find further evidence of reasonability of our assumptions since the empirical results on real-world data match the optimal target distribution of students predicted by our assortative model.

\begin{table}[t]
{\small{\begin{center}\begin{tabular}{l|l|c|cc}
        Model & Range & $\hat c = 0.1$  & $\hat c =0.4$ \\
        \midrule
        multiplicative & theoretical &
        $[526.73, 547.19]$ &
        $[506.84, 582.99]$ \\[2mm]
        multiplicative & empirical &
        $[527, 543]$ &
        $[505, 561]$ \\[2mm]

        additive & theoretical &
        $[516.94, 530.91]$ &
        $[499.82, 558.24]$ \\[2mm]

        additive & empirical &
        $[529, 544]$ &
        $[508, 567]$ \\[2mm]
    \end{tabular}
    \end{center}}}
    \caption[Empirically and theoretically found optimal debiasing ranges.]{Comparison of optimal ranges of students to offer vouchers to, obtained empirically and theoretically (based on our formulas), under two different budgets.}
    \label{tab:optimal-range-doe}
\end{table}

\section{Discussion}\label{sec:discussion}

The qualitative takeaways from our work speak to an ingrained systemic problem that limits access to opportunities---how can one understand the impact of these limits and systematically account for their effects? Indeed, resources available for meaningful interventions in an existing system are limited, and there is resistance to change in the form of lawsuits and pushback to changes in admissions policies. 
Thus, our focus is on understanding the impact of minimally invasive use of targeted resources at the pre-admissions stage, as opposed to changing the matching mechanism itself.

Our analysis highlights several qualitative properties using simple models of bias. There are multiple ways of designing interventions to reduce the impact of the observed performance differences. We show that the optimal debiasing interval is deterministically located around top average students. This maximally impacts measures of fairness. However, deterministic voucher distribution is not individually-fair and might create an incentive to underperform. Therefore, we propose randomized voucher distribution policies. These policies can counter some of the effects of bias, while also being incentive compatible and individually-fair (ex-ante\footnote{To make these policies fair ex-post, one would need to provide scholarships deterministically to all students above a certain threshold. Moreover, how to achieve ex-post individual-fairness in a resource-constrained system remains an open question.}). Our work therefore helps provide a mathematical basis for policymakers to make changes to allocation of public resources for the most good.

We next discuss some limitations of our model and our analysis. First, we note that randomization in voucher distribution can itself create a perception of unfairness ex-post. One way to overcome such perceptions may be to start voucher-programs in carefully selected schools with low economic index and give vouchers to all students at these schools, so that the priority overall is given to average students in the city-wide system. We leave the study of such interventions to future work. Next, we partition students using the definition of {\it disadvantaged} students as provided by the Department of Education of New York City. Should this definition change, one would need to recalibrate the ``bias" in the system, and apply our analysis accordingly. Also observe that our model assumes a multiplicative bias that is only group-dependent, and that a student to whom a voucher is allocated is afterwards perceived at their full potential. While we relax some of these assumptions in Section~\ref{sec:stoch-effect-vouchers}, a full theoretical analysis of more general model is an interesting open problem.

Finally, our analysis leads to open questions such as theoretically optimal interventions under other heterogeneous student preferences and qualitative analyses when student potentials are not Pareto-distributed.  

\paragraph{Acknowledgments.} The authors are deeply indebted to the editors and the reviewers for the many comments and suggestions on an earlier version of the manuscript.

\bibliographystyle{apalike}
\bibliography{refs}

\newpage

\appendix

\section{Discussion on discrete versus continuous models}\label{appx-sec:discrete-vs-cont}

Traditionally, matching markets are assumed to be discrete~\citep{gale1962college,roth1992two}. In recent years, however, there has been an interest in models where one or both sides of the markets are continuous~\citep{nick, azevedo2016supply}. This is justified by the fact that in many applications, markets are  large, hence predictions in continuous markets often translate with a good degree of accuracy to discrete ones. Moreover, continuous markets are often analytically more tractable than discrete ones (see, again,~\cite{nick, azevedo2016supply}). Our case is no exception: the continuous model allows us to deduce precise mathematical formulae, while we show through experiments that those formulae are a good approximation to the discrete case. 
We remark that the goal of this study is not to provide a mechanism to admits students to schools, for which the assumption of all rankings of schools as well as of students being the same would be too simplistic. On the contrary, as we want to understand the impact of bias at a macroscopic level, we believe our approximation to be meaningful and useful, since as in our model, any reasonable mechanism would output the same assignment.

\section{Impact on Schools}
\label{ec-sec:schools}

This appendix explores the school's perspective: the impact of bias on \emph{utility} (quality of accepted students) and \emph{diversity} for schools, as well as school-driven interventions such as interviews. In the notation of the two-group model in Section~\ref{sec:market}, we define the \emph{utility} $u_\gamma(s)$ of a school $s$ under matching $\gamma\in\set{\mu,\hat\mu}$, as

\begin{equation}  \label{eq:utility}
    u_\gamma(s):= \int_{\theta\in \gamma^{-1}(s)\cap G_1} Z(\theta) dF_1(\hat Z(\theta)) + \int_{\theta\in \gamma^{-1}(s)\cap G_2} Z(\theta) dF_2(\hat Z(\theta)).
\end{equation}

That is, the utility of a school is the average true potential of admitted students. We discuss first the impact of bias on the average true potential of students accepted by a school. Let $s\in [0,1]$ denote the school that is ranked in the $s\times 100\%$ position among the continuous range of schools. As the next proposition shows, the impact on the utilities of schools is negligible for all schools other than the lowest ranked schools. This is because for each school, although the average potential of assigned $G_1$ students is lower than it should be, its assigned $G_2$ students have much higher true potentials. And thus, the toll on the utility due to unqualified $G_1$ students is partially canceled out by the overqualified $G_2$ students and the net effect is minimal. On the other hand, some lower ranked schools that only admit $G_2$ students fare better in the biased setting (since they admit over-qualified $G_2$ candidates).

\begin{proposition} \label{prop:mistreat-schools}
	For school $s$, its utility under the unbiased (resp.~biased) models are respectively 
	$$u_{\mu}(s)= s^{-\frac{1}{\alpha}} \quad \hbox{ and } \quad u_{\hat\mu}(s)= \begin{cases}  
	    \displaystyle \frac{1-p+p\ba}{1-p+p\beta^{\alpha+1}} \left(\frac{s}{1-p+p\ba} \right)^{- \frac{1}{\alpha}} & \text{ if } s\le 1-p+p\ba, \\[1mm]
	    \displaystyle \left(\frac{s-(1-p)}{p} \right)^{- \frac{1}{\alpha}} & \text{ if } s> 1-p+p\ba.
	\end{cases}$$
\end{proposition}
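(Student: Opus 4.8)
The plan is to compute both utilities by fixing a school rank $s$, identifying exactly which students $\mu$ (resp.\ $\hat\mu$) sends to $s$, and then averaging their true potentials against the induced mixture of the two groups. The unbiased case is immediate: under $\mu$ students are ranked by true potential, and since $Z(\Theta)\sim\text{Pareto}(1,\alpha)$ the rank of a student of potential $z$ is $z^{-\alpha}$ irrespective of group; hence every student in $\mu^{-1}(s)$ shares the same true potential $z=s^{-1/\alpha}$, giving $u_\mu(s)=s^{-1/\alpha}$ directly from \eqref{eq:utility}.

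For the biased matching I would first invert \eqref{eq:match-act}. Writing $w$ for the perceived-potential level of the students sitting at school $s$, the mass perceived above $w$ is $(1-p)\F_1(w\vee 1)+p\F_2(w)$, and setting this equal to $s$ pins down $w=w(s)$. The qualitative split occurs as $w(s)$ crosses $1$, the left endpoint of $G_1$'s support. When $w\ge 1$ both groups are present and $s=(1-p+\ba p)\,w^{-\alpha}$, which is exactly the range $s\le 1-p+p\ba$; here school $s$ pools $G_1$ students of true potential $w$ with $G_2$ students of true potential $\invb w$. When $w<1$ we have $\F_1(w\vee1)=1$, so $s=(1-p)+p\ba\,w^{-\alpha}$, valid for $s>1-p+p\ba$, and the school is entirely $G_2$.

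The heart of the argument is the mixed regime. There I would evaluate \eqref{eq:utility} as a density ratio — effectively the conditional average of $Z$ over the students mapped to $s$ — by differentiating the cumulative true potential against the cumulative mass. The local masses of the two groups at perceived level $w$ are controlled by $dF_1(w)=\alpha w^{-\alpha-1}dw$ and $dF_2(w)=\alpha\ba w^{-\alpha-1}dw$, scaled by $1-p$ and $p$, and each must be paired with that group's \emph{true} potential ($w$ for $G_1$, $\invb w$ for $G_2$) rather than its perceived value. This pairing is the main obstacle: it is precisely the conversion from perceived to true potentials that attaches different powers of $\beta$ to the two groups, and it is what separates the numerator coefficient $1-p+p\ba$ from the denominator coefficient $1-p+p\beta^{\alpha+1}$, so the bookkeeping of which $\beta$-power goes with which group must be done carefully. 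Substituting $w=(s/(1-p+p\ba))^{-1/\alpha}$ should then deliver the first branch.

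Finally, for $s>1-p+p\ba$ the school is pure $G_2$, so its utility is simply $\invb w$; solving $s=(1-p)+p\ba\,w^{-\alpha}$ for $w$ gives $\invb w=\big((s-(1-p))/p\big)^{-1/\alpha}$, matching the second branch. I would close by checking that the two branches agree at the breakpoint $s=1-p+p\ba$, confirming that the piecewise expression is continuous and hence well posed.
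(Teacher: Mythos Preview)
Your plan coincides with the paper's: determine the perceived-potential cutoff $w=d(s)$, split on whether $w\ge 1$, and in the mixed regime compute the utility as the density-weighted average of the two groups' true potentials at that level. The paper phrases the weighting step as a Bayes computation, asserting that the probability a student with perceived potential at least $1$ belongs to $G_1$ is $\tfrac{1-p}{1-p+p\beta^{\alpha+1}}$, and then writes $u_{\hat\mu}(s)=\tfrac{1-p}{1-p+p\beta^{\alpha+1}}\,d(s)+\tfrac{p\beta^{\alpha+1}}{1-p+p\beta^{\alpha+1}}\,\tfrac{d(s)}{\beta}$; the pure-$G_2$ branch and the unbiased case are handled exactly as you describe.

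Two steps of your plan will not go through as written. First, the densities you quote, $dF_1(w)=\alpha w^{-\alpha-1}dw$ and $dF_2(w)=\alpha\ba w^{-\alpha-1}dw$, give local mass ratio $(1-p):p\ba$, so the weighted average comes out to $\tfrac{1-p+p\beta^{\alpha-1}}{1-p+p\ba}\,w$, \emph{not} the $\tfrac{1-p+p\ba}{1-p+p\beta^{\alpha+1}}\,w$ of the proposition. To land on the paper's coefficient you need the ratio $(1-p):p\beta^{\alpha+1}$ that the paper asserts via Bayes, and nothing in your setup produces that extra factor of $\beta$; indeed the diversity formula in Appendix~\ref{sec:diversity} uses exactly the proportions $(1-p):p\ba$ consistent with your densities. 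You should therefore expect your careful bookkeeping to disagree with the stated first branch, and you will need to reconcile this rather than simply ``substitute and deliver.'' Second, your closing continuity check will fail, and it should: at $s=1-p+p\ba$ the second branch equals $1/\beta$, while the first branch is strictly smaller under either coefficient. This is genuine behavior of $u_{\hat\mu}$---just past the breakpoint the $G_1$ students (true potential $1$) drop out of the mix and only $G_2$ students of true potential $1/\beta$ remain---so the two branches are not supposed to agree, and matching them is not a valid sanity check.
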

The key idea in the proof is to first compute the \emph{cutoffs} at each school for each of the two groups, that is, the minimum \emph{perceived} potential needed for a student to be matched to a given school. Once these are known, using Bayes' rule, we deduce the minimum \emph{real} potential needed by students of each group to attend the school. From the latter, we can immediately compute the average utility of each school.

\smallskip 

\begin{mproof}{Proposition~\ref{prop:mistreat-schools}.}
In order for a student $\theta$ to be assigned to a school that is at least as good as $s$, their perceived potential $\hat Z(\theta)$ needs to be high enough to satisfy $(1-p)\bar F_1(1\vee \hat Z(\theta)) + p\bar F_2(\hat Z(\theta))\le s$. That is, we need
\begin{equation*} \label{eq:cutoff}
  \hat Z(\theta) \ge d(s):=\begin{cases}
    \displaystyle \left( \frac{s}{1-p+p\ba} \right)^{- \frac{1}{\alpha}}& \text{if } s\le 1-p+p\ba,\\[1mm]
    \displaystyle \beta \left( \frac{s-(1-p)}{p} \right)^{- \frac{1}{\alpha}}& \text{if } s> 1-p+p\ba.
  \end{cases}
\end{equation*}

We call $d(s)$ the \emph{cutoff} for school $s$. With the cutoffs, we can compute the utilities of schools. We start with the formula for $u_{\hat\mu}(s)$. First note that by Bayes rule, the probability that a given student with perceived potential $\hat Z( \theta)\ge 1$ belongs to $G_1$ is $\frac{1-p}{1-p+p\beta^{\alpha+1}}$. Using Equation~\eqref{eq:match-act}, observe that the $G_2$ student whose perceived potential is $1$ (i.e., true potential is $\invb$) is matched to school $1-p+p\ba$. Thus, if $s\ge 1-p+p\ba$, $s$ is only assigned with $G_2$ students. Therefore, when $s\le 1-p+p\ba$, 
$$u_{\hat\mu}(s)= \frac{1-p}{1-p +p\beta^{\alpha+1}} d(s)+ \frac{p\beta^{\alpha +1}}{1-p +p\beta^{\alpha+1}} \frac{d(s)}{\beta} = \frac{1-p+ p\ba}{1-p+p \beta^{\alpha+1}} \left( \frac{s}{1-p+p\ba} \right)^{- \frac{1}{\alpha}}.$$
And when $s> 1-p+p\ba$, we have
$$u_{\hat\mu}(s) = d(s)/\beta = \left( \frac{s-(1-p)}{p} \right)^{- \frac{1}{\alpha}}.$$
One the other hand, when there is no bias against $G_2$ students, we simply have $u_{\mu}(s)= s^{-\frac{1}{\alpha}}$. $\Box$ 
\end{mproof}

As one readily observes from Proposition~\ref{prop:mistreat-schools}, the negative impact of bias on schools' utility is negligible. Hence, from an operational perspective, it may be hard to convince schools to autonomously put in place mechanisms to alleviate the effect of bias given the limited impact on them.

\begin{figure}[t]
	\centering
	\includegraphics[width=.55\textwidth]{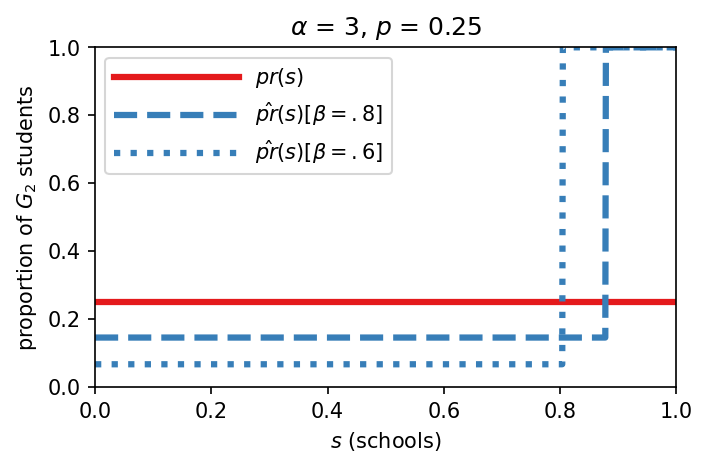}
    \caption{Proportion of $G_2$ students in higher ranked schools decreases significantly in the biased setting.} \label{ec-fig:proportion-minority}
\end{figure} 

Let $pr(s)$ (resp.~$\hat{pr}(s)$) be the proportion of $G_2$ students assigned to school $s$ when there is no bias (resp.~there is bias) against $G_2$ students. Since the distribution of potentials is the same for both $G_1$ and $G_2$ students, it is immediate that $pr(s)=p$ when there is no bias. 

\begin{proposition} \label{prop:diversity}
	Without bias, we have $pr(s)=p$. Under the biased setting, we have 
	$$\hat{pr}(s) =\begin{cases}
	    \displaystyle \frac{p\ba}{1-p+p\ba} & \text{ if } s\le 1-p+p\ba, \\
	    1 & \text{ if } s> 1-p+p\ba.
	\end{cases}$$
\end{proposition}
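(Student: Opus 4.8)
The plan is to reason locally in the space of perceived potentials, since in the continuum model each school $s$ is matched to the measure-zero set of students whose perceived potential equals the cutoff $d(s)$ derived in the proof of Proposition~\ref{prop:mistreat-schools}. The proportion of $G_2$ students at a school is therefore not a ratio of masses but a ratio of \emph{densities} at that cutoff, and the whole argument reduces to comparing the two groups' contributions to an infinitesimal band of perceived potentials.

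For the unbiased case, I would observe that under $\mu$ both groups are ranked by their true potential, and the true potentials of $G_1$ and $G_2$ students follow the same distribution $\text{Pareto}(1,\alpha)$; only their population weights $1-p$ and $p$ differ. Consequently, at every true-potential level the local mass of students splits in the ratio $(1-p):p$ between $G_1$ and $G_2$, and since school $s$ is matched to a single true-potential value, this ratio is exactly the composition of its cohort. Hence $pr(s)=\frac{p}{(1-p)+p}=p$ for all $s$.

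For the biased case I would take the cutoff $d(s)$ from Proposition~\ref{prop:mistreat-schools} and consider the band of schools $[s,s+ds]$, which admits students with perceived potentials in an infinitesimal interval about $d(s)$. The $G_1$ and $G_2$ masses in this band are proportional to $(1-p)\,f_1(d(s))$ and $p\,f_2(d(s))$ respectively, so that
$$\hat{pr}(s)=\frac{p\,f_2(d(s))}{(1-p)\,f_1(d(s))+p\,f_2(d(s))},$$
where $f_1(t)=\alpha t^{-\alpha-1}$ and $f_2(t)=\alpha\ba\,t^{-\alpha-1}$ are the densities obtained by differentiating $F_1$ and $F_2$. The common factor $\alpha\,d(s)^{-\alpha-1}$ cancels, leaving $\hat{pr}(s)=\frac{p\ba}{1-p+p\ba}$, independent of $s$, exactly as claimed on the first branch.

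The remaining, and most delicate, step is to identify where this formula applies and why it breaks down. Using the cutoff formula, $d(s)\ge 1$ precisely when $s\le 1-p+p\ba$, and it is only on this range that $G_1$ students are present, so the density ratio above is valid there. For $s>1-p+p\ba$ we have $d(s)<1$; since the perceived potentials of $G_1$ students are supported on $[1,\infty)$ while those of $G_2$ students are supported on $[\beta,\infty)$, there are no $G_1$ students below perceived potential $1$, and the corresponding schools admit only $G_2$ students, giving $\hat{pr}(s)=1$. The main obstacle is precisely this bookkeeping of supports, together with the measure-zero subtlety that forces a density rather than a mass computation; once the two supports and the location of the threshold $s=1-p+p\ba$ are handled carefully, the two cases of the statement follow directly.
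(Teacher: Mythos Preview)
Your proof is correct and follows essentially the same approach as the paper: both argue via the ratio of group densities at the cutoff $d(s)$ (the paper phrases this as a Bayes-rule observation inside the proof of Proposition~\ref{prop:mistreat-schools}), and both identify the threshold $s=1-p+p\ba$ as the point where $d(s)=1$ and hence where $G_1$ students cease to appear. Your write-up is simply more explicit about the density computation and the support bookkeeping than the paper's two-line proof.
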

\proof{}
The formula for $\hat{pr}(s)$ follows from the analysis of utility of schools in Proposition~\ref{prop:mistreat-schools}.
\endproof

A visual comparison of $pr(s)$ and $\hat{pr}(s)$ can be found in Figure~\ref{ec-fig:proportion-minority} for different values of $\beta$ and $p$. In particular, we show that the proportion of $G_2$ students in higher ranked schools decreases significantly in the biased setting. 

\section{Proof of Theorem~\ref{thm:optimal-range-mm} and related facts}\label{ec-sec:proofs:or-mm}

\vspace{.8em}

\subsection{Technical discussion}

The main idea of the proof is to first assume that the set $T$ forms a connected set (i.e., a closed interval). Then, we can express $mm(\mu_T)$ as a function of the endpoints of $T$ and work out the minimizing interval. We next drop the assumption that $T$ is connected and show that the optimal set of students to debias remains the same. The analysis we give is actually more general, and presents results under which vouchers improve the mistreatment of students \emph{lexicographically}. Interestingly, it also shows that, if vouchers are not distributed carefully, one may actually {\it worsen the most mistreated students}.

\subsection{A more general approach}~\label{sec:debias}
The analysis we give is actually leads to a more general statement than Theorem~\ref{thm:optimal-range-mm}, and has the goal of investigating conditions under which giving vouchers can improve over the status quo. More formally, for bounded functions $f,g : G_2 \rightarrow \R$, we write $f\succ g$ if we can partition $G_2$ in two sets $S,S'$ (with possibly $S'=\emptyset$) so that $f(\theta)=g(\theta)$ for $\theta \in S'$ and $\sup_{\theta \in S} f(\theta) > \sup_{\theta \in S} g(\theta)$. Note that $\succ$ is transitive and antisymmetric, and can be interpreted as a continuous equivalent of the classical lexicographic ordering for discrete vectors. In particular, if we let $f=\gamma-\mu$ and $g=\gamma'-\mu$ for matchings $\gamma, \gamma'$, then $\sup_{\theta \in G_2} (\gamma-\mu)(\theta)>\sup_{\theta \in G_2} (\gamma'-\mu)(\theta)$ implies $f\succ g$ (taking $S=G_2$).
Now suppose we debias student in $T=[Z_1,Z_2]$ for some $T \in \mathcal{T}(\hat c)$, and let $f:= \hat\mu -\mu$, $g:= \mu_T -\mu$. Table~\ref{tab:condition-correction} provides conditions under which $f\succ g$ (i.e., intervention reduces the maximum mistreated experienced by $G_2$ students). In particular it shows that for certain combinations of the data and the choice of $Z_1$ and $Z_2$, giving vouchers may actually lead to worse (according to $\succ$) matchings. One can check that under assumption $p<1-\ba$, all conditions given in Table~\ref{tab:condition-correction} for different cases are satisfied.

\begin{table}[ht]
	\centering
	\begin{tabular}{l|l|l}
		\multicolumn{1}{c}{CASE} & \multicolumn{1}{c}{subcase} & \multicolumn{1}{c}{condition for $\hat\mu-\mu \succ \mu_T-\mu$} \\
		\hline
		\multirow{2}{*}{I. $\beta Z_2\ge Z_1$} & 1. $1\le \beta Z_1$ & $\displaystyle p<1- \left (\frac{Z_1}{Z_2} \right)^\alpha$ \\[.4cm]
		& 2. $\beta Z_1\le 1\le \beta Z_2$ & $\displaystyle p<1- \left( \frac{1}{\beta Z_2} \right)^\alpha$ \\[.4cm]
		\hline
		\multirow{3}{*}{II. $\beta Z_2\le Z_1$} & 1. $1\le \beta Z_1$ & $\displaystyle p<1-\ba$\\[.2cm]
		& 2. $\beta Z_1\le 1\le \beta Z_2$ & $\displaystyle p \left( \left(\frac{1}{Z_1} \right)^\alpha - \left( \frac{1}{Z_2} \right)^\alpha \right) < (1-p)  \left( \invba-1 \right) \left( \ba- \left( \frac{1}{Z_2} \right)^\alpha \right)$\\[.4cm]
		& 3. $\beta Z_2\le 1$ & Not possible: $g\succ f$ in this case. \\[.2cm]
		\hline
	\end{tabular}
	\vspace{.8em}
	\caption{Sufficient conditions for $\hat\mu-\mu \succ \mu_T-\mu$ by cases, where $T=[Z_1, Z_2]$. Each strict inequality, when replaced with its non-strict counterpart, gives instead a necessary condition.} \label{tab:condition-correction}
\end{table}

In this first part of the proof, we proceed as follows. First, we assume that $T\in \mathcal{T}^c(\hat c)$. That is, we assume $T=[Z_1, Z_2]$ with extreme points $Z_1 < Z_2$. For simplicity, we let $\tilde \mu$ denote $\mu_T$. We then compare $f:=\hat \mu- \mu$ and $g:=\tilde \mu- \mu$ using the relation $\succ$.

Note that, if we let $S$ be the set of students in $G_2$ with potential in $[Z_1, Z_2/\beta]$ and $S':=G_2\setminus S$, we have $f(\theta)=g(\theta)$ for $\theta \in S'$. That is, only $G_2$ students whose true potential lies in interval $[Z_1, Z_2/\beta]$ are affected by the intervention. Hence, $\sup_{\theta \in S} f > \sup_{\theta \in S} g $ if and only if $f\succ g$. 
We divide the analysis in the following two major cases: the first case is when $\beta Z_2\ge Z_1$ (i.e., when $[\beta Z_1, \beta Z_2]$ and $[Z_1, Z_2]$ overlap) and the second case is when $\beta Z_2\le Z_1$. For both major cases, we will consider two subcases: $\beta Z_1\ge 1$, $\beta Z_1\le 1\le \beta Z_2$. And for the second major case, we also need to consider the subcase where $\beta Z_2\le 1$. The results for all cases are summarized in the Table~\ref{tab:condition-correction}.

\begin{observation} \label{obs:case.2}
	If there is an interval $[Z_1,Z_2]$ that is of either case I.2 or case II.2 such that $\mu_{[Z_1, Z_2]}-\mu \prec \hat\mu-\mu$ with $S=G_2$, then the optimal range must be of case I.2 or case II.2. This is because for any interval $[Z_1', Z_2']$ that is not of case I.2 or case II.2, we have 
	$$\sup_{\theta \in \Theta} \{(\mu_{[Z_1', Z_2']} -\mu)\}\ge \sup_{\theta \in \Theta} \{\hat\mu- \mu\} > \sup_{\theta \in \Theta} \{\mu_{[Z_1,Z_2]} -\mu\}.$$
\end{observation}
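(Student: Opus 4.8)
The plan is to exploit the fact, established in Proposition~\ref{prop:mistreat-students}, that under the status quo $\hat\mu$ the most mistreated $G_2$ student is the one with true potential $1/\beta$ (equivalently, with perceived potential exactly $1$), whose displacement equals $mm(\hat\mu)=(1-p)(1-\ba)$; since $G_1$ students all have nonpositive displacement, this is in fact $\sup_{\theta\in\Theta}(\hat\mu-\mu)(\theta)$. First I would record the elementary observation that cases I.2 and II.2 are exactly the configurations in which the debiasing interval contains this critical student: the defining condition $\beta Z_1\le 1\le \beta Z_2$ is equivalent to $Z_1\le 1/\beta\le Z_2$. So ``the optimal range is of case I.2 or II.2'' is just the statement ``the optimal range contains $1/\beta$''.

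Next, for an arbitrary interval $[Z_1',Z_2']$ that is \emph{not} of case I.2 or II.2, I would argue that $1/\beta$ lies outside its interior, so the student at $1/\beta$ is not debiased and keeps perceived potential $1$. The core step is a monotonicity argument applied to this fixed student: debiasing only raises the perceived potentials of the targeted $G_2$ students (mapping each targeted true potential $Z$ from $\beta Z$ up to $Z$), so the mass of students ranked strictly above a student whose perceived potential is held fixed can only grow. Hence $\mu_{[Z_1',Z_2']}(1/\beta)\ge \hat\mu(1/\beta)$, and therefore
$$\sup_{\theta\in\Theta}(\mu_{[Z_1',Z_2']}-\mu)(\theta)\ \ge\ (\mu_{[Z_1',Z_2']}-\mu)(1/\beta)\ \ge\ (\hat\mu-\mu)(1/\beta)\ =\ \sup_{\theta\in\Theta}(\hat\mu-\mu)(\theta);$$
that is, an interval missing the most-mistreated student cannot beat the status quo.

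Finally I would combine this with the hypothesis. By assumption there is an interval $[Z_1,Z_2]$ of case I.2 or II.2 with $\mu_{[Z_1,Z_2]}-\mu\prec \hat\mu-\mu$ (taking $S=G_2$), i.e.\ $\sup_\theta(\mu_{[Z_1,Z_2]}-\mu)<\sup_\theta(\hat\mu-\mu)$. Chaining the two estimates gives, for every non-(I.2,II.2) interval $[Z_1',Z_2']$,
$$\sup_\theta(\mu_{[Z_1',Z_2']}-\mu)\ \ge\ \sup_\theta(\hat\mu-\mu)\ >\ \sup_\theta(\mu_{[Z_1,Z_2]}-\mu),$$
so no such interval minimizes $mm(\cdot)$, and the optimum must lie in case I.2 or II.2.

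The step I expect to require the most care is the monotonicity claim that debiasing other students weakly increases the rank, hence the mistreatment, of the fixed student at $1/\beta$. I would make this precise directly from the definition of $\mu_T$ as a tail-count of perceived potentials: raising a targeted perceived value from $\beta Z$ to $Z$ can move a student from below the threshold $1$ to above it but never in the reverse direction, so the measure of students with perceived potential exceeding $1$ is nondecreasing. I would also note that the boundary configurations $Z_1'=1/\beta$ or $Z_2'=1/\beta$ debias only a measure-zero set of students and so do not affect $\mu_{[Z_1',Z_2']}(1/\beta)$, leaving the displayed inequality intact.
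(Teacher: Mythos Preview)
Your argument is correct and rests on the same key fact the paper uses: cases I.2 and II.2 are precisely those with $Z_1\le 1/\beta\le Z_2$, so any interval outside these cases fails to debias the globally most-mistreated $G_2$ student at potential $1/\beta$, whose displacement $(1-p)(1-\ba)$ therefore persists (or, in case II.3, grows). The paper's justification is terse---it simply writes the inequality chain and implicitly defers to the preceding case-by-case computations of $\sup_S(\tilde\mu-\mu)$, whereas you supply a self-contained monotonicity argument (debiasing can only raise perceived potentials, hence weakly increases the rank of any untouched student). This is a cleaner and more portable route to the first inequality, and it sidesteps the explicit formulas in cases I.1, II.1, II.3.

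One small wording issue: your sentence ``the boundary configurations $Z_1'=1/\beta$ or $Z_2'=1/\beta$ debias only a measure-zero set of students'' is not literally correct---the debiased interval still has positive mass. What you mean (and what makes the argument go through) is that including or excluding the single endpoint $1/\beta$ changes the debiased set by measure zero, so $\mu_{[Z_1',Z_2']}(1/\beta)$ is unaffected. In any event, since the case definitions use non-strict inequalities, these boundary intervals already belong to case I.2 or II.2 and are therefore excluded from the hypothesis ``not of case I.2 or II.2'', so the point is moot.
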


As it turns out, indeed, the optimal range will be either case I.2 or case II.2, and exactly which one the optimal solution is depends on the amount of resources, i.e., the value of $\hat c$.

We now show the first half of Theorem~\ref{thm:optimal-range-mm}, i.e., we assume $\hat c\ge \frac{(1-p) (1 -\ba)}{1-p+1- \ba}$. The proof steps are outlined below. Each step can be shown by simple algebra and is thus omitted. 

\begin{itemize}
	\item[(1).] We first show that $[Z_1^*, Z_2^*]$ is of case I.2. That is, we  show $\beta Z_2^*\ge Z_1^*$ and $Z_1^*\le \invb \le Z_2^*$.
\end{itemize}

By writing out the formula for $\mu_{[Z_1, Z_2]}-\mu$, one can see that for an interval $[Z_1, Z_2]$ of case I.2 or case II.2, $\mu_{[Z_1, Z_2]}-\mu$ increases on $[1,Z_1]$, deceases on $[Z_2, \infty]$, and it is non-positive on $[Z_1, Z_2]$. This means $\sup_{\theta \in \Theta} \{\mu_{[Z_1, Z_2]}-\mu\}$ is achieved either at $Z_1$ or $Z_2$.

\begin{itemize}
	\item[(2).] Next, we show that $[Z_1^*, Z_2^*]$ is an \emph{exact} range, that is, $(\frac{1}{Z_1^*})^\alpha- (\frac{1}{Z_2^*})^\alpha = \hat c$. Moreover, let $\theta_1^*$ and $\theta_2^*$ be the $G_2$ students whose potentials are $Z_1^*$ and $Z_2^*$ respectively. Then, $(\mu_{[Z_1, Z_2]}-\mu)(\theta_1^*) = (\mu_{[Z_1, Z_2]}-\mu)(\theta_2^*)$ and thus, they are both equal to $\sup_{\theta \in \Theta} \{\mu_{[Z_1, Z_2]}-\mu\}$.
\end{itemize}

Together with the assumption $p<1-\ba$, we have $\sup_{\theta \in \Theta} \{\mu_{[Z_1^*, Z_2^*]}  -\mu\} \leq \sup_{\theta \in \Theta} \{\hat\mu-\mu\}$. Thus, due to Observation~\ref{obs:case.2}, it is sufficient to compare $[Z_1^*, Z_2^*]$ only with intervals $[Z_1, Z_2]$ of case I.2 and case II.2 (i.e, when $\beta Z_1\le 1\le \beta Z_2$). Since $[Z_1^*, Z_2^*]$ is exact, we must either have $Z_1> Z_1^*$ or $Z_2< Z_2^*$.

\begin{itemize}
	\item[(3).] Lastly, we show that for any other feasible range $[Z_1,Z_2]$ of case I.2 or case II.2, we must have $\sup_{\theta \in \Theta} \{\mu_{[Z_1, Z_2]} -\mu\} > \sup_{\theta \in \Theta} \{\mu_{[Z_1^*, Z_2^*]} -\mu\}$. Let $\theta_1$ and $\theta_2$ be the $G_2$ students whose potentials are $Z_1$ and $Z_2$. It suffices to show
	\begin{enumerate}
		\item[i).] if $Z_1> Z_1^*$, then $(\mu_{[Z_1, Z_2]}-\mu) (\theta_1) > (\mu_{[Z_1^*, Z_2^*]} -\mu) (\theta_1^*)$;
		\item[ii).] if $Z_2< Z_2^*$, then $(\mu_{[Z_1, Z_2]}-\mu) (\theta_2) > (\mu_{[Z_1^*, Z_2^*]} -\mu) (\theta_2^*)$.
	\end{enumerate}
\end{itemize}	

For the second half of the theorem, we will follow similar steps and reasoning, outlined below.

\begin{itemize}
	\item[(1).] We first show that $[Z_1^*, Z_2^*]$ is of case II.2. That is to show $\beta Z_2^*\le Z_1^*$ and $Z_1^*\le \invb \le Z_2^*$.
	\item[(2).] We check that $[Z_1^*, Z_2^*]$ is an exact range. And let $\theta_1^*$ and $\theta_2^*$ be the $G_2$ students whose potentials are $Z_1^*$ and $Z_2^*$ respectively, we want to show that $(\mu_{[Z_1, Z_2]}-\mu)(\theta_1^*) = (\mu_{[Z_1, Z_2]}-\mu)(\theta_2^*)$, which implies that both are $\sup_{\theta \in \Theta} \{\mu_{[Z_1^*, Z_2^*]}-\mu\}$.
	\item[(3).] We show $\mu_{[Z_1^*, Z_2^*]}-\mu \prec \hat\mu-\mu$, which, unlike in the previous case, is not immediate from the assumption $p<1-\ba$.
\end{itemize}

Again, due to Observation~\ref{obs:case.2}, it is sufficient to compare $[Z_1^*, Z_2^*]$ only with regions $[Z_1, Z_2]$ of case I.2 and case II.2 (i.e, when $\beta Z_1\le 1\le \beta Z_2$).
\begin{itemize}
	\item[(4).] As before, we will show two cases, which is enough because $[Z_1^*, Z_2^*]$ is exact and one of the two cases is bound to happen. Again, let $\theta_1$ and $\theta_2$ be the $G_2$ students whose potentials are $Z_1$ and $Z_2$ respectively. We want to show
	\begin{enumerate}
		\item[i).] if $Z_1> Z_1^*$, then $(\mu_{[Z_1, Z_2]}-\mu) (\theta_1) > (\mu_{[Z_1^*, Z_2^*]} -\mu) (\theta_1^*)$,
		\item[ii).] otherwise, we must have $Z_2< Z_2^*$, and then $(\mu_{[Z_1, Z_2]}-\mu) (\theta_2) > (\mu_{[Z_1^*, Z_2^*]} -\mu) (\theta_2^*)$.
	\end{enumerate}
\end{itemize}	

Now let $T^*\in \mathcal{T}(\hat c)$ be the optimal solution without the restriction that sets in $\mathcal{T}(\hat c)$ are connected. We will show that $T^*$ differs from $[Z_1^*, Z_2^*]$ in a set of measure zero. First, in order to have $\sup (\mu_{T^*} -\mu)\le \sup(\mu_{[Z_1^*, Z_2^*]} -\mu)=:s$, in $T^*$, we must debias all students $\theta$ whose mistreatment $(\hat\mu-\mu)(\theta)$ is greater than $s$. That is, we must have $T_1^*:= [Z_1^*, Z^{(1)}]\subseteq T^*$, where $Z^{(1)}:=Z(\theta^{(1)})\ge 1/\beta$ and $(\hat\mu-\mu) (\theta^{(1)}) =s$. There is a $G_2$ student $\theta^{(2)}$ such that $Z^{(2)}:= Z(\theta^{(2)}) >Z^{(1)}$ and $(\mu_{T_1^*}-\mu) (\theta^{(2)}) = s$. We have moreover that $(\mu_{T_1^*}-\mu)(\theta) \ge s$ for all $\theta\in G_2$ such that $Z(\theta)\in [Z^{(1)}, Z^{(2)}]$. Thus, we must also have $[Z^{(1)}, Z^{(2)}]\in T^*$. Let $T_2^*:=[Z_1^*, Z^{(2)}]$. We can repeat the argument and observe that there is a $G_2$ student $\theta^{(3)}$ such that $Z^{(3)}:= Z(\theta^{(3)}) > Z^{(2)}$ and $(\mu_{T_2^*}-\mu) (\theta)\ge s$ for $\theta\in G_2$ such that $Z(\theta)\in [Z^{(2)}, Z^{(3)}]$ and conclude that $T_3^*:=[Z_1^*, Z^{(3)}]$ must be contained in $T^*$. Continuously applying the same argument, we have  $\lim_{n\rightarrow\infty} Z(\theta^{(n)}) = Z_2^*$ and thus the claim follows.

\section{Proof of Theorem~\ref{thm:optimal-range-pauc} and related facts} \label{ec-sec:proofs:or-pauc}

Assume $T=[Z_1, Z_2]$ is the range of true potentials of $G_2$ students we want to debias. For simplicity, as in previous sections, let $\tilde\mu$ denote $\mu_T$. In order to obtain the minimizer of $\sigma(\tilde\mu-\mu)$, first, we want to compute $\sigma(\tilde\mu-\mu)$ for each of the cases in Table~\ref{tab:condition-correction}. 

For $1\leq t_1\leq t_2 \in \R\cup\{+\infty\}$, let $\sigma_{t_1}^{t_2}(f):= \int_{t_1}^{t_2} \max(f(t),0) dF_1(t)$ for any function $f:[1,\infty]\rightarrow [0,1]$. When $t_1=1$ and $t_2=\infty$, we simply write $\sigma(f)$, which is consistent with previous notations. Note that with $\sigma(\hat\mu-\mu)$ as a reference, it actually suffices to compute only $\sigma_{Z_1}^{Z_2/\beta} (\tilde\mu-\mu)$, because minimizing $\sigma(\tilde\mu-\mu)$ is equivalent to maximizing $\sigma_{Z_1}^{Z_2/\beta} (\hat\mu-\mu)- \sigma_{Z_1}^{Z_2/\beta} (\tilde\mu-\mu)$ since $(\hat\mu-\mu)(\theta)= (\tilde\mu-\mu)(\theta)$ for all $\theta\in G_2$ with $Z(\theta)\notin [Z_1, Z_2/\beta]$. 

For each case, we give an explicit formula for $\sigma_{Z_1}^{Z_2/\beta} (\hat\mu-\mu)- \sigma_{Z_1}^{Z_2/\beta} (\tilde\mu-\mu)$. These formulae can be computed via simply integration, and are thus omitted. In addition, we analyze how this value changes (increase or decrease) with respect to $Z_1$ and $Z_2$.

\noindent{\bf CASE I -- Subcase 1.} After integrating, we have
\begin{align*}
	\sigma_{Z_1}^{Z_2/\beta} (\hat\mu-\mu)- \sigma_{Z_1}^{Z_2/\beta} (\tilde\mu-\mu) &= \frac{(1-p) \left( \invba-1 \right)}{2} \left( \frac{1}{Z_1} \right)^{2\alpha} +\frac{p-p\ba-\invba+1}{2} \left( \frac{1}{Z_2} \right)^{2\alpha}.
\end{align*}

Now, to analyze how this quantity changes with $Z_1$ and $Z_2$, we first simplify some of the terms, which will also be used in later sections. Let $x=( \frac{1}{Z_2} )^\alpha\in [0,1]$ and let $( \frac{1}{Z_1} )^\alpha = c+x\in [0,1]$ for some $c\le \hat c$. Also, let $g(x,c):= \sigma_{Z_1}^{Z_2/\beta} (\hat\mu-\mu)- \sigma_{Z_1}^{Z_2/\beta} (\tilde\mu-\mu)$. Then,
$$g(x,c)= \frac{(1-p) \left( \invba-1 \right)}{2}(c+x)^2 +\frac{p-p\ba-\invba+1}{2} x^2.$$
First order conditions (FOC) show that $g(x,c)$ increases as $x$ increases (or equivalently, as $Z_2$ decreases) and as $c$ increases (meaning that the constraint $( \frac{1}{Z_1})^\alpha -( \frac{1}{Z_2} )^\alpha \le\hat c$ is effectively $( \frac{1}{Z_1} )^\alpha -( \frac{1}{Z_2} )^\alpha =\hat c$). 

\noindent{\bf CASE I -- Subcase 2.} In this case, we have
\begin{align*}
	\sigma_{Z_1}^{Z_2/\beta} (\hat\mu-\mu)- \sigma_{Z_1}^{Z_2/\beta} (\tilde\mu-\mu) &= -\frac{1}{2}(1-p)\ba +(1-p) \left( \left( \frac{1}{Z_1} \right)^{\alpha} -\frac{1}{2} \left( \frac{1}{Z_1} \right)^{2\alpha} \right) \\
	&+ \frac{p-p\ba-\invba+1}{2} \left( \frac{1}{Z_2} \right)^{2\alpha}.
\end{align*}

Now, for the analysis, similarly, write 
$$g(x,c) = \text{const} + (1-p)\left( (c+x)- \frac{1}{2} (c+x)^2 \right) + \frac{p-p\ba-\invba+1}{2} x^2.$$
Then, FOC shows that  $g(x,c)$ is an increasing function w.r.t. $c$, and it is an increasing function w.r.t. $x$ on $[0,h_\text{I}(c)]$ and is a decreasing function on $[h_\text{I}(c),1]$, where
$$h_\text{I}(c) = \frac{(1-p)(1-c)}{p\ba+\invba-2p}.$$

\noindent{\bf CASE II -- Subcase 1.} In this case, $\sigma_{Z_1}^{Z_2/\beta} (\hat\mu-\mu)- \sigma_{Z_1}^{Z_2/\beta} (\tilde\mu-\mu)$ equals to
{\small\begin{align*}
    \left( \frac{1}{Z_1} \right)^{2\alpha} \left( \frac{(1-p) \left( \invba-1 \right) +p\ba}{2} \right) - \left( \frac{1}{Z_2} \right)^{2\alpha} \left( \frac{(1-p) \left( \invba-1 \right) +p\ba}{2} \right) + p\left( \frac{1}{Z_2} \right)^{2\alpha} -p \left( \frac{1}{Z_1} \right)^{\alpha} \left( \frac{1}{Z_2} \right)^{\alpha}.
\end{align*}
}
Now, for the analysis, let $ A= [(1-p) (\invba-1) +p\ba]/2 \ge 0$. Then,
$$g(x,c) = A(c+x)^2 - Ax^2 + px^2 -p(c+x)(x).$$
Checking the FOCs, we have that $g(x,c)$ is an increasing function w.r.t. $c$ and w.r.t. $x$.

\noindent{\bf CASE II -- Subcase 2.} We have
{\small\begin{align*}
	\sigma_{Z_1}^{Z_2/\beta} (\hat\mu-\mu)- \sigma_{Z_1}^{Z_2/\beta} (\tilde\mu-\mu) = &-\frac{1}{2}(1-p)\ba + \left( \frac{1}{Z_1} \right)^{2\alpha} \bigg( \frac{-(1-p)+p\ba}{2} \bigg) +(1-p) \left( \frac{1}{Z_1} \right)^\alpha \\
	&- \left( \frac{1}{Z_2} \right)^{2\alpha} \bigg( \frac{(1-p) \left( \invba-1 \right) +p\ba}{2} \bigg)+ p\left( \frac{1}{Z_2} \right)^{2\alpha} -p \left( \frac{1}{Z_1} \right)^{\alpha} \left( \frac{1}{Z_2} \right)^{\alpha}.
\end{align*}
}

For the analysis, again let $A=[(1-p) (\invba-1) +p\ba]/2\ge 0$ and $B=[-(1-p)+p\ba]/2<0$. Then,
$$g(x,c) = \text{const} + B(c+x)^2 +(1-p)(c+x) - Ax^2 + px^2 -p(c+x)(x),$$
and for $c$, it is an increasing function; and for $x$, it is an increasing function on $[0,h_{\text{II}}(c)]$ and is a decreasing function on $[h_{\text{II}}(c),1]$, where
$$h_{\text{II}}(c)=\frac{(p\ba-1)c+(1-p)}{(1-p)\invba}.$$

\noindent{\bf CASE II -- Subcase 3.} Lastly, we have that $\sigma_{Z_1}^{Z_2/\beta} (\hat\mu-\mu)- \sigma_{Z_1}^{Z_2/\beta} (\tilde\mu-\mu)$ equals to
{\small\begin{align*}
    \left( \frac{1}{Z_1} \right)^{2\alpha} B +(1-p) \left( \frac{1}{Z_1} \right)^\alpha - \left( \frac{1}{Z_2} \right)^{2\alpha} B -(1-p) \left( \frac{1}{Z_2} \right)^\alpha + p\left( \frac{1}{Z_2} \right)^{2\alpha} -p \left( \frac{1}{Z_1} \right)^{\alpha} \left( \frac{1}{Z_2} \right)^{\alpha}.
\end{align*}
}

For the analysis, write
$$g(x,c)=B(c+x)^2+(1-p)(c+x) -Bx^2-(1-p)x +px^2-p(c+x)x.$$
$g(x,c)$ is a decreasing function in $x$. The sign of $\frac{\partial g(x,c)}{\partial c}$ is actually not clear in this subcase. But for the purpose of finding the minimizer of $\sigma(\tilde\mu-\mu)$, this is not important because for a fixed value of $c$, $g(x,c)$ achieves its maximum when $x$ is of the value such that $[Z_1, Z_2]$ is of subcase 2, of either case I or case II. 

Not that for a fixed value of $\hat c$, as $Z_1$ gets larger (or equivalently as $Z_2$ gets larger, or as $x:=( \frac{1}{Z_2} )^\alpha$ gets smaller), the range $[Z_1, Z_2]$ goes from case II to case I. In particular, for each value of $\hat c$, such transition happens exactly when $\beta Z_2=Z_1$. That is, when
$$\hat c=  \left( \invba-1 \right) \left( \frac{1}{Z_2} \right)^\alpha \quad \Leftrightarrow\quad \left( \frac{1}{Z_2} \right)^\alpha =\frac{\hat c\ba}{1-\ba}.$$

\begin{figure}[t]
	\centering
	\includegraphics[width=\textwidth]{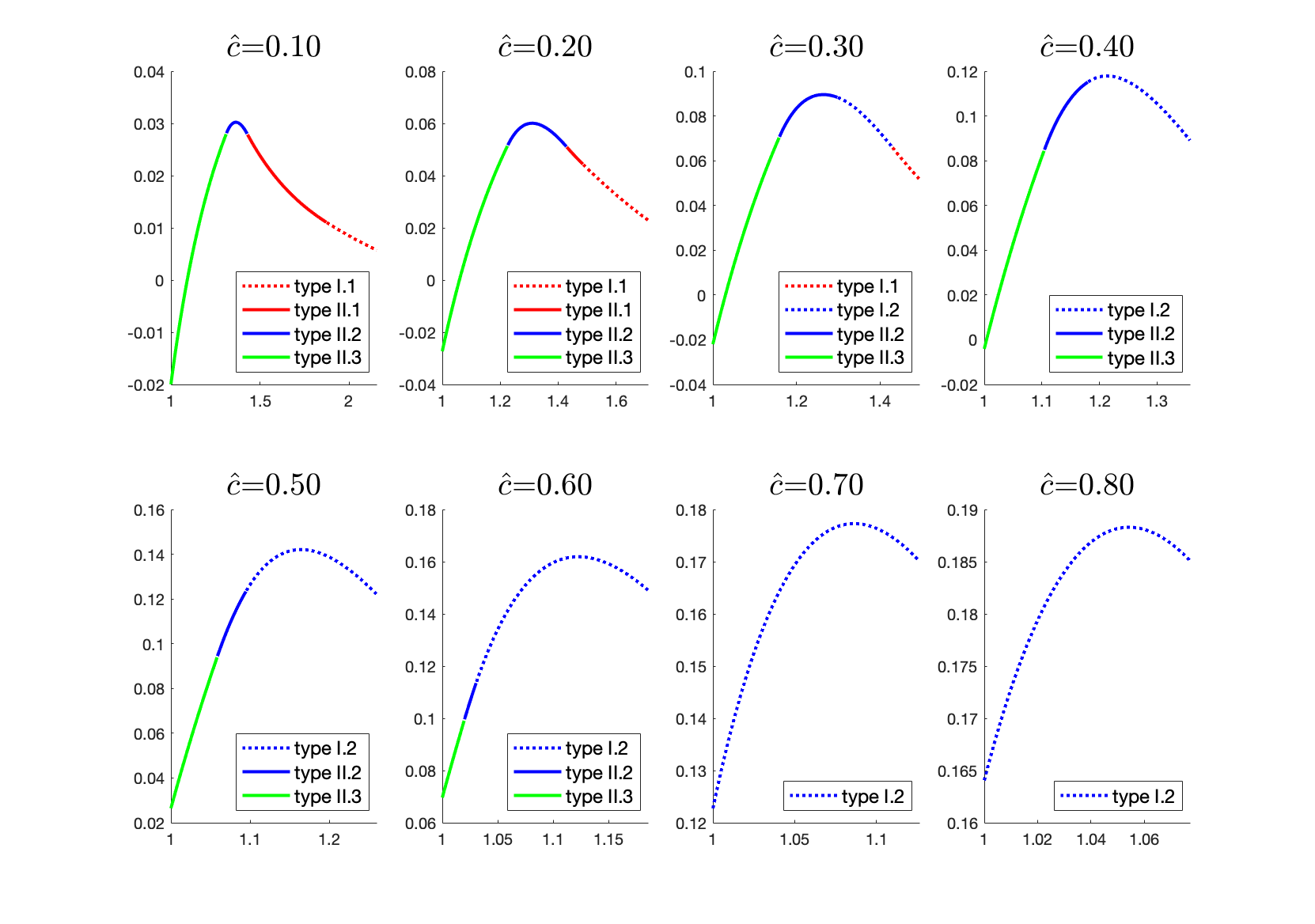}
	\caption{For fixed values of $\hat c$, the function of $\sigma(\hat\mu-\mu)-\sigma(\tilde\mu-\mu)$ on $Z_1$} \label{fig:auc-t1-hcs}
\end{figure}

Now, for each fixed value of $\hat c$, Figure~\ref{fig:auc-t1-hcs} plots $\sigma(\hat\mu-\mu)-\sigma(\tilde\mu-\mu)$ against $Z_1$. It also shows that as $Z_1$ increases, how the interval $[Z_1, Z_2]$ changes by cases.

With simple algebra, one can easily check that 
$$\hat c=\frac{(1-p) (1-\ba)}{2-p-\ba-p\ba +p\beta^{2\alpha}} \quad \Rightarrow \quad h_\text{I}(\hat c) = h_{\text{II}}(\hat c) =\frac{\hat c\ba}{1-\ba}.$$ 
Therefore, when $$\hat c\ge\frac{(1-p) (1-\ba)}{2-p-\ba-p\ba +p\beta^{2\alpha}},$$ both $h_\text{I}(\hat c)$ and $h_{\text{II}(\hat c)}$ are no more than $\displaystyle \frac{\hat c\ba}{1-\ba}$. Thus, the maximum value of $\sigma(\hat\mu-\mu)-\sigma(\tilde\mu-\mu)$ is achieved when $x=h_\text{I}(\hat c)$; and when $$\hat c\le\frac{(1-p) (1-\ba)}{2-p-\ba-p\ba +p\beta^{2\alpha}},$$ both $h_\text{I}(\hat c)$ and $h_{\text{II}(\hat c)}$ are no less than $\displaystyle \frac{\hat c\ba}{1-\ba}$. Thus, the maximum value of $\sigma(\hat\mu-\mu)-\sigma(\tilde\mu-\mu)$ is achieved when $x=h_{\text{II}}(\hat c)$.

\begin{table}[t]
	\centering
	\begin{tabular}{c|ccccc}
	\footnotesize
		$\hat c$ & $\mathcal{T}_{mm}(\hat c) = [Z_1, Z_2]$ & $\mathcal{T}_{auc}(\hat c) = [Z_1', Z_2']$ & $Z_1-Z_1'$ \\[2mm]
		\hline
		0.10 & [1.2252, 1.3111] & [1.2187, 1.3026] & 0.0065 \\[2mm] 
		0.20 & [1.2022, 1.3861] & [1.1903, 1.3653] & 0.0119 \\[2mm]
		0.30 & [1.1802, 1.4803] & [1.1644, 1.4421] & 0.0158 \\[2mm] 
		0.40 & [1.1461, 1.5584] & [1.1346, 1.5203] & 0.0115 \\[2mm]
		0.50 & [1.1156, 1.6560] & [1.1070, 1.6155] & 0.0086 \\[2mm]
		0.60 & [1.0881, 1.7839] & [1.0819, 1.7403] & 0.0063 \\[2mm]
		0.70 & [1.0632, 1.9635] & [1.0589, 1.9154] & 0.0043 \\[2mm]
		0.80 & [1.0404, 2.2476] & [1.0377, 2.1926] & 0.0026 
	\end{tabular}
	\vspace{.8em}
	\caption{\raggedright Compare the optimal ranges of $G_2$ students to debias under two measures of unfairness, under parameters $\alpha=3$, $\beta=.8$, and $p=.25$. We check the optimal intervals under both measures of unfairness, and find on an average 95\% overlap of the optimal intervals.
	}
	\label{ec-tab:compare-two-measures}
\end{table}

\section{Proof of Lemma~\ref{lem:deterministic-incentive-compatible}}\label{appx-sec:proof-deterministic-incentive-compatible}
Assume $T$ is not of the form $\{\theta \in \Theta : Z(\theta)\geq \delta\}$ or $\{\theta \in \Theta : Z(\theta)> \delta\}$ for some $\delta$ and let $U$ be a connected and inclusionwise maximal subset of $T$ that is bounded. Take the smallest number $\delta_1 \in [1,\infty)$ so that $Z(\theta)\leq \delta_1$ for all $\theta \in U$. Let $\overline \theta \in \theta$ such that $Z(\overline \theta)=\delta_1$.

Assume first that $\overline \theta \in U$. Then, for each $\epsilon_1 >0$, there exists $\epsilon \in
(0,\epsilon_1]$ such that $Z^{-1}(\delta_1 + \epsilon)\notin T$. Since $\beta<1$ and by continuity of $Z(\cdot)$, there exists $\epsilon_2>0$ such that $\beta (\delta_1+\epsilon) < \delta_1$ for all $\epsilon<\epsilon_2$. We can then take an appropriate $x\in[\delta_1,\delta_1+\epsilon_2]\setminus T$ and $x'=\delta_1$ to show that $T$ is not incentive compatible. Next assume  $\overline \theta \notin U$. In particular, we have $\overline \theta\notin T$. Similarly to the case above, we can find $\epsilon>0$ such that $x'=\delta_1-\epsilon$ satisfies $Z^{-1}(x') \in U\subseteq T$ and $\beta \delta_1<x'$. Setting  $x=\delta_1$, we deduce that $T$ is not incentive compatible.

\section{Proofs from Section~\ref{sec:randomized-vouchers}} \label{ec-sec:rvp}

\subsection{Auxiliary results for Section~\ref{sec:randomized-vouchers}}

Recall that, in this section, we consider the generalization of the model from Section~\ref{sec:market} where students' true potential follow a generic continuous, integrable cdf $F$. Moreover, we write $\maxz{x}:=\max(0,x)$ for a number or a function $x$. Recall that, similarly to Section~\ref{sec:randomized-vouchers}, we abuse notation and identify a student $\theta$ with their potential $Z(\theta)$.

\begin{lemma}\label{claim:mu_rho}
Let $\rho$ be an RVP. Under any continuous distribution of potentials $F$, we have
\begin{align}
\mu_\rho(\theta)&=\rho(\theta)\br{(1-p)\int_\theta^\infty dF+p\sbr{\int_\theta^{\theta/\beta}\rho\,dF+\int_{\theta/\beta}^\infty dF}} +(1-\rho(\theta)) \cdot \nonumber \\ & \br{(1-p)\int_{\beta\theta}^\infty dF+p\sbr{\int_{\beta\theta}^\theta\rho\,dF+\int_\theta^\infty dF}},\label{eq:mu_rho} \\
m_\rho(\theta)&=[0,(1-\rho(\theta))(1-p)\int_{\beta\theta}^\theta\,dF+p\sbr{(1-\rho(\theta))\int_{\beta\theta}^\theta\rho\,dF-\rho(\theta)\int_\theta^{\theta/\beta}(1-\rho)\,dF}]^+.\label{eq:m_rho}
\end{align}
\end{lemma}
\proof{}
Suppose a student appears to have potential $\tau$, possibly after having been debiased. Then under $\mu_\rho$, they will be matched to school $s(\tau)$ given by
$$s(\tau)=(1-p)\int_\tau^\infty\,dF+p\sbr{\int_\tau^{\tau/\beta}\rho\,dF+\int_{\tau/\beta}^\infty\,dF},\label{eq:s-z}
$$
that is, they will appear after all non-disadvantaged students with true potential exceeding $\tau$; those disadvantaged students with potential exceeding $\tau/\beta$; and those disadvantaged students who receive a voucher and have potential in the interval $(\tau,\tau/\beta)$.

A student with true potential $\theta$ now receives a voucher with probability $\rho(\theta)$, so by the law of total expectation, have
$$\mu_\rho(\theta)=\rho(\theta)s(\theta)+(1-\rho(\theta))s(\beta\theta),
$$
which is exactly \eqref{eq:mu_rho}. \eqref{eq:m_rho} follows from~\eqref{eq:mu_rho} and the definitions of displacement and $\mu(\theta)$. 
\endproof

We next report more expressions for $\mu_\rho$ and $\mu_\rho'$, as they will be used in the upcoming proofs.

\begin{proposition}\label{prop:mu_rho_expr}
Let $\rho$ be an RVP. For all $\theta \in \Theta$, we have
\begin{align}
\mu_\rho(\theta)&=-\rho(\theta)\br{(1-p)\int_{\beta\theta}^\theta dF+p\sbr{\int_\theta^{\theta/\beta}(1-\rho)\,dF+\int_{\beta\theta}^\theta\rho\,dF}} \nonumber \\& +\br{(1-p)\int_{\beta\theta}^\infty dF+p\sbr{\int_{\beta\theta}^\theta\rho\,dF+\int_\theta^\infty dF}}.\label{eq:mu_rho2} \end{align}
Moreover, if $\mu_\rho$ is differentiable at $\theta$, we have \begin{align}\mu_\rho'(\theta)&=-f(\theta)-\rho'(\theta)\sbr{p\int_\theta^{\theta/\beta}(1-\rho)\,dF+(1-p)\int_{\beta\theta}^\theta\,dF+p\int_{\beta\theta}^\theta\rho\,dF} \nonumber \\
&\qquad-p\rho(\theta)\sbr{\frac{1}{\beta}(1-\rho(\theta/\beta))f(\theta/\beta)-(1-\rho(\theta))f(\theta)} \nonumber \\
&\qquad+(1-\rho(\theta))\sbr{(1-p)(f(\theta)-\beta f(\beta\theta))+p(f(\theta)\rho(\theta)-\beta f(\beta\theta)\rho(\beta\theta))}.\label{eq:mu_rho_deriv}
\end{align}
\end{proposition}
\proof{}
\eqref{eq:mu_rho2} follows by simple rearrangement of \eqref{eq:mu_rho}, and \eqref{eq:mu_rho_deriv} follows by standard mechanics of derivative computation. 
\endproof

\begin{definition}
The RVP that assigns no vouchers, denoted $\rho_0$ is defined by $\rho_0(\theta):=0$ for all $\theta \in [1,\infty)$. Note that $\mu_{\rho_0}(\theta)=(1-p)\int_{\beta\theta}^\infty dF+p\int_\theta^\infty dF$ and $m_{\rho_0}(\theta)=m(\theta)=(1-p)\int_{\beta\theta}^\theta\,dF$.
\end{definition}

\subsection{Necessary and sufficient conditions for incentive compatibility}

In this section we develop necessary and sufficient conditions for incentive compatibility through the concept of well-behavedness and prove an important technical lemma.

\begin{definition}[Well-behaved RVP]
We call an RVP $\rho$ \emph{well-behaved} if it is everywhere continuously differentiable except for a set of isolated points where it has non-negative, right-continuous jump discontinuities.
\end{definition}

\begin{lemma}[Necessary and sufficient conditions for incentive compatibility]\label{lemma:incentive compatible}
Let $\rho$ be a well-behaved RVP and $F$ be an arbitrary continuous distribution of potentials. $\rho$ is incentive compatible with respect to $F$ if and only if, for all $\theta$ such that $\rho$ is continuously differentiable at $\theta$, we have $\rho'(\theta)\geq 0$ or $\mu_\rho'(\theta)\leq 0$.
\end{lemma}
\proof{}
Recall that $\rho$ is incentive compatible if $\mu_\rho$ is everywhere non-increasing. Observe from \eqref{eq:mu_rho2} in Proposition~\ref{prop:mu_rho_expr} that $\mu_\rho$ is continuous at $\theta$ if and only if $\rho$ is continuous at $\theta$. On the other hand, if $\mu_\rho$ is not continuous at $\theta$ then it must have a negative jump-discontinuity caused by a positive jump-discontinuity of $\rho$ (since all other terms of \eqref{eq:mu_rho2} are positive). Further note that if $\mu_\rho$ is not continuously differentiable at $\theta\in\Theta$, then $\rho$ is not continuously differentiable at $\theta$, $\beta\theta$ or $\theta/\beta$; so the set of points where $\mu_\rho$ is not continuously differentiable also forms an isolated set.

Consider any $\theta\in\Theta$ where $\mu_\rho$ is continuously differentiable, then $\mu_\rho$ is non-increasing iff $\mu_\rho'(\theta)\leq 0$. By collecting the terms for $f(\theta)$, $f(\beta\theta)$ and $f(\theta/\beta)$ in \eqref{eq:mu_rho_deriv}, one can see that $\mu_\rho'(\theta)\leq 0$ if $\rho'(\theta)\geq 0$.

We have established that $\mu_\rho$ is continuous at all but an isolated set of negative jump-discontinuities, and that $\mu_\rho$ is continuously differentiable and non-increasing at all but an isolated set of points. $\mu_\rho$ is therefore everywhere non-increasing, as required. 
\endproof

\begin{lemma}\label{claim:technical-incentive compatible}
Suppose $\rho$ is a well-behaved RVP such that for all $\theta$ where $\rho$ is continuously differentiable, we have $\rho'(\theta)\geq-\phi(\theta)$, with
\begin{align*}
\phi(\theta):=\frac{\alpha(1-p)}{\theta\sbr{p(1-\beta^\alpha)+(1-p)(\beta^{-\alpha}-1)}}.
\end{align*}
Then, $\rho$ is incentive compatible.
\end{lemma}
\proof{}
By Lemma~\ref{lemma:incentive compatible}, it suffices to show that $\mu_\rho'(\theta)\leq 0$ for $\theta$ such that $\rho'(\theta)$ exists and is continuous, and $\rho'(\theta)<0$.
Now define
$$
\mcL=p\int_\theta^{\theta/\beta}(1-\rho)\,dF+(1-p)\int_{\beta\theta}^\theta\,dF+p\int_{\beta\theta}^\theta\rho\,dF \hbox{ and }
W=-\rho(\theta)(1-p)f(\theta)-(1-\rho(\theta))(1-p)\beta f(\beta\theta),
$$
and note that $\mcL\geq 0$, and $W\leq 0$. Simple calculations based on~\eqref{eq:mu_rho_deriv} in Proposition~\ref{prop:mu_rho_expr} shows that $\mu_\rho'(\theta)\leq -\rho'(\theta)\mcL+W$. It is therefore enough to prove $-\rho'(\theta)\leq\frac{-W}{\mcL}$. Compute next
\begin{align*}
\mcL&\leq p\int_\theta^{\theta/\beta}\,dF+(1-p)\int_{\beta\theta}^\theta\,dF \leq \theta^{-\alpha}\sbr{p(1-\beta^\alpha)+(1-p)(\beta^{-\alpha}-1)} \hbox{ and }\\
-W & \geq(1-p)\min\set{f(\theta),\beta f(\beta \theta)}=\frac{\alpha(1-p)}{\theta^{1+\alpha}}.
\end{align*}
This yields
\begin{align*}
\frac{-W}{\mcL}&\geq
\frac{\alpha(1-p)}{\theta\sbr{p(1-\beta^\alpha)+(1-p)(\beta^{-\alpha}-1)}} 
=\phi(\theta).
\end{align*}
We have shown $\frac{-W}{\mcL}\geq\phi(\theta)$, which combined with the assumption that $\phi(\theta)\geq-\rho'(\theta)$ completes the proof.
\endproof

\subsection{Proof of Theorem~\ref{thm:properties-of-ptm-rvps}: properties of PropMs} \label{ec-subsec:rvp-ptm-proof}

We next prove Lemmas~\ref{lemma:ptm-p1}, \ref{lemma:ptm-p2}, and~\ref{lemma:ptm-p3}, which together constitute Theorem~\ref{thm:properties-of-ptm-rvps}.

\begin{lemma}\label{lemma:ptm-p1}
The proportional-to-mistreatment RVP $\rho_m$ is $\frac{2\hat c\alpha}{1-\beta^\alpha}$-individually fair.
\end{lemma}
\proof{}
$\rho_m$ is everywhere continuous and continuously differentiable on $\Theta$, except at $\theta=1/\beta$. $\rho_m$ is therefore Lipschitz for a constant given by the supremum of the absolute value of the derivative, which occurs at $\theta=1$ where $\rho_m'(\theta)=\frac{2\hat c\alpha}{1-\beta^\alpha}$.
\endproof

\begin{lemma}\label{lemma:ptm-p2}
The proportional-to-mistreatment RVP $\rho_m$ is incentive compatible for
\begin{align*}
\hat c\leq\frac{1-p}{2\sbr{p(1-\beta^\alpha)+(1-p)(\beta^{-\alpha}-1)}}.
\end{align*}
\end{lemma}
\proof{}
Applying Lemma~\ref{claim:technical-incentive compatible}, it suffices to show
\begin{align}
-\rho'_m(\theta)=2\alpha \hat c\beta^{-\alpha}\theta^{-\alpha-1}&\leq
\phi(\theta)=\frac{\alpha(1-p)}{\theta\sbr{p(1-\beta^\alpha)+(1-p)(\beta^{-\alpha}-1)}}.
\end{align}
for $\theta\geq 1/\beta$ (since $\rho_m'(\theta)\geq0$ for $\theta<1/\beta$). Note that this is tightest when $\theta=1/\beta$, which gives the condition for $\hat c$. 
\endproof

\begin{lemma}\label{prop:sup-of-1-minus-rho}
For the proportional-to-mistreatment RVP $\rho_m$, we have
$$ \sup_{\theta\in\Theta}\,(1-\rho_m(\theta))\int_{\beta\theta}^\theta\,dF=(1-\beta^\alpha)\xi(\hat c), \quad \hbox{ where} \quad 
\xi(\hat c):=\piecewise{1-2\hat c,&\hat c\leq 1/4,\\ \frac{1}{8\hat c},& \hat c>1/4.}
$$\end{lemma}
\proof{}
With $F(a,b):=\int_a^b\,dF$, define $q:=\frac{2\hat c}{1-\beta^\alpha}$ and $y(\theta):=F(\beta\theta,\theta)$. Now write
\begin{align*}
(1-\rho_m(\theta))\int_{\beta\theta}^\theta\,dF&=\br{1-\frac{2\hat c}{1-\beta^\alpha}F(\beta\theta,\theta)}F(\beta\theta,\theta) =(1-qy(\theta))y(\theta).
\end{align*}

This is a quadratic in $y$ that increases from $y=0$ to its maximum at $y=1/(2q)$. Observe that $\max_\theta F(\beta\theta,\theta)=1-\beta^\alpha$ which is attained at $\theta=1/\beta$. This means that if $\hat c\leq 1/4$, then
\begin{align*}
y(\theta)=F(\beta\theta,\theta)\leq 1-\beta^\alpha\leq\frac{1-\beta^\alpha}{4\hat c}=\frac{1}{2q},
\end{align*}
and the maximum of the quadratic over $y$ is realized at the maximum value of $y$. Thus,
\begin{align*}
\sup_{\theta\in\Theta,\hat c\leq 1/4}\,(1-qy(\theta))y(\theta) =\br{1-\frac{2\hat c}{1-\beta^\alpha}(1-\beta^\alpha)}(1-\beta^\alpha) =(1-\beta^\alpha)(1-2\hat c).
\end{align*}
On the other hand if $\hat c>1/4$, then this expression reaches its maximum when the quadratic does, at $y=1/(2q)$, giving
\begin{align*}
\sup_{\theta\in\Theta,\hat c>1/4}\,(1-qy(\theta))y(\theta)&=\br{1-q\frac{1}{2q}}\frac{1}{2q} =\br{1-\frac{1}{2}}\frac{1}{2q} =\frac{1-\beta^\alpha}{8\hat c}.
\end{align*}

Combining the two completes the proof.
\endproof

\begin{lemma}\label{lem:mm}
For the proportional-to-mistreament RVP, $\rho_m$, the maximum mistreament $mm_{\rho_m}$ satisfies
\begin{align*}
mm_{\rho_m}\leq(1-p(1-2\hat c))(1-\beta^\alpha)\xi(\hat c),
\end{align*}
where $\xi(\cdot)$ is defined as in Lemma~\ref{prop:sup-of-1-minus-rho}.
\end{lemma}
\proof{}
Abbreviate $\rho=\rho_m$. Apply $\rho(\theta)\leq\rho(1/\beta)$ to \eqref{eq:m_rho} and simplify to get
\begin{align*}
m_{\rho_m}(\theta)
&\leq\maxz{(1-p)(1-\rho(\theta))\int_{\beta\theta}^\theta\,dF+p(1-\rho(\theta))\rho(1/\beta)\int_{\beta\theta}^\theta\,dF} \\
&=\maxz{(1+p(\rho(1/\beta)-1))(1-\rho(\theta))\int_{\beta\theta}^\theta\,dF}.
\end{align*}
The thesis follows by taking the supremum over $\theta\in\Theta$, applying Lemma~\ref{prop:sup-of-1-minus-rho}, and substituting $\rho(1/\beta)=2\hat c$.
\endproof

Recall that we let $mm^*(\hat c)$ be the maximum mistreatment achieved by the optimal policy from Theorem~\ref{thm:optimal-range-mm} with the amount of resources being $\hat c$. We have
\begin{align}
mm^*(\hat c)&=\begin{cases}(1-p-\hat c)(1-\beta^\alpha)+\hat cp& \hbox{if } \hat c\leq\frac{(1-p)(1-\beta^\alpha)}{1-p+1-\beta^\alpha}, \\
(1-p)(1-\beta^\alpha)\frac{1-\hat c}{1-p\beta^\alpha}& \text{otherwise.}\end{cases}
\end{align}

\begin{lemma}\label{lemma:ptm-p3}
Suppose $p<1-\beta^\alpha$, $p\leq 1/2$, and $\hat c\leq\frac{(1-p)(1-\beta^\alpha)}{1-p+1-\beta^\alpha}$. If $\hat c \geq 1-\frac{p+1-\beta^\alpha}{4p(1-\beta^\alpha)}$, then $mm_{\rho_m}\leq mm^*(\hat c)$.
\end{lemma}
\proof{}
Let $Q:=mm^*(\hat c)-mm_{\rho_m}$, we need to show $Q\geq 0$. Using Theorem~\ref{thm:optimal-range-mm} and Lemma~\ref{lem:mm}, compute
\begin{align*}
Q&=mm^*(\hat c)-mm_{\rho_m}
\geq(1-p)(1-\beta^\alpha)-\hat c(1-\beta^\alpha-p)-(1+p(2\hat c-1))(1-\beta^\alpha)\xi(\hat c).
\end{align*}
For $\hat c\leq 1/4$, we now have
\begin{align}\label{eq:Q-geq-0}
Q\geq \hat c\sbr{(1-4p(1-\hat c))(1-\beta^\alpha)+p}.
\end{align}
If $p\leq \frac{1}{4}$, then the right-hand side of \eqref{eq:Q-geq-0} is nonnegative, concluding the proof. Thus, assume $p > \frac{1}{4}$. Since $\hat c>0$, we can drop the leading $\hat c$, so for $Q\geq 0$, we need
\begin{align*}
p(1-4(1-\beta^\alpha)(1-\hat c))\geq-(1-\beta^\alpha).
\end{align*}
Rearranging leads to the thesis. 

Consider next the case where $\hat c\geq1/4$. In this case we want to show the inequality
\begin{align*}
(1-p)(1-\beta^\alpha)-\hat c(1-\beta^\alpha-p)-(1+p(2\hat c-1))(1-\beta^\alpha)\frac{1}{8\hat c}\geq 0.
\end{align*}
Again since $\hat c>0$, we can multiply by $\hat c$ to get a quadratic in $\hat c$; call the resulting expression $W(\hat c)$:
\begin{align*}
W(\hat c)&=\hat c(1-p)(1-\beta^\alpha)-\hat c^2(1-\beta^\alpha-p)-(1+p(2\hat c-1))(1-\beta^\alpha)\frac{1}{8}.
\end{align*}
Since $p<1-\beta^\alpha$, $W''(\hat c)\leq 0$, hence this is a concave quadratic. One can verify that if $p\leq1/2$ then $W(1/4)\geq0$ and $W(1/2)\geq 0$, which means that $W$ must also be non-negative for $\hat c\in[1/4,1/2]$, as required.
\endproof

\section{Proofs from Section~\ref{sec:stoch-effect-vouchers}}\label{ec-subsec:proof-new}

\begin{mproof}{Lemma~\ref{lem:all-or-nothing-vouchers}.} Define $m_{\rho,\kappa}(\theta)$ as the expected mistreatment experienced by a disadvantaged student $\theta \in \Theta\cap G_2$. Recall from Lemma~\ref{claim:mu_rho} that $m_\rho(\theta) =[M_\rho]^+(\theta)$, where 
$$M_\rho(\theta)=(1-\rho(\theta))(1-p)\int_{\beta\theta}^\theta\,dF+p\sbr{(1-\rho(\theta))\int_{\beta\theta}^\theta\rho\,dF-\rho(\theta)\int_\theta^{\theta/\beta}(1-\rho)\,dF}.$$ Similarly, we have $m_{\rho,\kappa}(\theta) = [M_{\rho,\kappa}(\theta)]^+$, where 
{$$M_{\rho,\kappa}(\theta)=(1-\kappa \rho(\theta))(1-p)\int_{\beta\theta}^\theta\,dF+p\sbr{(\kappa-\kappa^2 \rho(\theta))\int_{\beta\theta}^\theta\rho\,dF-\kappa \rho(\theta)\int_\theta^{\theta/\beta}(1-\kappa\rho)\,dF}.$$}
We can then write,
{\small \begin{align*}
m_{\rho,\kappa}(\theta) &= \left[ M_{\rho,\kappa}(\theta) \right]^+ \\
&= \left[ M_\rho(\theta) + (1 - \kappa)\rho(\theta)(1 - p)\int_{\beta\theta}^{\theta} dF - p(1 - \kappa)\int_{\beta\theta}^{\theta} \rho \, dF + p(1 - \kappa^2)\rho(\theta)\int_{\beta\theta}^{\theta} \rho \, dF \right. \\
&\quad \left. + p(1 - \kappa)\rho(\theta)\int_{\theta}^{\theta/\beta} dF - p(1 - \kappa^2)\rho(\theta)\int_{\theta}^{\theta/\beta} \rho \, dF \right]^+ \\
&\le \left[ M_\rho(\theta) + (1 - \kappa)\rho(\theta)(1 - p)\int_{\beta\theta}^{\theta} dF + p(1 - \kappa^2)\rho(\theta)\int_{\beta\theta}^{\theta} \rho \, dF + p(1 - \kappa)\rho(\theta)\int_{\theta}^{\theta/\beta} dF \right]^+ \\
&= \left[ M_\rho(\theta) + (1 - \kappa)\rho(\theta) \left( (1 - p)\int_{\beta\theta}^{\theta} dF + p(1 + \kappa)\int_{\beta\theta}^{\theta} \rho \, dF + p\int_{\theta}^{\theta/\beta} dF \right) \right]^+ \\
&\le \left[ M_\rho(\theta) + (1 - \kappa)\|\rho\|_\infty \left( (1 - p)\int_{\beta\theta}^{\theta} dF + p(1 + \kappa)\|\rho\|_\infty \int_{\beta\theta}^{\theta} dF + p\int_{\theta}^{\theta/\beta} dF \right) \right]^+ \\
&\le \left[ M_\rho(\theta) \right]^+ + (1 - \kappa)(1 - \beta^\alpha)\|\rho\|_\infty \left( 1 + p(1 + \kappa)\|\rho\|_\infty \right)\\
& = m_{\rho}(\theta) + (1 - \kappa)(1 - \beta^\alpha)\|\rho\|_\infty \left( 1 + p(1 + \kappa)\|\rho\|_\infty \right),
\end{align*}}
where the second relation is obtained by the using the formula for $M_\rho(\theta)$, the third by dropping nonnegative terms, the fourth by rearranging terms, the fifth by replacing $\rho(\theta)$ by its maximum, the sixth by upper bounding both $\int_{\beta \theta}^\theta dF$ and $\int_{\theta}^{\theta/\beta} dF$ by $(1-\beta^\alpha)$, and moving nonnegative terms out of the $[ \cdot ]^+$ function, and the last by definition of $m_{\rho}(\theta)$. By taking the supremum of $m_{\rho,\kappa}(\theta)$  for $\theta \in \Theta \cap G_2$, the upper bound follows.  $\Box$
\end{mproof}

\begin{mproof}{Lemma~\ref{lem:rho-constant-same}.}
From the proof of Lemma~\ref{lem:all-or-nothing-vouchers}, we have that, for $\theta \in \Theta$,
 \begin{align*}M_{\rho}(\theta)& = (1-r)(1-p)\int_{\beta\theta}^\theta\,dF + p\sbr{(1-r)r\int_{\beta\theta}^\theta\,dF -r(1-r)\int_{\theta}^{\theta/\beta}\,dF}\\
&  = (1-r)\sbr{(1-p+pr) \int_{\beta\theta}^\theta\,dF -pr \int_{\theta}^{\theta/\beta}\,dF}, \end{align*}
and that
\begin{align*}M_{\rho,\kappa}(\theta)& = (1-\kappa r)(1-p)\int_{\beta\theta}^\theta\,dF + p\sbr{(\kappa -\kappa^2 r)r\int_{\beta\theta}^\theta\,dF -\kappa r(1-\kappa r) \int_{\theta}^{\theta/\beta}\,dF}\\
& = (1-\kappa r)\sbr{(1-p+pkr)\int_{\beta\theta}^\theta\,dF -pkr \int_{\theta}^{\theta/\beta}\,dF} = M_{\kappa\rho}.
 \end{align*}
 $\Box$ \end{mproof}

\begin{mproof}{Lemma~\ref{ANV:rho-flip}.}
Assume that the potential of students is described by a Pareto distribution with $\alpha=1$. Let $p=0.8, \beta=0.2, \kappa =0.4; \rho(\theta)=0.8$ and $\hat \rho(\theta)=0.5$ for $\theta \in \Theta$. Let $\rho^*(\theta)=r$ for $\theta \in \Theta$. Note that, for every $\hat \kappa$, $mm_{\rho^*,\hat \kappa}$, is achieved at the value of $\theta$ that maximizes the mistreatment under $\hat \mu$, that is at $\hat \theta=1/\beta=5$. Compute
$\int_{\beta\hat \theta}^{\hat \theta}\,dF=0.8$, $\int_{\hat \theta}^{\hat \theta/\beta}\,dF=0.16$ 
and substitute them into the formula for $M_{\rho^*}(5)$ computed in the proof of Lemma~\ref{lem:rho-constant-same}:
$$M_{\rho^*}(\hat \theta) = (1-r)[(0.2+0.8r)(0.8)-0.8r(0.16)] =  (1 - r) (0.16 + 0.512r).$$
Thus we obtain
$
mm_{\rho}=0.11392, \, mm_{\hat \rho}=0.208, \, mm_{\rho,\kappa}=mm_{0.4\rho}> 0.22, \, mm_{\hat \rho,\kappa}=mm_{0.4\hat\rho}<0.21,
$
which verifies the thesis.
 $\Box$ \end{mproof}
\section{Increasing-with-Potential RVPs} \label{ec-subsec:iwp-rvps}

\begin{mproof}{Lemma~\ref{lem:well-behaved-IwP-is-ic}.}
Directly from Lemma~\ref{lemma:incentive compatible}. $\Box$
\end{mproof}

\begin{mproof}{Theorem~\ref{thm:only-iwp-is-always-incentive compatible}.}
We claim that there exists $\dta>0$ with $\dta<\theta(1-\beta)$ such that on $I:=(\theta-\dta,\theta+\dta)$, the following properties hold for all $t\in I$: $\rho$ is continuous and differentiable at $t$; $\rho$ is monotonically decreasing at $t$; and $0<\rho(t)\leq (1+\rho(\theta))/2$. The existence of an interval that satisfies the first and second properties follows since $\rho$ is continuously differentiable in some neighborhood of $\theta$ and has strictly negative derivative. The third follows since $\rho$ has a strictly negative derivative at $\theta$, so it must be strictly bounded away from $0$ and $1$ itself, and then one can restrict $\dta$ to guarantee the same for $t$ close to $\theta$. Note also that $I\subset(\beta\theta,\theta/\beta)$.

Next, fix $\eps>0$, then one can construct a distribution $f$ that satisfies the following conditions: $f$ is continuous and differentiable everywhere; $f(\theta)=\eps$;
$f(t)=0$ for $t\not\in I$; and
$\int_\theta^{\theta+\dta}f(t)\,dt\geq\frac{1}{2}$. This can be done for instance by constructing a piece-wise constant function that satisfies all but the first condition, then smoothing it out with an appropriate bump function via standard techniques.

From \eqref{eq:mu_rho_deriv} of Proposition~\ref{prop:mu_rho_expr} we can compute
\begin{align}
\mu_\rho'(\theta)
&=-\rho'(\theta)\br{
  p\int_\theta^{\theta/\beta}(1-\rho)\,dF+(1-p)\int_{\beta\theta}^\theta\,dF+p\int_{\beta\theta}^\theta\rho\,dF
} \nonumber \\
&\qquad
  -\frac{1}{\beta}p\rho(\theta)f\br{\frac{\theta}{\beta}}\br{1-\rho\br{\frac{\theta}{\beta}}}
  -\beta(1-\rho(\theta))f(\beta \theta)(1-p(1-\rho(\beta \theta))) \nonumber \\
&\qquad
  -f(\theta)(p(1-\rho(\theta))(1-2\rho(\theta))+\rho(\theta)) \nonumber \\
&=(-\rho'(\theta))\br{
  p\int_\theta^{\theta+\dta}(1-\rho)\,dF+(1-p)\int_{\theta-\dta}^\theta\,dF+p\int_{\theta-\dta}^\theta\rho\,dF
}   -\eps(p(1-\rho(\theta))(1-2\rho(\theta))+\rho(\theta)) \nonumber \\
&\geq(-\rho'(\theta))p\int_\theta^{\theta+\dta}(1-\rho)\,dF-\eps \nonumber \\
&\geq\frac{1}{2}(-\rho'(\theta))p(1-\rho(\theta))\int_\theta^{\theta+\dta}\,dF-\eps \nonumber \\
&\geq\frac{1}{4}(-\rho'(\theta))p(1-\rho(\theta))-\eps \label{eq:last}.
\end{align}
Here we used the fact that $p(1-\rho(\theta))(1-2\rho(\theta))+\rho(\theta)\in[0,1]$.

Now the first term in~\eqref{eq:last} is strictly positive, and we can freely choose $\eps$ strictly smaller in magnitude to get $\mu_\rho'(\theta)>0$. Note that although $\rho$ might not be well-behaved everywhere, it is well behaved on $I$, and we can apply Lemma~\ref{lemma:incentive compatible} to this point to get that $\rho$ is not incentive compatible for $\theta$, completing the proof. $\Box$
\end{mproof}

\section{Alternate Models of Bias}\label{sec:additive-bias}

We now investigate deviations from the model studied so far, which assumes that disadvantaged students suffer from a uniform multiplicative bias.

\begin{table}[h]
    \centering
{\small{    \begin{tabular}{c|ccc|ccc|cc}
    \toprule
    Budget & \multicolumn{3}{c}{Multiplicative} & \multicolumn{3}{c}{Additive} & \multicolumn{2}{c}{Difference} \\
    $\hat c$ & PAUC & MM & Difference & PAUC & MM & Difference & PAUC & MM \\
    \midrule
    0.1 & 44.4\% & 45.2\% & 0.9\% & 45.4\% & 46.1\% & 0.7\% & 1.1\% & 0.9\% \\
    0.2 & 37.6\% & 39.3\% & 1.7\% & 39.7\% & 41.1\% & 1.4\% & 2.1\% & 1.8\% \\
    0.3 & 30.8\% & 33.3\% & 2.5\% & 34.1\% & 35.9\% & 1.8\% & 3.3\% & 2.6\% \\
    0.4 & 26.4\% & 28.5\% & 2.0\% & 30.2\% & 31.8\% & 1.6\% & 3.7\% & 3.3\% \\
    0.5 & 22.0\% & 23.7\% & 1.7\% & 26.0\% & 27.4\% & 1.4\% & 4.0\% & 3.7\% \\
    0.6 & 17.6\% & 19.0\% & 1.4\% & 21.6\% & 22.8\% & 1.2\% & 4.0\% & 3.9\% \\
    0.7 & 13.2\% & 14.2\% & 1.0\% & 17.0\% & 18.0\% & 0.9\% & 3.8\% & 3.7\% \\
    0.8 &  8.8\% &  9.5\% & 0.7\% & 12.1\% & 12.7\% & 0.7\% & 3.3\% & 3.3\% \\
    \bottomrule
    \end{tabular}}}
    \smallskip
    \caption[Proportion of disadvantaged students above theoretically optimal debiasing ranges under various scenarios.]{Proportion of disadvantaged students above theoretically optimal debiasing ranges under multiplicative/additive models and PAUC/maximum mistreatment aggregate mistreatment measures for $\alpha=3$ and $p=1/4$, with $\beta=0.8$ for multiplicative models and $\gamma=0.252$ for additive. For instance, the first value, 44.4\% indicates that under budget $\hat c=0.1$ and the multiplicative model, the top-44.4\% to 54.4\% of students were debiased.}

    \label{ec-tab:compare-two-measures2}
\end{table}

\subsection{An additive bias model} We first extend some of the theoretical results in Section~\ref{sec:voucher} to the case of additive (in place of multiplicative) bias. 
We consider the same setting as Section~\ref{sec:market} where the multiplicative bias model was introduced, but now the perceived potential of a student is given by $\hat Z(\theta)=Z(\theta)-\gamma$ if $\theta\in G_2$ (and $\theta$ does not receive a voucher), with $\hat Z(\theta)=Z(\theta)$ otherwise. 
This then gives $F_1(t)=1-t^{-\alpha}$ and $F_2(t)=1-(t+\gamma)^{-\alpha}$,
with domains $[1,\infty)$ and $[1-\gamma,\infty)$, respectively. The expression given in~\eqref{eq:match-act} for the biased matching $\hat\mu(\theta)$ continues to hold under the additive definition of $\hat Z$, and
this fact yields the following result, which is an analogue of Proposition~\ref{prop:mistreat-students}.
\begin{proposition} \label{prop:mistreat-students-additive}
Under additive bias of $\gamma\geq0$, for any student $\theta \in G_2$, the displacement under $\hat\mu$ is given by:
\begin{align}
\disp_{\hat\mu}(\theta)
&=\piecewise{
(1-p)\br{(Z(\theta)-\gamma)^{-\alpha}-(Z(\theta))^{-\alpha}}, & \text{ if } Z(\theta)\geq 1+\gamma, \\
(1-p)\br{1-(Z(\theta))^{-\alpha}}, & \text{ if } Z(\theta)<1+\gamma.}
\end{align}
For any student $\theta\in G_1$, we have $\disp_{\hat\mu}(\theta)=-p((Z(\theta))^{-\alpha}-(Z(\theta)+\gamma)^{-\alpha})$. Thus, the maximum displacement of $(1-p)(1-(1+\gamma)^{-\alpha})$ is experienced by a $G_2$ student with potential $1+\gamma$; and the most significant negative displacement of $-p(1-(1+\gamma)^{-\alpha})$ is experienced by a $G_1$ student with potential $1$.
\end{proposition}
 We next extend Theorem~\ref{thm:optimal-range-mm} which establishes the optimal debiasing interval under maximum mistreatment to the additive model. Let $\mcS^c(\hat c)$ be the set of closed and connected subsets of $[1,\infty)$ such that for $S\in\mcS^c(\hat c)$, $\int_S\,dF_1\leq\hat c$. Similarly to the multiplicative case, we define $\mu_S$ to be the matching under the additive model when students in $S$ receive vouchers. Let then $\mcS^c_{mm}(\hat c)=\argmin_{S\in\mcS^c(\hat c)}mm(\mu_S)$ be the collection of sets in $\mcS^c(\hat c)$ that minimize maximum mistreatment. The following result is analogous to Theorem~\ref{thm:optimal-range-mm}.
\begin{theorem}\label{thm:optimal-range-mm-additive}
The set $\mcS^c_{mm}(\hat c)$ consists of a unique set $S=[Y_1^*,Y_2^*]$ where $Y_1^*$ and $Y_2^*$ are computed as follow: $Y_1^*=(\hat c+(Y_2^*)^{-\alpha})^{-1/\alpha}$ and  $Y_2^*=\min\set{U_1,U_2}$, where $U_1$ is the positive solution to $(1-p)(1-\hat c)=\mbb{F\geq U_1-\gamma}-p\mbb{F\geq U_1}$
and $U_2$ is the positive solution to
$(1-p)(1-\hat c)=(1-p)\mbb{F\geq U_2-\gamma}+p\hat c.
$
\end{theorem}

\proof
To simplify notation, we define the following. Recall $F\distributed\Pareto(1,\alpha)$ is the distribution of true potentials of all students, and let $\phi$ be a \emph{bias map} that gives the perceived potential of a disadvantaged student given their true potential. $\phi$ therefore encodes information both about the nature of bias, as well as any debiasing steps taken. In this case we consider an additive bias model where we debias an interval $S=[Y_1,Y_2]$, so that $\phi(x)=x$ if $x\in S$ and $\phi(x)=x-\gamma$ otherwise. Let $H\distributed \phi(F)$ be the distribution of perceived potentials of disadvantaged students, and then note that $(1-p)F+pH$ is the distribution of perceived potentials of all students in aggregate in the presence of bias and any debiasing. In the fair matching (where $\phi$ is the identity map), a disadvantaged student is matched to school with rank $\mu(x)=\mbb{F\geq x}$, and in the ranking with bias and debiasing of $S$, they are matched to $\mu_\phi(x)=\mbb{(1-p)F+pH\geq\phi(x)}$. We therefore have under $S$ the displacement
\normalsize{\begin{align*}
\disp_S(x)&=\mu_\phi(x)-\mu(x)= \piecewise{
p\sbr{\mbb{F\leq x}-\mbb{\phi(F)\leq x}}, & x\in S, \\
\mbb{F\leq x}-\sbr{p\mbb{\phi(F)\leq x-\gamma}+(1-p)\mbb{F\leq x-\gamma}}, & x\not\in S.}
\end{align*}
}
 
By carefully dividing into the cases where $Y_2-Y_1<\gamma$ and $Y_2-Y_1\geq\gamma$, one can show

{\small{\begin{align}
    \disp_S(x)
&=(1-p)\mbb{F\in[x-\gamma,x]}\indc_{\set{x\not\in S}}+\piecewise{
-p\mbb{F\in[Y_2,\max\set{x+\gamma,Y_2}]}, & x\in(Y_1,Y_2), \\
p\mbb{F\in[\max\set{x-\gamma,Y_1},Y_2]}, & x\in[Y_2,Y_2+\gamma). \\
}\label{eq:disp_S}
\end{align}

}}
 
We are now ready to complete the proof. Let $S=[Y_1,Y_2]$ be some interval to debias, then~\eqref{eq:disp_S} implies that $m_\theemptyset(x)=m_S(x)$ for $x\not\in(Y_1,Y_2+\gamma)$,
$m_S(x)=0$ for $x\in(Y_1,Y_2)$,
$m_S(x)\geq m_\theemptyset(x)$ for $x\in[Y_2,Y_2+\gamma)$, and
that $m_S(x)$ is decreasing for $x\geq\max\set{1+\gamma,Y_2}$.

Further, if $1+\gamma\not\in S$ then $\max_x m_S(x)\geq \max_x m_\theemptyset(x)$. To see this, suppose $Y_1>1+\gamma$, then the result follows because $\max_x m_\theemptyset(x)=m_\theemptyset(1+\gamma)$. Otherwise if $Y_2\in[1,1+\gamma]$, we must have $1+\gamma\in[Y_2,Y_2+\gamma]$, so $\max_x m_S(x)\geq m_S(1+\gamma)\geq m_\theemptyset(1+\gamma)=\max_x m_\theemptyset(x)$.

We therefore know that the optimal $S\in\mcS^c(\hat c)$ contains $1+\gamma$ and that it minimizes $\max\set{m_S(Y_1),m_S(Y_2)}$. Now write
\begin{align*}
m_S(Y_1)&=(1-p)\mbb{F\in[Y_1-\gamma,Y_1]}=(1-p)\mbb{F\leq Y_1}, \\
m_S(Y_2)&=(1-p)\mbb{F\in[Y_2-\gamma,Y_2]}+p\mbb{F\in[\max\set{Y_2-\gamma,Y_1},Y_2]}.
\end{align*}
Recall that $\mbb{F\in[Y_1,Y_2]}=\hat c$. We need $m_S(Y_1)=m_S(Y_2)$, so write
\begin{align*}
(1-p)\mbb{F\leq Y_1}&=(1-p)\mbb{F\in[Y_2-\gamma,Y_2]}+p\mbb{F\in[\max\set{Y_2-\gamma,Y_1},Y_2]} \\
\iff (1-p)(1-\hat c)&=(1-p)\mbb{F\geq Y_2-\gamma}+p\mbb{F\in[\max\set{Y_2-\gamma,Y_1},Y_2]} \\
&=(1-p)\mbb{F\geq Y_2-\gamma}+p(\mbb{F\geq\max\set{Y_2-\gamma,Y_1}}-\mbb{F\geq Y_2}) \\
&=(1-p)\mbb{F\geq Y_2-\gamma}+p(\min\set{\mbb{F\geq Y_2-\gamma},\mbb{F\geq Y_1}}-\mbb{F\geq Y_2}) \\
&=\min\set{\mbb{F\geq Y_2-\gamma}-p\mbb{F\geq Y_2},(1-p)\mbb{F\geq Y_2-\gamma}+p\hat c}.
\end{align*}
Since both terms inside the minimum are decreasing in $Y_2$, the $Y_2$ that solves this equation is given by $\min\set{U_1,U_2}$ where $U_1$ solves $(1-p)(1-\hat c)=\mbb{F\geq U_1-\gamma}-p\mbb{F\geq U_1}$, and $U_2$ solves $(1-p)(1-\hat c)=(1-p)\mbb{F\geq U_2-\gamma}+p\hat c$. These are exactly the expressions sought for in the theorem.
 \endproof
 
\begin{table}[ht]
\centering
{\small{\begin{tabular}{l|c|c|c}
\textbf{Model} & \textbf{Metric} & \textbf{Value of Metric} & \textbf{Beta fit $\beta$} \\
\hline
Additive & KL-divergence & 0.0321 & -36.9 \\
Multiplicative & KL-divergence & \textbf{0.0293} & 0.902 \\ \hline
Additive & Wasserstein distance & 9.52 & -49.0 \\
Multiplicative & Wasserstein distance & \textbf{5.36} & 0.882 \\
\end{tabular}}}
\caption{Comparison of Additive and Multiplicative Models using KL-divergence and Wasserstein Distance Metrics}\label{tab:comparison}
\end{table}
We remark that it is possible to similarly compute the optimal DDS under the PAUC measure. In the proof of Theorem~\ref{thm:optimal-range-mm-additive} we gave an expression for the mistreatment of a given student, and one can then integrate this against $F_1$ to compute PAUC. One easily argues that this expression has a global minimum, but the resultant expressions do not seem to be amenable to analytical computations with a clean final statement.

 We close by noting that we fitted both the multiplicative as well as the additive model to our real data, and as measured by both Wasserstein distance and KL-divergence, the multiplicative model finds a slightly better fit, leading to us focusing on this model (see Table \ref{tab:comparison}).

\subsection{Impact of model misspecification}\label{appx:misspecified}

We next study the robustness of our framework under model misspecification. That is, we investigate the impact of applying our simple model of constant multiplicative bias when the true process by which bias arises is more complicated. In particular, we study additive models and models where idiosyncratic randomness exists within the bias factor or potentials of disadvantaged students. Based on computational experiments we show that our main takeaway holds, and that applying our results would lead to little efficiency loss except in the case of very high randomness.

\paragraph{Setup:} We generate simulated data with parameters chosen to match those we fit to our real data. In the language of Section~\ref{sec:market}, all students $\theta\in\Theta$ have a true potential $Z(\theta)$ sampled i.i.d. from a $\Pareto(1,\alpha)$ distribution with $\alpha=9$, and we identify students with their potential, so we write $\theta$ for their true potential. A proportion $p=0.3$ of students is disadvantaged and they appear at a perceived potential $\hat Z(\theta)$, where $\hat Z$ is some random variable with $\hat Z(\theta)\leq\theta$. We study various models for $\hat Z(\cdot)$.

The central planner has a budget $\hat c$ and applies our model with $\hat Z(\theta)=\beta\theta$ to choose an interval\footnote{We assume the central planner cannot observe the true potentials and must therefore debias disadvantaged students chosen based on perceived potentials. For the case of uniform multiplicative bias, this makes no difference, but when the bias process has randomness, this is an important detail.} of disadvantaged students to debias as instructed by Theorem~\ref{thm:optimal-range-pauc}. We call this the \emph{theoretical} debias interval. Because of randomness, it does not make sense to measure the maximum mistreatment, so we concentrate solely on the positive area under the mistreatment curve (PAUC). We then compare the theoretical interval to the optimal empirical interval if the full bias process were known to the central planner a priori. We compute such an interval using grid search, which we call the \emph{empirical} debias interval.

\paragraph{Models:} For each model, we let $\eta$ be some fixed parameter. We report results for the cases where the true bias process takes each of the following forms:
\begin{enumerate}
\item $\hat Z(\theta)=\theta-\eta$, a deterministic additive model;
\item $\hat Z(\theta)=(\eta+\eps)\theta$ for $\eps\sim\mbox{Normal}(0,.02)$, minor Gaussian noise in bias factor;
\item $\hat Z(\theta)=(\eta+\eps)\theta$ for $\eps\sim\mbox{Uniform}(-.05,.05)$, minor uniform noise in bias factor;
\item $\hat Z(\theta)=\theta-\eta+\eps$ for $\eps\sim\mbox{Normal}(0,.1)$, additive, medium Gaussian noise in bias factor;
\item $\hat Z(\theta)=\eta\theta+\eps$ for $\eps\sim\mbox{Normal}(0,.1)$, medium Gaussian noise in potential;
\item $\hat Z(\theta)=(\eta+\eps)\theta$ for $\eps\sim\mbox{Uniform}(-.15,.15)$, medium uniform noise in bias factor;
\item $\hat Z(\theta)=\eta\theta+\eps$ for $\eps\sim\mbox{Uniform}(-.3,.3)$, large uniform noise in potential; and
\item $\hat Z(\theta)=(\eta+\eps)\theta$ for $\eps\sim\mbox{Uniform}(-.3,.3)$, large uniform noise in bias factor.
\end{enumerate}
The first model in particular is the additive model studied in EC~\ref{sec:additive-bias}, and the fourth model is exactly the statistical discrimination model of~\cite{phelps1972statistical}. All models except the first can be interpreted as adaptations of a statistical discrimination model, as they contain the key feature of increased variance of the disadvantaged group. We choose for each model the fixed parameter $\eta$ in such a way that if the central planner would apply our theoretical model of constant multiplicative bias, they would fit exactly $\beta=0.88$ as the Wasserstein metric minimizing parameter. This yields a set of experiments that can be readily compared. The best fit for the $\eta$ parameter for each model is shown in Table~\ref{table:statdisc-best-fit}.

\begin{table}[!h]
\centering
\setlength{\tabcolsep}{8pt}
{\small{\begin{tabular}{l c c c c c c c c}
Model & 1 & 2 & 3 & 4 & 5 & 6 & 7 & 8 \\
\midrule
Parameter & 0.130 & 0.878 & 0.876 & 0.151 & 0.860 & 0.857 & 0.843 & 0.848 \\
Error & 0.011 & 0.002 & 0.004 & 0.046 & 0.037 & 0.035 & 0.092 & 0.106 \\ 
\end{tabular}}}
\smallskip
\caption{Best-fits for the model parameter $\eta$. 
The error row shows the minimum achieved Wasserstein distance between the data under the assumed true bias distribution, and the simplified theoretical model (with Pareto parameters $\alpha=9$ and $\beta=0.88$).
}\label{table:statdisc-best-fit}
\end{table}

\paragraph{Simulations:} We perform experiments with two budgets, $\hat c=0.1$ and $\hat c=0.4$, and run simulations with 1 million students in order to adequately approximate the continuous market. 
Many of these models increase the variance of the distribution of disadvantaged student scores, so the theoretical interval would often debias a significantly smaller proportion of students than allowed by the budget, naturally leading to a lower PAUC reduction and making comparison difficult. Because of this phenomenon, we fix the upper endpoint of the theoretical interval, but choose the lower endpoint such that it fills up the budget.

Table~\ref{table:statdisc-c1} and Table~\ref{table:statdisc-c4} summarize the results for $\hat c=0.1$ and $\hat c=0.4$ respectively. For each model, we report the aggregate mistreatment as measured by the PAUC metric (introduced in Section~\ref{sec:voucher}) for three cases: no debiasing, under the empirically optimal debiasing, and under the theoretically optimal debiasing. We report the reduction in PAUC given by both debiasing methods as well as their difference. Based on our computations, we conclude that our results are highly robust to model misspecification.

\normalsize{\begin{table}[!h]
\centering
{\small{\begin{tabular}{l ccc cc c}
\toprule
Model & \multicolumn{3}{c}{PAUC} & \multicolumn{2}{c}{PAUC Reduction} & Difference \\
\cmidrule(lr){2-4} \cmidrule(lr){5-6} & No debiasing & Empirical & Theoretical & Empirical & Theoretical & \\
\midrule
1 & 0.2261 & 0.1870 & 0.1870 & 17.28\% & 17.28\% & 0.00\% \\
2 & 0.2398 & 0.2009 & 0.2009 & 16.21\% & 16.20\% & 0.00\% \\
3 & 0.2406 & 0.2029 & 0.2029 & 15.67\% & 15.65\% & 0.01\% \\
4 & 0.2276 & 0.1978 & 0.1980 & 13.12\% & 13.00\% & 0.12\% \\
5 & 0.2406 & 0.2105 & 0.2108 & 12.50\% & 12.39\% & 0.11\% \\
6 & 0.2395 & 0.2138 & 0.2149 & 10.72\% & 10.27\% & 0.45\% \\
7 & 0.2360 & 0.2048 & 0.2057 & 13.23\% & 12.82\% & 0.41\% \\
8 & 0.2337 & 0.2033 & 0.2042 & 13.03\% & 12.64\% & 0.38\% \\
\bottomrule
\end{tabular}}}
\smallskip
\caption{Comparison of PAUC reductions between theoretically and empirically optimal intervals for $\hat c=0.1$.}
\label{table:statdisc-c1}
\end{table}}

\paragraph{Low Budget:} In the small budget case ($\hat c=0.1$, see Table~\ref{table:statdisc-c1}), the difference in using the empirically optimal and the theoretically optimal debiasing intervals is minuscule in every case. In the case of largest difference in PAUC reduction, the empirical interval is able to reduce PAUC by 10.72\%, whereas the theoretically optimal interval would have reduced it by 10.27\%, a difference of only 0.45\%.

\normalsize{\begin{table}[!h]
\centering
{\small{\begin{tabular}{l ccc cc c}
\toprule
Model & \multicolumn{3}{c}{PAUC} & \multicolumn{2}{c}{PAUC Reduction} & Difference \\
\cmidrule(lr){2-4} \cmidrule(lr){5-6} & No debiasing & Empirical & Theoretical & Empirical & Theoretical & \\
\midrule
1 & 0.2261 & 0.0850 & 0.0850 & 62.39\% & 62.38\% & 0.00\% \\
2 & 0.2398 & 0.0994 & 0.0994 & 58.56\% & 58.55\% & 0.01\% \\
3 & 0.2406 & 0.1062 & 0.1064 & 55.86\% & 55.80\% & 0.06\% \\
4 & 0.2276 & 0.1146 & 0.1194 & 49.66\% & 47.54\% & 2.12\% \\
5 & 0.2406 & 0.1254 & 0.1311 & 47.87\% & 45.51\% & 2.36\% \\
6 & 0.2395 & 0.1284 & 0.1402 & 46.37\% & 41.47\% & 4.90\% \\
7 & 0.2360 & 0.1015 & 0.1202 & 56.99\% & 49.08\% & 7.92\% \\
8 & 0.2337 & 0.0991 & 0.1187 & 57.62\% & 49.22\% & 8.40\% \\
\bottomrule
\end{tabular}}}
\smallskip
\caption{Comparison of PAUC reductions between theoretically and empirically optimal intervals for $\hat c=0.4$.}
\label{table:statdisc-c4}
\end{table}}

\paragraph{High Budget:} For the large budget case ($\hat c=0.4$, see Table~\ref{table:statdisc-c4}), the difference is more pronounced, yet still limited. 
The difference is negligible for models 1--3. Medium Gaussian noise in either potential or bias factor causes a small difference in PAUC reduction, in the order of 2--2.5\%. In the case of medium uniform noise in bias factor, the difference starts being more noticeable at 4.9\%, and with the cases of large uniform noise in potential or bias, the difference goes to 7.92\% and 8.4\% respectively.

\begin{figure}[h]
    \centering
    \includegraphics[width=.55\textwidth]{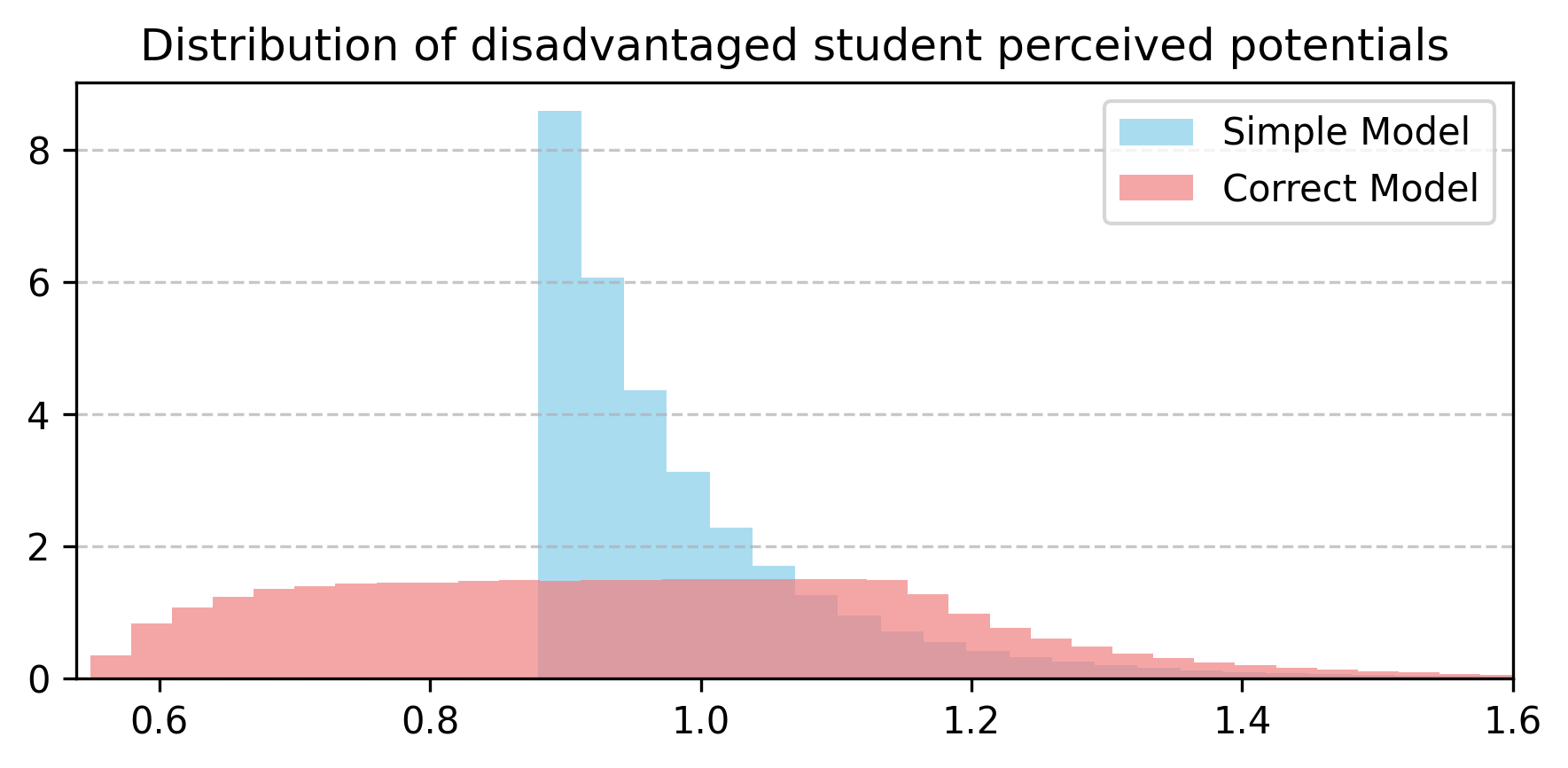}
    \caption{Difference in disadvantaged student perceived potentials for model 8, comparing the naive model and the true model.}\label{fig:pp-diff-appx}
\end{figure}

\paragraph{Most Misspecified Case:} In the worst case, the absolute difference in PAUC is 0.0196, meaning that on average, disadvantaged students in the correctly specified model achieve a ranking of approximately 2 percentage points higher than if the central planner were to apply the misspecified simpler model. {We note that to achieve such a difference, the model has to be highly misspecified: for example, our model would predict the perceived potentials of disadvantaged students to have mean 0.99 and standard deviation 0.125, whereas the correct model in this case has mean 0.95 and standard deviation 0.231. One can see this difference in Figure~\ref{fig:pp-diff-appx}, one would clearly observe that the model is not appropriate.}

\paragraph{Additive Case:} Our simple model has excellent performance under the case that the true model is any of models 1--3. In particular, in the case of the true model being additive, or the bias factor being slightly idiosyncratic, our model predictions apply virtually unchanged.

\paragraph{Incentive Compatibility:} 
We assume in the section that the participants in the process, such as the central planner, observe only perceived potentials and make decisions based on them. If the student knows only their perceived potential\footnote{One could also consider the question of whether disadvantaged students would benefit from being able to manipulate their true potentials before the randomized bias process. This is certainly an interesting mathematical question and could easily be computed, but it is nonsensical in the present setup to assume that students are able to manipulate their true potential before the bias process takes place. One could imagine that the bias occurs in the testing process, but then randomness should be present for both disadvantages as well as non-disadvantaged students.} and is able to manipulate it, then this case is no different to the treatment of incentive-compatibility in the main body. Voucher allocations that use a strict cutoff are always non-incentive-compatible.

\paragraph{Fit to real data:} To complement the synthetic comparison above, we also fit each model family directly to the real data from the case study in Section~\ref{sec:experiments}. Models 3, 6 and 8 above share the functional form $(\eta+\eps)\theta$ with $\eps\sim\mbox{Uniform}(-w,w)$,  differing only in the value of $w$. We therefore group the eight models into six families, and report them alongside the deterministic multiplicative model $\hat Z(\theta)=\eta\theta$ for reference. For each family we fit both the location parameter $\eta$ and, where applicable, the noise width. As in the main body, we perform this fitting by assuming that the distribution of the true potentials of non-disadvantaged and disadvantaged students ought to match. We then apply the bias function on the scores of the non-disadvantaged student scores, and find the parameters that most closely match this distribution with that of the unmodified disadvantaged student scores\footnote{This is therefore the reverse to how we fit in the main body, but is equivalent under the assumption. We have to perform it this way as the bias functions in this section include a noise component, which cannot be ``undone''.}. Table~\ref{table:statdisc-doe-fit} summarizes the best fits on both the full sample and on the truncated sample used in the main body\footnote{In the main body, we truncate the distributions to only students whose true potential under the $\beta=0.88$ case would exceed the cutoff of 475, in order to produce a balanced market.}. Across all multiplicative families, the best-fit location parameter is essentially indistinguishable from the value of $0.88$ assumed throughout the paper, with Wasserstein error within $0.001$ of the best overall fit. Moreover, in the families where the noise term is added to the output, the best-fit noise width collapses to zero, so the data is not well explained by homoskedastic noise. This lends further empirical support to our choice of the simple multiplicative bias model in the main analysis.

\begin{table}[!h]
\centering
\small
\setlength{\tabcolsep}{5pt}
\begin{tabular}{l c rrr rrr}
\toprule
Family & Model & \multicolumn{3}{c}{Full sample} & \multicolumn{3}{c}{Truncated sample} \\
\cmidrule(lr){3-5} \cmidrule(lr){6-8}
 & & $\eta$ & $w$ & Wass. & $\eta$ & $w$ & Wass. \\
\midrule
$\hat Z(\theta)=\eta\theta$                                                    & ---   & 0.882 & ---   & 0.010 & 0.895 & ---   & 0.013 \\
$\hat Z(\theta)=\theta-\eta$                                                   & 1     & 0.103 & ---   & 0.020 & 0.116 & ---   & 0.005 \\
$\hat Z(\theta)=(\eta+\eps)\theta,\ \eps\sim\mbox{Normal}(0,w)$            & 2     & 0.882 & 0.053 & 0.009 & 0.896 & 0.046 & 0.009 \\
$\hat Z(\theta)=(\eta+\eps)\theta,\ \eps\sim\mbox{Uniform}(-w,w)$               & 3,6,8 & 0.882 & 0.092 & 0.009 & 0.897 & 0.075 & 0.009 \\
$\hat Z(\theta)=\theta-\eta+\eps,\ \eps\sim\mbox{Normal}(0,w)$             & 4     & 0.104 & 0.001 & 0.020 & 0.116 & 0.009 & 0.005 \\
$\hat Z(\theta)=\eta\theta+\eps,\ \eps\sim\mbox{Normal}(0,w)$              & 5     & 0.882 & 0.001 & 0.010 & 0.895 & 0.046 & 0.010 \\
$\hat Z(\theta)=\eta\theta+\eps,\ \eps\sim\mbox{Uniform}(-w,w)$                 & 7     & 0.882 & 0.001 & 0.010 & 0.896 & 0.071 & 0.010 \\
\bottomrule
\end{tabular}
\smallskip
\caption{Best-fits of each family's location parameter ($\eta$) and noise width ($\sigma$ or $w$) against the real data, for the full sample and for the truncated sample used in the main body, together with the corresponding Wasserstein error.}\label{table:statdisc-doe-fit}
\end{table}

\end{document}